\newcommand{\bea}{\begin{eqnarray}}
\newcommand{\eea}{\end{eqnarray}}
\def\beaa{\begin{eqnarray*}}
\def\eeaa{\end{eqnarray*}}
\def\ba{\begin{array}}
\def\ea{\end{array}}
\def\be#1{\begin{equation} \label{#1}}
\def \eeq{\end{equation}}
\def\a{{\alpha}}
\def\b{{\beta}}
\def\be{{\beta}}
\def\ga{\gamma}
\def\de{\delta}
\def\ep{\epsilon}
\def\eps{\epsilon}
\def\la{\lambda}
\def\si{\sigma}
\def\Si{\Sigma}
\def\Om{\Omega}
\def\varep{\varepsilon}
\def\pr{{\partial}}
\def\al{\alpha}
\def\rh{{\rho}}
\def\II{{\mathcal I}}
\def\FF{{\mathcal F}}
\def\HH{{\mathcal H}}
\def\SS{{\mathcal S}}
\def\KK{{\mathcal K}}
\def\Lie{{\mathcal L}}
\def\RR{{\mathcal R}}
\def\QQ{{\mathcal Q}}
\def\Lie{{\mathcal L}}
\def\D{{\bf D}}
\def\M{{\bf M}}
\def\O{{\bf O}}
\def\R{{\bf R}}
\def\Z{{\bf Z}}
\def\K{{\bf K}}
\def\T{{\bf T}}
\def\E{{\bf E}}
\def\g{{\bf g}}
\def\f12{{\frac 1 2}}
\def\dual{{\,\,^*}}
\def\lb{{\,\underline{l}}}
\def\Lb{{\,\underline{L}}}
\def\ul{{\underline{l}}}
\def\f{\widetilde{f}}
\newtheorem{theorem}{Theorem}[section]
\newtheorem{lemma}[theorem]{Lemma}
\newtheorem{proposition}[theorem]{Proposition}
\newtheorem{corollary}[theorem]{Corollary}
\newtheorem{definition}[theorem]{Definition}
\numberwithin{equation}{section}
\begin{document}\title[Stationary black holes in vacuum]{Uniqueness of smooth stationary black holes in vacuum: small perturbations of the Kerr spaces}
\author{S. Alexakis}
\address{Massachusetts Institute of Technology}
\email{alexakis@math.mit.edu}
\author{A. D. Ionescu}
\address{University of Wisconsin -- Madison}
\email{ionescu@math.wisc.edu}
\author{S. Klainerman}
\address{Princeton University}
\email{seri@math.princeton.edu}
\thanks{The first author was partially supported by a Clay
research fellowship. The second author was partially supported by
a Packard Fellowship. The third author was partially supported by
NSF grant DMS-0070696.}
\begin{abstract}
Following the program started   in   \cite{IoKl},
 we attempt to remove the ana\-ly\-ti\-city assumption
 in the  the well known Hawking-Carter-Robinson
   uniqueness result for regular stationary  vacuum black holes.
   Unlike  \cite{IoKl},  which was based on a  tensorial  characterization of  the Kerr solutions, due to Mars \cite{Ma1},  we rely  here on Hawking's original  strategy, which is to reduce  the case of  general stationary space-times to that of
   stationary and axi-symmetric  spacetimes for which
    the Carter-Robinson uniqueness result holds. In this
   reduction Hawking had to appeal to analyticity. Using a variant of the  geometric Carleman estimates developed  in \cite{IoKl},  in this paper we show how to  bypass analyticity in the case when the stationary vacuum space-time
   is a small perturbation of a given Kerr solution.  Our  perturbation  assumption is expressed as a uniform smallness condition on the Mars-Simon tensor. The starting point of our proof is  the new   local rigidity theorem established in    \cite{AlIoKl}.
 \end{abstract}
 \maketitle
  \tableofcontents
\section{Introduction}\label{introduction}

It is widely expected\footnote{See reviews
by  B. Carter \cite{Ca-R} and   P. Chusciel \cite{Chrusc-Rev} for a history  and  review of the current status of the conjecture.}
 that the domains of outer communication
of regular, stationary, four dimensional,  vacuum black hole solutions
are isometrically diffeomorphic to those of
the Kerr black holes.  Due to gravitational radiation,
  general, asymptotically flat, dynamic,  solutions of the Einstein-vacuum
  equations  ought to settle down, asymptotically,  into a stationary regime.  Thus  the conjecture, if true, would characterize all possible asymptotic states   of the general  vacuum evolution.  A similar scenario is supposed to hold true in the presence of matter.

So far the  conjecture is known to be  true\footnote{By combining results of Hawking \cite{H-E},
 Carter \cite{Ca1}, and Robinson \cite{Rob}, see also the recent work of Chrusciel-Costa \cite{ChCo}.} if, besides  reasonable geometric and physical  conditions,  one assumes that the space-time metric in the domain of outer communication is \textit{real analytic}.  This last assumption is particularly restrictive,
since there is no reason whatsoever that  general stationary solutions
of the Einstein field equations are  analytic in the  ergoregion, i.e. the region where the stationary Killing vector-field becomes
space-like. Hawking's proof starts with the observation that the event horizon of a general stationary metric  is non-expanding and
the stationary Killing field must be   tangent to it.
 Specializing to the future event horizon  $\HH^+$, Hawking \cite{H-E} (see also \cite{IsMon}) proved the existence
      of a  non-vanishing vector-field $\K$ tangent to the null
generators of $\HH^+$ and  Killing to any order along $\HH^+$.
    Under the assumption of real analyticity
      of the  space-time metric  one can prove, by a Cauchy-Kowalewski type argument
(see \cite{H-E} and  the rigorous   argument in  \cite{Chrusc-1}),  that the Hawking Killing vector-field
$\K$ can be extended to a neighborhood of the entire domain of outer communication. Thus, it follows,  that the spacetime
    $(\M, \g)$ is not just stationary but also
    axi-symmetric. To derive uniqueness, we then
 appeal to the theorem of Carter and Robinson which shows that the exterior region of a
non-degenerate, stationary,
axi-symmetric, connected, connected vacuum black hole must be isometrically diffeomorphic to a Kerr exterior of mass $M$ and angular momentum $a<M$. The proof of this result originally obtained by Carter \cite{Ca1} and Robinson \cite{Rob}, has been strengthened and extended by many authors, notably Mazur \cite{mazur}, Bunting \cite{bunting}, Weinstein \cite{weinst}; the most recent and complete account, which fills in various gaps in the previous literature is the recent paper of Chrusciel and Costa \cite{ChCo}, see also \cite{chrusciel2}. A clear and complete exposition of the ideas that come into the proof can be found in Heusler's book, \cite{heusler}.
 We remark the Carter Robinson theorem does not require analyticity.

    In \cite{IoKl}
     a different strategy was followed  based  on the tensorial characterization of the Kerr spaces,  due to Mars \cite{Ma1} and Simon \cite{Sim},    and  a  new  analytic framework based on Carleman estimates.  Uniqueness of Kerr was proved for a general class of regular stationary vacuum space-times which verify a  complex scalar identity  along the bifurcation sphere of the horizon.  Unfortunately,  to eliminate this local  assumption,  one needs a global argument which  has alluded us so far.

      In this     paper we  return  to Hawking's original strategy and show
      how to extend his Killing vector-field, and thus  axial symmetry, from   the horizon to the entire domain of outer communication without appealing to  analyticity. As noted above, once
once has extended axial symmetry to the entire exterior region,
the Carter-Robinson theorem implies that $({\bf M},{\bf g})$ must be isometric to a Kerr solution.
 Our argument, which relies on the Carleman estimates developed
      in \cite{IoKl} and  \cite{IoKl2}, and their extensions in \cite{Al} and \cite{AlIoKl},
      require  a smallness assumption which is expressed, geometrically, by assuming that the Mars-Simon tensor  of our
stationary    metric is  uniformly bounded by a sufficiently small
constant.  Our main result is
      therefore perturbative; we show that any regular stationary vacuum solution  which
      is sufficiently close to a Kerr solution  $\KK(a, m)$, $0\le a<m$ must  in fact  coincide with it.

      The  first step of our  approach  has already been presented by us
      in \cite{AlIoKl}. There we show, under very general  assumptions,
      how to construct the Hawking Killing vector-field in a neighborhood of a  non-expanding, smooth, bifurcate horizon.
      The main idea, which also plays an essential role in this paper,
      is to turn the problem of extension into one of unique continuation,
      relying on Carleman estimates for systems of wave equations coupled to ordinary differential equations,  see  the introduction in \cite{AlIoKl} for an informal discussion.

       To further extend these vector-fields to the entire domain of outer communication we  make use of the foliation given by the level hypersurfaces of the  function $y$, the real part of  $(1-\si)^{-1}$
      where $\si$ is the complex Ernst potential associated to the
      stationary vector-field $\T$, see subsection \ref{ernst}. The Carleman
      estimates on which our extension argument  is based,  depend
      on a  crucial    \textit{$\T$-conditional pseudo-convexity} property  for the function $y$ (see Lemma \ref{Tpseudo})
      which was  previously shown to hold true (see \cite{IoKl} and
\cite{IoKl2})  if the Mars-Simon tensor $\SS$ vanishes
identically.  Here we show that the the same property holds true
if
      our space-time verifies our small perturbation assumption, i.e.
      $\SS$ sufficiently small. Thus, the main ideas of the paper
      are
      \begin{enumerate}
      \item
      A robust argument by which  the problem of extension of
      Killing vector-fields is turned into a uniqueness problem for an ill-posed system of covariant wave equations      coupled to ODE's.
      \item A local extension argument of Hawking's Killing vector-field in a neighborhood  of the bifurcate
      horizon. This step, which was accomplished in \cite{AlIoKl}, is unconditional, i.e. it
      does not  require the  smallness assumption for $\SS$.
      \item   An extension of the global argument of
      \cite{IoKl}, by which the Hawking vector-field constructed in \cite{AlIoKl} in a neighborhood of the bifurcate horizon can be globally extended. This step, which rests on the $\T$-conditional
       pseudo-convexity property,  requires our global
       smallness assumption for  the Mars-Simon tensor $\SS$.
        \end{enumerate}
      In the subsection below we give precise assumptions and the statement of our main result.

\subsection{Precise assumptions and the main theorem}\label{maintheorem}

We  assume that  $(\M,\g)$ is a smooth\footnote{$\M$ is a connected, oriented, time oriented, paracompact $C^\infty$ manifold without boundary.} vacuum Einstein spacetime of dimension $3+1$ and $\T\in\T(\M)$ is a smooth Killing vector-field on $\M$. We also assume that we are given an embedded partial Cauchy surface $\Sigma^0\subseteq\M$ and a diffeomorphism $\Phi_0:E_{1/2}\to\Sigma^0$, where $E_r=\{x\in\mathbb{R}^3:|x|>r\}$.

We group our main assumptions \footnote{Many of these assumptions
can be justified as consequences of more primitive assumptions,
see \cite{Be-Si2}, \cite{Be-Si}, \cite{Chrusc0}, \cite{CW2},
\cite{CGD}, \cite{FSW}, \cite{FrRaWa}, \cite{Ra-Wa}. For the sake
of simplicity, we do not attempt to work here under the most
general regularity assumptions. See the recent paper \cite{ChCo}
for a careful discussion.} in three categories. The first one
combines a standard  asymptotic flatness assumption with a global
assumption concerning the orbits of $\T$.  The asymptotic flatness
assumption, in particular,  defines the asymptotic region
$\M^{(end)}$ and the domain of outer communication (exterior
region) $ \E=\II^{-}(\M^{(end)})\cap\II^{+}(\M^{(end)}), $ where
$\II^{-}(\M^{(end)})$, $\II^{+}(\M^{(end)})$ denote the past and
respectively  future
 sets of  $\M^{(end)}$. Our  second assumption  concerns
 the smoothness of the two achronal  boundaries $\delta(\II^{-}(\M^{(end)}))$ in a small neighborhood of their intersection $S_0=\de(\II^{-}(\M^{(end)}))\cap \de(\II^{+}(\M^{(end)}))$.
 Our third assumption asserts that $\E$ is a small perturbation a fixed Kerr metric, in a suitable sense. We give an invariant form
 to this assumption by making use of the Mars-Simon tensor  $\SS$ whose vanishing characterizes   Kerr spacetimes,  see \cite{Ma1}.
\medskip

{\bf{GR.}} (Global regularity assumption)
 We assume that  the restriction of the
 diffeomorphism $\Phi_0$ to $E_{R_0}$,  for $R_0$ sufficiently
large,  extends to a diffeomorphism $\Phi_0:\mathbb{R}\times
E_{R_0}\to \M^{(end)}$, where $\M^{(end)}$ (asymptotic region) is
an open subset of $\M$. In local coordinates $\{x^0,x^i\}$ defined
by this diffeomorphism, we assume that $\T=\partial_0$ and, with
$r=\sqrt{(x^1)^2+(x^2)^2+(x^3)^2}$, that the components of the
spacetime metric verify\footnote{\label{foot:asympt}We denote by
 $O_k(r^a)$ any smooth function in  $\M^{(end)}$ which verifies $|\partial^i
f|=O(r^{a-i})$  for any $0\le i\le k$ with $|\partial^i
f|=\sum_{i_0+i_1+i_2+i_3=i}
|\partial_0^{i_0}\partial_1^{i_1}\partial_2^{i_2}\partial_3^{i_3}
f|$.},
\begin{equation}\label{As-Flat}
\g_{00}=-1+\frac{2M}{r}+O_6(r^{-2}),\quad \g_{ij}=\delta_{ij}+O_6(r^{-1}),\quad\g_{0i}=-\ep_{ijk}\frac{2S^jx^k}{r^3}+O_6(r^{-3}),
\end{equation}
for some $M>0$, $S^1,S^2,S^3\in\mathbb{R}$ (see \cite{Be-Si2}) such that,
\begin{equation}\label{As-Flat2}
J=[(S^1)^2+(S^2)^2+(S^3)^2]^{1/2}\in[0,M^2).
\end{equation}

Let
\begin{equation*}
\E=\II^{-}(\M^{(end)})\cap\II^{+}(\M^{(end)}).
\end{equation*}
We assume that $\E$ is globally hyperbolic and
\begin{equation}\label{intersect}
\Sigma^0\cap\II^{-}(\M^{(end)})=\Sigma^0\cap\II^+(\M^{(end)})=\Phi_0(E_1).
\end{equation}
We assume that $\T$ does not vanish at any point of $\E$ and that
every orbit of $\T$ in $\E$ is complete and intersects the
hypersurface $\Sigma^0$.
\medskip

{\bf{SBS.}} (Smooth bifurcation sphere assumption) It follows from \eqref{intersect} that
\begin{equation*}
\delta(\II^{-}(\M^{(end)}))\cap\Sigma^0=\delta(\II^+(\M^{(end)}))\cap\Sigma^0=S_0,
\end{equation*}
where $S_0=\Phi_0(\{x\in\mathbb{R}^3:|x|=1\})$ is an imbedded $2$-sphere (called the bifurcation sphere). We assume that there is a neighborhood $\mathbf{O}$ of $S_0$ in $\mathbf{M}$ such that the sets
\begin{equation*}
\HH^+=\mathbf{O}\cap \delta(\II^{-}(\M^{(end)}))\quad \text{ and }\quad \HH^-=\mathbf{O}\cap \delta(\II^{+}(\M^{(end)}))
\end{equation*}
are smooth imbedded hypersurfaces. We assume that these hypersurfaces are null, non-expanding\footnote{A null hypersurface is said to be non-expanding if the trace of its null second fundamental form vanishes identically.}, and intersect transversally in $S_0$. Finally, we assume that the vector-field $\T$ is tangent to both hypersurfaces $\HH^+$ and $\HH^-$, and does not vanish identically on $S_0$.
\medskip

{\bf{PK.}} (Perturbation of Kerr assumption). Let $\si$ denote the Ernst potential and $\SS$ the Mars-Simon tensor, defined in an open neighborhood of $\Sigma^0\cap\overline{\mathbf{E}}$ in $\mathbf{M}$ (see section \ref{preliminaries} for precise definitions). We assume that
\begin{equation}\label{Main-Cond0}
|(1-\sigma)\SS(\T,T_\al,T_\be,T_\ga)|\leq\overline{\varep}\,\,\text{ on }\Sigma^0\cap\overline{\mathbf{E}},
\end{equation}
for some sufficiently small constant $\overline{\varep}$
(depending only on the constant $\overline{A}$ defined in section
\ref{preliminaries}), where $T_0$ is the future-directed unit
vector orthogonal to $\Sigma^0$ and $T_0,T_1,T_2,T_3$ is an
orthonormal basis along $\Sigma^0$.
\medskip

\noindent {\bf Main Theorem}.\quad \textit{Under the assumptions {\bf{GR}}, {\bf{SBS}}, and {\bf{PK}} the domain of outer communication $\E$ of $\M$ is isometric to the domain of outer communication of the Kerr space-time with mass $M$ and angular momentum $J$.}
\medskip

In other words, a stationary vacuum black hole, which satisfies
suitable regularity assumptions and is sufficiently ``close'' to a
Kerr solution, has to be isometric to that Kerr solution. This can
be interpreted as a strong extension  of Carter's original
theorem, see \cite{Ca1}, \cite{CaL}, on stationary and
axi-symmetric perturbations of the Kerr spaces, in which we remove
the axi-symmetry assumption  and give a geometric, coordinate
independent, perturbation condition. We provide below a more
detailed outline of the proof of the Main Theorem.

In section \ref{preliminaries} we define a system of local coordinates along our reference space-like hypersurface $\Si^0$, and define our main constant $\overline{A}$. The small constant $\overline{\varep}$ in \eqref{Main-Cond0} is to be taken sufficiently small, depending only on $\overline{A}$. We review also the construction of two optical functions $u$ and $\underline{u}$ in a neighborhood of the bifurcation sphere $S_0$, adapted to the null hypersurfaces $\HH^+$ and $\HH^-$, and recall the definition of the complex Ernst potential $\si$ and the Mars-Simon tensor $\SS$. Finally, we record some asymptotic formulas, which are proved in the appendix.

In section \ref{almostKerr} we develop the main consequences of our smallness assumption \eqref{Main-Cond0}. All of our results in this section are summarized in Proposition \ref{clo110}; we prove a lower bound on $|1-\sigma|$ along $\Sigma_1$, as well as several approximate identities in a small neighborhood of $\Sigma_1$ which are used in the rest of the paper.

In section \ref{propertiesy} we derive several properties of the
function $y$ needed in the continuation argument in section
\ref{globalK}. We prove first that $y$ is almost constant on
$S_0$, as in Lemma \ref{yonS}, and increases in a controlled way
in a neighborhood of $S_0$ in $\Sigma_1$. Then we prove that the
level sets of the function $y$ away from $S_0$ are regular, in a
suitable sense. Finally, we prove that the function $y$ satisfies
the $\T$-conditional pseudo-convexity property, away from $S_0$,
see Lemma \ref{Tpseudo}.

In section \ref{globalK}, which is the heart of the paper, we
construct the Hawking Killing vector-field $\K$ in the domain of
outer communication $\E$. The starting point is the existence of
$\K$ in a neighborhood of $S_0$, which was proved in
\cite{AlIoKl}. We extend $\K$ to larger and larger regions, as
measured by the function $y$, as the solution of an ordinary
differential equation, see Lemma \ref{overY}. We then prove that
the resulting vector-field $\K$ is Killing (and satisfies several
other bootstrap conditions) as a consequence of a uniqueness
property of stationary vacuum solutions, see Proposition
\ref{mainnew2}. The proof of this last proposition relies on
Carleman estimates and the properties of the function $y$ proved
in section \ref{propertiesy}.

In section \ref{rotkilling} we construct a global, rotational Killing vector-field $\Z$ which commutes with $\T$, as a linear combination of the vector-fields $\T$ and $\K$. We also
give a simple proof, specialized to our setting, that the span of the two Killing fields $\T,\Z$
 is time-like in $\E$.

\section{Preliminaries}\label{preliminaries}

\subsection{A system of coordinates along $\Sigma^0$} Let $\pr_1,\pr_2,\pr_3$ denote the vectors tangent to $\Sigma^0$, induced by the diffeomorphism $\Phi_0$. Let $\Sigma_r=\Phi_0(E_r)$, where, as before, $E_r=\{x\in\mathbb{R}^3:|x|>r\}$. In particular, for  our original spacelike hypersurface, we have $\Si^0=\Si_{1/2}$.
  Using \eqref{As-Flat} and the assumption that $\Sigma^0$ is spacelike, it follows that there are large constants $A_1$ and $R_1\ge R_0$, such that   $R_1\geq A_1^4$,  with the following properties: on $\Sigma_{3/4}$,
  for any $X=(X^1,X^2,X^3)$,
\begin{equation}\label{prel1}
A_1^{-1}|X|^2\leq\sum_{\al,\be=1}^3X^\al X^\be\g_{\al\be}\leq
A_1|X|^2\quad\text{ and
}\quad\sum_{\al=1}^{3}|\g(\pr_\al,\T)|+|\g(T_0,\T)|\leq A_1.
\end{equation}
In $\Phi_0(\mathbb{R}\times E_{R_1})$, which we continue to denote
by $\M^{(end)}$, $\T=\pr_0$ and  (see notation in footnote
\ref{foot:asympt}),
\begin{equation}\label{prel2}
\begin{split}
\sum_{m=0}^6 r^{m+1}\sum_{j,k=1}^3
|\partial^m(\g_{jk}-\delta_{jk})|&+ \sum_{m=0}^6 r^{m+2}
|\partial^m(\g_{00}+1-2M/r  )|\\
&+\sum_{m=0}^6 r^{m+3} \sum_{i=1}^3
|\partial^m(\g_{0i}+2\ep_{ijk}S^jx^kr^{-3})|\le A_1.
\end{split}
\end{equation}
We construct  a system of coordinates in a small neighborhood $\widetilde{\M}$ of $\Sigma^0\cap\overline{\E}$, which extends both  the
coordinate system of $\M^{(end)}$ in \eqref{prel2} and that of $\Si^0$. We do that with the help of  a smooth vector-field $T'$ which interpolates between $\T$ and $T_0$.
More precisely we  construct  $T'$ in a neighborhood of $\Sigma_{3/4}$ such that $T'=\T$ in $\Phi_0(\mathbb{R}\times E_{2R_1})$ and $T'=\eta(r/R_1)T_0+(1-\eta(r/R_1))\T$ on $\Sigma_{3/4}$, where $\eta:\mathbb{R}\to[0,1]$ is a smooth function supported in $(-\infty,2]$ and equal  to $1$ in $(-\infty,1]$.
  Using now  the flow induced by  $T'$ we extend the original  diffeomorphism $\Phi_0:E_{1/2}\to \Sigma^0$,  to cover a full neighborhood of $\Sigma_1$. Thus  there exists $\varep_0>0$ sufficiently small and a diffeomorphism $\Phi_1:(-\varep_0,\varep_0)\times E_{1-\varep_0}\to\widetilde{\M}$, which agrees with $\Phi_0$ on $\{0\}\times E_{1-\varep_0}\cup (-\varep_0,\varep_0)\times E_{2R_1}$ and  such that $\partial_0=\pr_{x^0}=T'$. By setting $\varep_0$ small enough, we may assume that $\O_{\varep_0}:=\Phi_1((-\varep_0,\varep_0)\times\{x\in\mathbb{R}^3:|x|\in (1-\varep_0,1+\varep_0)\})\subseteq \O$, where $\O$ is the open set defined in the assumption ${\bf{SBS}}$. By construction, using also \eqref{prel2} and letting $\varep_0$ sufficiently small depending on $R_1$,
\begin{equation}\label{prel3}
\sum_{j=1}^3|\g_{0j}|+|\g_{00}+1|\leq A_1/(R_1+r)\quad \text{ in }\widetilde{\M}.
\end{equation}
With $\g_{\al\be}=\g(\pr_\al,\pr_\be)$ and $\T=\T^\al\pr_\al$, let
\begin{equation}\label{prel4}
A_2=\sup_{p\in\widetilde{\M}}\sum_{m=0}^6\Big[\sum_{\al,\be=0}^3|\pr^m\g_{\al\be}(p)|+\sum_{\al=0}^3|\pr^m\T^\al(p)|\Big].
\end{equation}
Finally, we fix
\begin{equation}\label{mainconst}
\overline{A}=\max(R_1,A_2,\varep_0^{-1},(M^2-J)^{-1}).
\end{equation}
The constant $\overline{A}$ is our  main effective constant. The constant $\overline{\varep}$ in \eqref{Main-Cond0} will be fixed sufficiently small, depending only on $\overline{A}$. To summarize, we defined a neighborhood $\widetilde{\M}$ of $\Sigma^0\cap\overline{\E}$ and a diffeomorphism $\Phi_1:(-\varep_0,\varep_0)\times E_{1-\varep_0}\to\widetilde{\M}$, $\varep_0>0$, such that the bounds \eqref{prel1}, \eqref{prel2}, \eqref{prel3}, \eqref{prel4} hold (in coordinates induced by the diffeomorphism $\Phi_1$).

\subsection{Optical functions in a neighborhood of $S_0$} \label{optical}
We define two optical functions $u,\underline{u}$ in a neighborhood of $S_0$. We fix a smooth future-directed  null pair $(L, \Lb)$ along $S_0$, satisfying
\begin{equation}\label{normalization}
\g(L,L)=\g(\Lb,\Lb)=0,\,\,\,\,\,\g(L, \Lb)=-1,
\end{equation}
such that $L$ is tangent to $\HH^+$ and $\Lb$ is tangent to $\HH^-$. In a small neighborhood of $S_0$, we extend $L$ (resp. $\Lb$)  along the null geodesic generators of $\HH^+$
 (resp. $\HH^-$) by parallel transport, i.e. $\D_LL=0$ (resp.  $\D_\Lb\Lb=0$). We define the function $\underline{u}$ (resp. $u$) along $\HH^+$ (resp. $\HH^-$) by setting $u=\underline{u}=0$ on $S_0$ and solving $L(\underline{u})=1$ (resp.  $\Lb(u)=1$).  Let $S_{\underline{u}}$ (resp.  $\underline{S}_{u}$)  be the level surfaces of $\underline{u}$ (resp. $u$)   along $\HH^+$ (resp. $\HH^-$). We define $\Lb$ at every point of $\HH^+$ (resp. $L$ at every point of $\HH^-$) as the unique, future directed null vector-field orthogonal to the surface $S_{\underline{u}}$ (resp. $\underline{S}_u$) passing through that point and such that $\g(L,\Lb)=-1$.
 We now define the null hypersurface $\HH^-_{\underline{u}}$ to be the congruence
 of   null geodesics  initiating on $S_{\underline{u}}\subset\HH^+$ in the direction of $\Lb$.
 Similarly we define $\HH^+_u$ to be the congruence
 of   null geodesics  initiating on $\underline{S}_{u}\subset\HH^-$ in the direction of  $L$.
 Both congruences are well defined in a sufficiently small neighborhood of $S_0$ in $\O$.  The null hypersurfaces $\HH^-_{\underline{u}}$ (resp.  $\HH^+_{u}$) are the level sets
 of  a function $\underline{u}$  (resp $u$)  vanishing on $\HH^-$  (resp. $\HH^+$). By construction
  \begin{equation}\label{haw8}
  L=-\g^{\mu\nu}\pr_\mu u\pr_\nu,\qquad  \Lb=-\g^{\mu\nu}\pr_\mu\underline{u}\pr_\nu.
  \end{equation}
In particular, the functions $u,\underline{u}$ are both null optical functions,
 i.e.
 \begin{equation}\label{haw9}
 \g^{\mu\nu}\pr_\mu u \pr_\nu u=\g(L,L)=0\quad\text{ and }\quad \g^{\mu\nu}\pr_\mu \underline{u}\pr_\nu \underline{u}=\g(\Lb,\Lb)=0.
 \end{equation}

To summarize, there is $c_0=c_0(\overline{A})\in(0,\varep_0]$ sufficiently small and smooth optical functions $u,\underline{u}:\O_{c_0}\to\mathbb{R}$, where $\O_{c_0}=\Phi_1((-c_0,c_0)\times\{x\in\mathbb{R}^3:|x|\in (1-c_0,1+c_0)\})$. In local coordinates induced by the diffeomorphism $\Phi_1$ we have\footnote{Recall the notation $|\partial^jf|=\sum_{j_0+j_1+j_2+j_3=j}|\partial_0^{j_0}\partial_1^{j_1}\partial_2^{j_2}\partial_3^{j_3}f|$, where $\partial_0,\partial_1,\partial_2,\partial_3$ are the derivatives induced by the diffeomorphism $\Phi_1$. This notation will be used throughout the paper.}
\begin{equation}\label{haw10}
\sup_{x\in\O_{c_0}}\sum_{j=0}^4(|\partial^ju(x)|+|\partial^j\underline{u}(x)|)\leq \widetilde{C}=\widetilde{C}(\overline{A}).
\end{equation}
In addition,
\begin{equation}\label{haw11}
\HH^+\cap\O_{c_0}=\{p\in\O_{c_0}:u(p)=0\},\qquad\HH^-\cap\O_{c_0}=\{p\in\O_{c_0}:\underline{u}(p)=0\}.
\end{equation}
In $\O_{c_0}$ we define
 \begin{equation*}
 \Om=\g^{\mu\nu}\pr_\mu u \pr_\nu \underline{u}=\g(L,\Lb).
 \end{equation*}
By construction $\Omega=-1$ on $(\HH^+\cup\HH^-)\cap\O_{c_0}$ (we remark, however, that $\Omega$ is not necessarily equal to $-1$ in $\O_{c_0}$). By taking $c_0$ small enough, we may assume that
\begin{equation}\label{haw12}
\Omega\in[-3/2,-1/2]\quad\text{ in }\O_{c_0}.
\end{equation}
Finally, by construction, we may assume that the functions $|u|,|\underline{u}|$ are proportional to $|1-r|$ on the spacelike hypersurface $\Sigma^0\cap\O_{c_0}$, i.e.
\begin{equation}\label{haw12.5}
|u/(1-r)|, |\underline{u}/(1-r)|\in[\widetilde{C}^{-1},\widetilde{C}]\quad\text{ on }\Sigma^0\cap\O_{c_0},
\end{equation}
where, as in \eqref{haw10}, $\widetilde{C}$ is a constant that depends only on $\overline{A}$.

\subsection{Definitions and asymptotic formulas} \label{ernst} We recall now the definitions of the Ernst potential $\sigma$ and the Mars--Simon tensor $\SS$ (see \cite[Section 4]{IoKl} for a longer discussion and proofs of all of the identities). In $\M$ we define the 2-form,
\beaa
F_{\al\be}=\D_\al\T_\be
\eeaa
and the complex valued 2-form,
\begin{equation}\label{s1}
\FF_{\al\be}=F_{\al\be}+i {\dual F}_{\al\be}=F_{\al\be}+(i/2){\in_{\al\be}}^{\mu\nu}F_{\mu\nu}.
\end{equation}
Let $\FF^2=\FF_{\al\be}\FF^{\al\be}$. We define also the Ernst $1$-form
\begin{equation}\label{s12}
\si_\mu=2\T^\a\FF_{\a\mu}=\D_\mu(-\T^\al \T_\al)- i\in_{\mu\b\ga\de}\T^\b \D^\ga\T^\de.
\end{equation}
It is easy to check that, in $\M$
\begin{equation}\label{Ernst1}
\begin{cases}
&\D_\mu\si_\nu-\D_\nu\si_\mu=0;\\
&\D^\mu\si_\mu=-\FF^2;\\
&\si_\mu\si^\mu=\g(\T,\T) \FF^2.
\end{cases}
\end{equation}
Since $\D_\mu\si_\nu=\D_\nu\si_\mu$ and the sets
$\widetilde{\mathbf{M}}=\Phi_1((-\varep_0,\varep_0)\times
E_{1-\varep_0})$ and $\E$ are simply connected, we can define the
Ernst potential $\si:\widetilde{\mathbf{M}}\cup\E\to\mathbb{C}$
such that $ \si_\mu =\D_\mu\si$, $\Re\si=-\T^\al\T_\al$, and
$\si\to 1$ at infinity along $\Sigma^0$.

We define the complex-valued self-dual Weyl tensor
\begin{equation}\label{s2}
\RR_{\al\be\mu\nu}=R_{\al\be\mu\nu}+(i/2){\in_{\mu\nu}}^{\rh\si}R_{\al\be\rh\si}=R_{\al\be\mu\nu}+i{\dual R}_{\al\be\mu\nu}.
\end{equation}
We define the tensor $\II\in\mathbf{T}_4^0(\mathbf{M})$,
\begin{equation}\label{s4}
\II_{\al\be\mu\nu}=(\g_{\al\mu}\g_{\be\nu}-\g_{\al\nu}\g_{\be\mu}+i\in_{\al\be\mu\nu})/4.
\end{equation}
Let $\widetilde{\M}'=\{p\in\widetilde{\M}:\sigma(p)\neq 1\}$\footnote{Using the assumption {\bf{PK}}, we will prove in section \ref{almostKerr} that $\Sigma_1\subseteq\widetilde{\M}'$.}. We define the tensor-field  $\QQ\in\mathbf{T}_4^0(\widetilde{\mathbf{M}}')$,
\begin{equation}\label{s5}
\QQ_{\al\be\mu\nu}=(1-\sigma)^{-1}\big(\FF_{\al\be}\FF_{\mu\nu}-\frac{1}{3}\FF^2\II_{\al\be\mu\nu}\big).
\end{equation}
It is easy to see that the tensor-field $\QQ$ is a self-dual Weyl field, i.e.
\begin{equation*}
\begin{cases}
&\QQ_{\al\be\mu\nu}=-\QQ_{\be\al\mu\nu}=-\QQ_{\al\be\nu\mu}=\QQ_{\mu\nu\al\be};\\
&\QQ_{\a\b\mu\nu}+\QQ_{\a\mu\nu\be}+\QQ_{\a\nu\be\mu}=0;\\
&\g^{\be\nu}\QQ_{\al\be\mu\nu}=0,
\end{cases}
\end{equation*}
and
\begin{equation*}
^*Q_{\a\b\mu\nu}=\frac{1}{2}\in_{\mu\nu\rh\si}{\QQ_{\al\be}}^{\rh\si}=(-i)\QQ_{\al\be\mu\nu}.
\end{equation*}

We define now the self-dual Weyl field $\SS$, called the Mars--Simon tensor,
\begin{equation}\label{s6}
\SS=\RR+6\QQ.
\end{equation}
We observe that $(1-\sigma)\S$ is a smooth tensor on $\widetilde{\M}$. Using the Ricci identity
\begin{equation}\label{Ricci}
\D_\mu\FF_{\al\be}=\T^\nu\RR_{\nu\mu\al\be},
\end{equation}
and proceeding as in \cite[formula 4.33]{IoKl}, we
deduce that, in  $\widetilde{\M}'$,
\begin{equation}\label{mainiden1}
\D_\rho[\mathcal{F}^2(1-\sigma)^{-4}]=2(1-\sigma)^{-4}\T^\nu \SS_{\nu\rho\ga\de}\FF^{\ga\de}.
\end{equation}
This identity will play a key role in the analysis in section \ref{almostKerr}. Finally, we define the functions $y,z:\widetilde{\M}'\to\mathbb{R}$
\begin{equation*}
y+iz=(1-\sigma)^{-1}.
\end{equation*}

Simple asymptotic computations using the formula \eqref{prel2}, see Appendix \ref{asymptot}, show that, for $R$ sufficiently large depending only on $\overline{A}$,
\begin{equation}\label{asymp0}
\FF_{\al\be}=O(r^{-2}),\quad\SS_{\al\be\ga\de}=O(r^{-3}),\,\,\,\al,\be,\ga,\de=0,\ldots,3.
\end{equation}
More precisely,
\begin{equation}\label{asymp1}
1-\sigma=2Mr^{-1}+O(r^{-2}),\qquad\FF^2=-4M^2r^{-4}+O(r^{-5})
\end{equation}
in $\Phi_1((-\varep_0,\varep_0)\times E_R)$. In particular, $\Phi_1((-\varep_0,\varep_0)\times E_R)\subseteq \widetilde{\M}'$ and
\begin{equation}\label{asymp2}
-4M^2\FF^2(1-\sigma)^{-4}=1+O(r^{-1})\qquad\text{ in }\quad\Phi_1((-\varep_0,\varep_0)\times E_R).
\end{equation}
In addition
\begin{equation}\label{asymp3}
y=\frac{r}{2M}+O(1),\qquad z=\frac{S^1x^1+S^2x^2+S^3 x^3}{2M^2r}+O(r^{-1})
\end{equation}
in $\Phi_1((-\varep_0,\varep_0)\times E_R)$. Finally,
\begin{equation}\label{asymp4}
z^2+4M^2(y^2+z^2)\D_\mu z\D^\mu z=\frac{J^2}{4M^4}+O(r^{-1})\qquad\text{ in }\quad\Phi_1((-\varep_0,\varep_0)\times E_R).
\end{equation}
All these asymptotic identities are proved in Appendix \ref{asymptot} and will be used in section \ref{almostKerr}.

\section{Analysis on the hypersurface $\Sigma_1$}\label{almostKerr}

In this section we use assumption {\bf{PK}} to prove several
approximate identities on the hypersurface
$\Sigma_1=\Sigma^0\cap\E=\Phi_0(E_1)$. The general idea is to
prove approximate identities such as \eqref{asymp2} and
\eqref{asymp4} first in the asymptotic region, using the
asymptotic flatness assumption \eqref{prel2}, and then extend them
to the entire hypersurface $\Sigma_1$ using the fact that the
Mars--Simon tensor is assumed to be small. We will prove also that
$1-\sigma$ does not vanish in $\Sigma_1$. All of our results in
this section are summarized in Proposition \ref{clo110}.

We will use the notation in section \ref{preliminaries}. We fix first a large constant $\overline{R}$ which depends only on our main constant $\overline{A}$ (see \eqref{mainconst}). We fix $r_0$ the smallest number in $[1,\overline{R}-1]$ with the property that
\begin{equation}\label{clo1}
|1-\sigma|\geq \overline{R}^{-2}\quad\text{ on }\Sigma_{r_0}\setminus\Sigma_{\overline{R}},
\end{equation}
where, as before, $\Sigma_r=\Phi_0(E_r)$, $E_r=\{x\in\R^3:|x|>r\}$. Such an $r_0$ exists if $\overline{R}$ is sufficiently large, in view of \eqref{asymp1} and the continuity of $1-\sigma$. We will prove, among other things, that $r_0=1$. In this section we let $\widetilde{C}$ denote various constants in $[1,\infty)$ that may depend only on $\overline{R}$ (thus on $\overline{A}$ once $\overline{R}$ is fixed sufficiently large depending on $\overline{A}$). The value of $\overline{\varep}$ in \eqref{Main-Cond0} is assumed to be sufficiently small depending on the constants $\widetilde{C}$. To summarize, $\log(\overline{A})\ll \log(\overline{R})\ll \log(\widetilde{C})\ll\log (\overline{\varep}^{-1})$.

We will work in the region $\Phi_1[(-\overline{\varep},\overline{\varep})\times E_{r_0}]$. Since $|\partial_0(1-\sigma)|\leq \widetilde{C}$, it follows from \eqref{clo1} and \eqref{asymp1} that
\begin{equation}\label{clo2}
|1-\sigma|^{-1}\leq \widetilde{C}r\quad\text{ in }\Phi_1[(-\overline{\varep},\overline{\varep})\times E_{r_0}].
\end{equation}
Using the assumption \eqref{Main-Cond0} and the asymptotic identities \eqref{asymp0}, we have
\begin{equation}\label{clo0}
|(1-\sigma)\T^\nu\SS_{\nu\rho\ga\de}|\leq \widetilde{C}\min(\overline{\varep},r^{-4})\quad\text{ in }\Phi_1[(-\overline{\varep},\overline{\varep})\times E_{r_0}],
\end{equation}
in the coordinate frame $\pr_0,\pr_1,\pr_2,\pr_3$. Using \eqref{mainiden1}, \eqref{asymp0} and the last two inequalities, it follows that
\begin{equation}\label{clo3}
|\partial_\rho(\FF^2(1-\sigma)^{-4})|\leq\widetilde{C}r^3\min(\overline{\varep},r^{-4}) \quad\text{ in }\Phi_1[(-\overline{\varep},\overline{\varep})\times E_{r_0}].
\end{equation}

We prove now that
\begin{equation}\label{clo3.5}
|1+4M^2\FF^2(1-\sigma)^{-4}|\leq\widetilde{C}\min(r^{-1},\overline{\varep}^{1/5})\quad\text{ in }\Phi_1[(-\overline{\varep},\overline{\varep})\times E_{r_0}].
\end{equation}
Indeed, let $H=1+4M^2\FF^2(1-\sigma)^{-4}$. Using \eqref{asymp2} and \eqref{clo3},
\begin{equation}\label{clo3.6}
|H|\leq\widetilde{C}r^{-1}\quad\text{ and }\quad\sum_{\rho=0}^4|\partial_\rho H|\leq \widetilde{C}r^3\min(\overline{\varep},r^{-4}) \quad\text{ in }\Phi_1[(-\overline{\varep},\overline{\varep})\times E_{r_0}].
\end{equation}
The bound \eqref{clo3.5} follows from the first inequality in \eqref{clo3.6} at points $p$ for which $r(p)\geq\varep^{-1/5}$. To prove \eqref{clo3.5} at points $p$ with $r(p)\leq\varep^{-1/5}$ we fix a point $p'\in\Phi_1[(-\overline{\varep},\overline{\varep})\times E_{r_0}]$ with $r(p')=\varep^{-1/5}$. We integrate along a line joining the points $p$ and $p'$ and use the second inequality in \eqref{clo3.6}. The result is $|H(p)-H(p')|\leq \widetilde{C}\varep^{-4/5}\overline{\varep}$, which gives \eqref{clo3.5} since $|H(p')|\leq\widetilde{C}r(p')^{-1}=\widetilde{C}\varep^{-1/5}$.

It follows from \eqref{clo3} and \eqref{clo3.5} that there is a smooth function $G_1:\Phi_1[(-\overline{\varep},\overline{\varep})\times E_{r_0}]\to\mathbb{C}$ with the properties
\begin{equation}\label{big1}
\begin{split}
-4M^2\FF^2=(1-\sigma)^4(1+G_1)^2,\qquad|G_1|+\sum_{\rho=0}^3|\pr_\rho G_1|\leq \widetilde{C}\min(r^{-1},\overline{\varep}^{1/5})
\end{split}
\end{equation}
on $\Phi_1[(-\overline{\varep},\overline{\varep})\times E_{r_0}]$. In particular, using also \eqref{clo2},
\begin{equation}\label{clo7}
|\FF^2|\geq (\widetilde{C}r)^{-4}\quad\text{ in }\Phi_1[(-\overline{\varep},\overline{\varep})\times E_{r_0}].
\end{equation}

We define the smooth function $P=y+iz:\Phi_1[(-\overline{\varep},\overline{\varep})\times E_{r_0}]\to\mathbb{C}$,
\begin{equation}\label{se90}
P=y+iz=(1-\sigma)^{-1}.
\end{equation}
We construct now a special   null pair, similar to the principal
null pair in  \cite[Section 4]{Ma1}.

\begin{lemma}\label{nullpair}
 There  exists a  future-directed  null pair $l,\ul$,  $\g(l,\lb)=-1$,  such that
\begin{equation}\label{big2}
\FF_{\al\be}l^\be=(1+G_1)(4MP^2)^{-1}l_\a,\qquad \FF_{\al\be}\ul^\b=-(1+G_1)(4MP^2)^{-1}\ul_\a,
\end{equation}
in $\Phi_1[(-\overline{\varep},\overline{\varep})\times E_{r_0}]$.
\end{lemma}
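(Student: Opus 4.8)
The plan is to diagonalize the self-dual 2-form $\FF$ fiberwise. The key structural fact is the algebraic identity $\FF_{\al\mu}{\FF_\be}^{\,\mu}=\frac14\FF^2\,\g_{\al\be}$, which holds for any self-dual 2-form in four dimensions (it follows from $\FF_{\al\be}\FF^{\al\be}=\FF^2$ together with $\,^*\FF=-i\FF$ and the standard contraction identities for $\in$). This means that, wherever $\FF^2\neq 0$ — which by \eqref{clo7} is everywhere in $\Phi_1[(-\overline{\varep},\overline{\varep})\times E_{r_0}]$ — the complex linear operator $\Psi:v^\be\mapsto\FF_{\al\be}v^\be$ satisfies $\Psi^2=\frac14\FF^2\cdot\mathrm{Id}$. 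Hence $\Psi$ has exactly two eigenvalues, $\pm\frac12\sqrt{\FF^2}$, and we want to choose a branch of the square root so that $\frac12\sqrt{\FF^2}=(1+G_1)(4MP^2)^{-1}$; this is precisely the content of \eqref{big1}, since $-4M^2\FF^2=(1-\sigma)^4(1+G_1)^2=(1+G_1)^2/P^4$ gives $\FF^2=-(1+G_1)^2/(4M^2P^4)$, so $\frac12\sqrt{\FF^2}=\pm i(1+G_1)/(4MP^2)$. (The factor of $i$ versus the statement will have to be reconciled by how one normalizes the null pair and the reality of the Lorentzian metric — more on this below.)

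First I would record the algebraic lemma $\FF_{\al\mu}{\FF_\be}^{\,\mu}=\frac14\FF^2\g_{\al\be}$ and, as a consequence, the splitting of the tangent space (complexified) into the two eigenspaces $V_\pm$ of $\Psi$ associated with the roots $\pm\frac12\sqrt{\FF^2}$, noting these are each complex two-dimensional and that $\FF$ restricted appropriately is nondegenerate on each. Next I would exhibit a null pair adapted to this splitting: one seeks future-directed null vectors $l,\ul$ with $\g(l,\ul)=-1$ such that $l$ spans (over $\CCC$) a null line inside one eigenspace and $\ul$ a null line inside the other. Concretely, near the asymptotic region one already has the Kerr principal null pair as in \cite[Section 4]{Ma1}, and \eqref{big1} says $\FF$ differs from the Kerr model by the small smooth error $G_1$; so I would obtain $l,\ul$ by a smooth perturbation/deformation of the Kerr principal null directions, solving the eigenvector equations $\FF_{\al\be}l^\be=\mu_+ l_\a$, $\FF_{\al\be}\ul^\be=\mu_- \ul_\a$ with $\mu_\pm$ the two roots, and then rescaling to enforce $\g(l,\ul)=-1$. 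Smoothness of $l,\ul$ follows from smoothness of $(1-\sigma)\FF$, of $G_1$, and of $P$ together with the non-vanishing of $\FF^2$, via the implicit function theorem applied to the (simple) eigenvalue problem.

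The point requiring the most care is the identification of the eigenvalue with exactly $(1+G_1)(4MP^2)^{-1}$ with the correct sign/branch, and checking consistency between the two equations in \eqref{big2}: these two identities are not independent, because $\,^*\FF=-i\FF$ forces the two null eigenlines to carry opposite eigenvalues, and because $l,\ul$ span a real timelike plane the relation $\g(l,\ul)=-1$ pins the relative normalization. I would verify that $\FF_{\al\be}l^\be=\mu\, l_\a$ together with self-duality and $\,^*\FF=-i\FF$ forces $\FF_{\al\be}\ul^\b=-\mu\,\ul_\a$; then I would compute $\mu$ by contracting $\FF_{\al\be}l^\al l^\be$-type expressions — or more efficiently by contracting the eigenvalue equation against $\ul^\al$ and using $\g(l,\ul)=-1$ and \eqref{s12}, \eqref{se90} — to see that $\mu=(1+G_1)(4MP^2)^{-1}$ up to a sign fixed by the future-directed choice and the branch of the square root determined in \eqref{big1}. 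Asymptotically, using \eqref{asymp1} ($1-\sigma=2Mr^{-1}+O(r^{-2})$, so $P\sim r/(2M)$ and $4MP^2\sim r^2/M$) and $\FF_{\al\be}=O(r^{-2})$ from \eqref{asymp0}, one checks the eigenvalue has the right size $O(r^{-2})$ and the right leading constant, which fixes the branch globally by continuity on the connected set $\Phi_1[(-\overline{\varep},\overline{\varep})\times E_{r_0}]$. The main obstacle, then, is purely this branch-and-sign bookkeeping — the existence, smoothness, and null-pair normalization are routine given $\FF^2\neq 0$ and the smallness bounds already established in \eqref{big1}–\eqref{clo7}.
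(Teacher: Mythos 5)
Your overall strategy is exactly the paper's: view \eqref{big2} as an eigenvalue problem for the self-dual form $\FF$, use the algebraic identity $\FF_{\al\si}{\FF_\be}^{\,\si}=\frac14\FF^2\g_{\al\be}$ together with \eqref{big1} and the non-vanishing \eqref{clo7} to compute the eigenvalues, and then take real null eigenvectors normalized so that $\g(l,\ul)=-1$. But the central computation contains a sign error with real consequences for your write-up. If $\FF_{\al\be}Z^\be=\la Z_\al$, then iterating gives $\la^2 Z_\al=\FF_{\al\ga}{\FF^\ga}_{\be}Z^\be=-\FF_{\al\ga}{\FF_\be}^{\,\ga}Z^\be=-\tfrac14\FF^2\,Z_\al$, the minus sign coming from the antisymmetry of $\FF$ when the contracted index is raised; so $\Psi^2=-\tfrac14\FF^2\,\mathrm{Id}$, not $+\tfrac14\FF^2\,\mathrm{Id}$. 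Hence $\la^2=-\tfrac14\FF^2=(1-\si)^4(1+G_1)^2/(16M^2)$ by \eqref{big1}, i.e. $\la=\pm(1+G_1)(4MP^2)^{-1}$, precisely as in \eqref{big2}, with no factor of $i$. In your version the spurious $i$ appears, and the mechanism you propose for removing it --- ``how one normalizes the null pair and the reality of the Lorentzian metric'' --- cannot work: eigenvalues are unchanged by any rescaling of eigenvectors, and an eigenvalue $\pm i(1+G_1)(4MP^2)^{-1}$ would flatly contradict the lemma. So what you defer as ``branch-and-sign bookkeeping'' is not bookkeeping; it is the incorrect sign in the contraction. Once corrected, the asymptotic matching you invoke to fix a branch is unnecessary, since the branch is already encoded in the definition of $G_1$ in \eqref{big1}.

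Two smaller points. For the reality of $l,\ul$ you propose a deformation off the Kerr principal null pair via the implicit function theorem; the paper's (much shorter) route is that reality of the two eigendirections follows from the self-duality of $\FF$, nullity from its antisymmetry, and then one only rescales to achieve $\g(l,\ul)=-1$ --- your perturbative argument could be made to work on the connected region, but it is extra machinery. Also, the two identities in \eqref{big2} are indeed not independent, as you say, but the cleanest way to see the opposite eigenvalues is simply that $l$ and $\ul$ lie in the two distinct eigenspaces of $\Psi$, whose eigenvalues are the two roots $\pm\la$; no separate contraction argument against $\ul$ is needed.
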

\begin{proof}[Proof of Lemma \ref{nullpair}] Let $Z_\a$ be  complex eigenvector  $\FF_{\a\b}Z^\b=\la Z_\a$ with complex
eigenvalue $\la$. Using the relation  $\FF_{\a\si}\FF_{\b}\,^\si=(1/4)\g_{\a\b}\FF^2$
 (see  \cite[formula 4.2]{IoKl})
we derive,
\begin{equation*}
\la^2=-\frac{1}{4}\FF^2=\frac{1}{16M^2}(1-\si)^4(1+G_1)^2.
\end{equation*}
Thus, $\la=\pm  (4M P^2)^{-1}(1+G_1)$.  The reality
of the corresponding eigenvectors $l,\lb$  is a consequence of  the self duality of $\FF$. They must both be null in view of the antisymmetry of $\FF$ and can be normalized apropriately.
\end{proof}

Let $e_{(3)}=\ul$, $e_{(4)}=l$. We fix vector-fields $e_{(1)},e_{(2)}$ in $\Phi_1[(-\overline{\varep},\overline{\varep})\times E_{r_0}]$ such that together with $e_{(3)}=\ul$, $e_{(4)}=l$ they form a  positively oriented  null frame, i.e.,
\begin{equation}\label{clo20}
\begin{split}
&\g(l,e_{(1)})=\g(l,e_{(2)})=\g(\ul,e_{(1)})=\g(\ul,e_{(2)})=\g(e_{(1)},e_{(2)})=0,\\
&\g(e_{(1)},e_{(1)})=\g(e_{(2)},e_{(2)})=\in_{(1)(2)(3)(4)}=1.
\end{split}
\end{equation}
In view of \eqref{clo7}, the vector-fields $e_{(\mu)}=e_{(\mu)}^\al\partial_\al$, $\mu=1,2,3,4$ can be chosen such that
\begin{equation}\label{clo21}
\sum_{\mu=1}^4\sum_{\al=0}^3|e_{(\mu)}^\al|\leq\widetilde{C}\quad\text{ in }\Phi_1[(-\overline{\varep}.\overline{\varep})\times E_{r_0}].
\end{equation}
According to \eqref{big2},  \eqref{clo20} and the self duality of $\FF$, the  components of $\FF$ are,
\begin{equation}\label{se6'}
\FF_{(4)(1)}=\FF_{(4)(2)}=\FF_{(3)(1)}=\FF_{(3)(2)}=0\text{ and }\FF_{(4)(3)}=i\FF_{(2)(1)}=(1+G_1)/(4MP^2).
\end{equation}
 This is equivalent to the identity,
\begin{equation}\label{se5}
\FF_{\a\b}=\frac{1+G_1}{4MP^2}\big(-l_\a\ul_\b+l_\b\ul_\a-i\in_{\al\be\mu\nu}l^\mu\ul^\nu\big).
\end{equation}
By contracting \eqref{se5} with $2\T^\a$ and using $2\T^\a\FF_{\a\b}=\sigma_\be=\D_\be\sigma$ we derive
\begin{equation*}
\D_\be (y+iz)=\frac{1+G_1}{2M}\big[-(\T^\al l_\al)\ul_\b+(\T^\al \ul_\al)l_\be-i\in_{\al\be\mu\nu}\T^\a l^\mu\ul^\nu\big].
\end{equation*}
In particular, if $G_1=\Re G_1+i\Im G_1$, we have
\begin{equation}\label{good3}
\begin{split}
&\D_\be y=\frac{1+\Re G_1}{2M}\big[-(\T^\al l_\al)\ul_\b+(\T^\al \ul_\al)l_\be\big]+\frac{\Im G_1}{2M}\in_{\al\be\mu\nu}\T^\a l^\mu\ul^\nu,\\
&\D_\be z=-\frac{1+\Re G_1}{2M}\in_{\al\be\mu\nu}\T^\a l^\mu\ul^\nu+\frac{\Im G_1}{2M}\big[-(\T^\al l_\al)\ul_\b+(\T^\al \ul_\al)l_\be\big].
\end{split}
\end{equation}
It follows from \eqref{good3} and \eqref{big1} that
\begin{equation}\label{vgood3}
|\D_{(1)}y|+|\D_{(2)}y|+|\D_{(3)}z|+|\D_{(4)}z|\leq \widetilde{C}\min(r^{-1},\overline{\varep}^{1/5})\quad\text{ in }\Phi_1[(-\overline{\varep},\overline{\varep})\times E_{r_0}].
\end{equation}

A direct computation using the definition of $P$ and \eqref{Ernst1} shows that
\begin{equation}\label{he9}
\D_\a P\D^\a P=\frac{\D_\a\sigma \D^\a\sigma}{(1-\sigma)^4}=-\frac{(1+G_1)^2\T^\al\T_\al}{4M^2}.
\end{equation}
Since $-\T^\al\T_\al=\Re\sigma=1-y/(y^2+z^2)$ we have
\begin{equation}\label{se10}
\begin{split}
&\D_\a y\D^\a y-\D_\a z\D^\a z=\frac{(1+\Re G_1)^2-(\Im G_1)^2}{4M^2}\Big(1-\frac{y}{y^2+z^2}\Big),\\
&\D_\a y\D^\a z=\frac{(1+\Re G_1)\Im G_1}{4M^2}\Big(1-\frac{y}{y^2+z^2}\Big).
\end{split}
\end{equation}

\subsection{A lemma of Mars}   The following  lemma is an adaptation to our situation
of  an important  calculation which first appears in  \cite{Ma1}.
\begin{lemma}\label{clo100}
With $B=J^2/(4M^4)<1/4$ we have in $\Phi_1[(-\overline{\varep},\overline{\varep})\times E_{r_0}]$,
\begin{equation}\label{conclz}
|4M^2(y^2+z^2)\D_\be z\D^\be z+z^2-B|\leq\widetilde{C}\min(r^{-1},\overline{\varep}^{1/40}).
\end{equation}
\end{lemma}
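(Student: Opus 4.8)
The plan is to track the quantity $W = 4M^2(y^2+z^2)\D_\be z\D^\be z+z^2$ exactly as Mars does in \cite{Ma1}, where it is an exact constant for the Kerr metric, and to show that here it is constant up to the small errors controlled by $G_1$. First I would use the null-frame formulas \eqref{good3} for $\D_\be z$ together with $\g(l,\lb)=-1$ to get a closed expression for $\D_\be z\D^\be z$ in terms of the frame products $a:=\T^\al l_\al$, $\ab:=\T^\al\lb_\al$ and the pseudoscalar $b:=\in_{\al\be\mu\nu}\T^\a l^\mu\lb^\nu$, modulo $G_1$-errors. A short computation gives $\D_\be z\D^\be z = (2M)^{-2}\big[(1+\Re G_1)^2\,(\text{quadratic in }a,\ab,b) + \text{cross terms with }\Im G_1\big]$, and similarly from the second line of \eqref{se10} one obtains $\D_\be z\D^\be z - \D_\be y\D^\be y$ and $\D_\be y\D^\be z$ in terms of $y,z$. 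The key algebraic point, exactly as in Mars's calculation, is that one can solve for the frame quantities $a\ab$ and $b^2$ (equivalently for $\T^\al\T_\al$ and the relevant combination) in terms of $y$, $z$ and $\D z$, and then $W$ collapses to an expression depending only on $y,z$ modulo errors.

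Concretely, the steps I would carry out are: (i) from \eqref{he9} record $\D_\a P\D^\a P = -(1+G_1)^2\,\Re\sigma/(4M^2)$ with $\Re\sigma = 1 - y/(y^2+z^2)$; (ii) derive the analogue of Mars's identity for the gradient of $z^2+$ (something) by differentiating or by direct substitution, using that $\D z$, $\D y$ and the frame vectors are all bounded by $\widetilde C$ via \eqref{clo21}, \eqref{good3}, and that $y^2+z^2 = |1-\sigma|^{-2} \le \widetilde C r^2$ while $y \ge (\widetilde C)^{-1} r$ from \eqref{asymp3} and \eqref{clo2}; (iii) show $\D_\be W$ is $O(\widetilde C\min(r^{-1},\overline\varep^{1/5})\cdot\text{poly}(r))$ by expanding, because every term in $\D_\be W$ that does not cancel by the Kerr identity carries at least one factor of $G_1$ or $\D G_1$; (iv) integrate this bound, matching against the asymptotic value \eqref{asymp4}, which says $W \to B + O(r^{-1})$ near infinity. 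The integration is the same device already used for \eqref{clo3.5}: split at $r \sim \overline\varep^{-\delta}$ for a suitable small power $\delta$, use the pointwise asymptotic bound in the far region and the integrated-gradient bound in the near region. The loss of powers of $r$ in $\D_\be W$ (the factor $y^2+z^2 \sim r^2$ multiplying $\D z$, etc.) is why the final exponent degrades from $\overline\varep^{1/5}$ in \eqref{big1} to $\overline\varep^{1/40}$; one picks $\delta$ to balance $r^{k}\overline\varep$ against $r^{-1}$ and optimizes.

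The main obstacle I anticipate is step (ii)–(iii): identifying the precise polynomial combination of $y$, $z$, $\D y\c\D z$, $\D z\c\D z$, $\D y\c\D y$ that is \emph{exactly} constant in Kerr, so that in the perturbed setting its gradient is manifestly a sum of terms each proportional to $G_1$, $\Im G_1$, or their first derivatives. This is Mars's computation and it is not entirely mechanical — one has to use \eqref{se10}, \eqref{he9}, the bound \eqref{vgood3} (which says $\D_{(1)}y,\D_{(2)}y,\D_{(3)}z,\D_{(4)}z$ are small, i.e. $\D y$ is almost aligned with $l,\lb$ and $\D z$ almost with $e_{(1)},e_{(2)}$), and possibly an auxiliary identity for $\D_\be(\D_\a z\D^\a z)$ obtained by commuting derivatives and invoking the vacuum equations. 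Once the exact Kerr identity $W\equiv B$ is in hand, the error analysis is routine bookkeeping of the type already displayed in this section, so the heart of the matter is reproducing Mars's algebraic miracle in a form stable under the $G_1$-perturbation. A secondary technical point is keeping careful track of the $r$-weights so that the exponent $1/40$ (rather than something worse) comes out; this is a matter of choosing the splitting radius optimally and is not conceptually hard.
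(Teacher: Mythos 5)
Your overall architecture coincides with the paper's: the paper likewise proves a pointwise bound on $\sum_\al|\partial_\al H|$ for $H=4M^2(y^2+z^2)\D_\be z\D^\be z+z^2$ in the region $r\le\overline{\varep}^{-1/40}$ and then integrates out to a point with $r\sim\overline{\varep}^{-1/40}$, where \eqref{asymp4} gives $|H-B|\le\widetilde{C}r^{-1}$, exactly as in the argument for \eqref{clo3.5}; the exponent $1/40$ comes from balancing the integration length against the derivative bound, as you anticipate. Also, there is no combination to be ``identified'': the conserved quantity is the one in the statement, so your steps (i) and (iv) are fine.

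The genuine gap is the step you yourself flag as the anticipated obstacle, namely showing that $\D_\al H$ is small, and your proposed mechanism for it does not close. It is not true that every non-cancelling term carries a factor of $G_1$ or $\partial G_1$. The term $8M^2(y^2+z^2)\D_\al\D_\be z\,\D^\be z$ requires the \emph{second} covariant derivatives of $P=y+iz$, and these cannot be obtained by differentiating the first-order identities \eqref{good3}, \eqref{se10}: that brings in derivatives of the null frame $l,\ul$ (connection coefficients) which are not controlled. The paper instead computes $\D_\al\D_\be P$ from the Ricci identity \eqref{Ricci}, $\D_\al\D_\be\sigma=2({F_\al}^\rho\FF_{\rho\be}+\T^\rho\T^\nu\RR_{\nu\al\rho\be})$, and eliminates the curvature through the definition of the Mars--Simon tensor, $\RR=\SS-6\QQ$, with $\QQ$ algebraic in $\FF$ and $\sigma$; the outcome is the identity \eqref{he2}, whose only non-algebraic remainder is $2P^2\T^\rho\T^\nu\SS_{\nu\al\rho\be}$. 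That remainder is estimated directly from the assumption \textbf{PK} via \eqref{clo0} (together with $|P|\le\widetilde{C}r$ from \eqref{clo2}), not via $G_1$: the bounds \eqref{big1} on $G_1$ only encode the single contraction $\T^\nu\SS_{\nu\rho\ga\de}\FF^{\ga\de}$ occurring in \eqref{mainiden1}, whereas the contraction $\T^\rho\T^\nu\SS_{\nu\al\rho\be}\D^\be z$ needed here is independent information. Once \eqref{he2} is available, the remaining cancellation is the frame computation \eqref{eq:FFDz}, using \eqref{se6'} and \eqref{vgood3}, and the bookkeeping is indeed routine; but without \eqref{he2} and the direct use of the smallness of $(1-\sigma)\T^\nu\SS_{\nu\al\rho\be}$, the heart of the proof is missing from your proposal.
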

\begin{proof}[Proof of Lemma \ref{clo100}]
We show that the  function $H:=4M^2(y^2+z^2)\D_\be z\D^\be z+z^2$ is almost constant by computing its derivatives with respect
to our null frame \eqref{clo20}.       In the particular case $\SS=0$, the constancy  of $H$ was first proved
 in \cite{Ma1}  using  the  full  Newman-Penrose
formalism. A similar proof was  later  given in  \cite{IoKl}. Here
we give instead a straightforward proof based only on the
formulas we have derived so far.

We will prove that
\begin{equation}\label{derbound}
\sum_{\al=0}^3|\partial_\al H(p)|\leq \widetilde{C}\overline{\varep}^{1/20}\qquad\text{ if }p\in\Phi_1[(-\overline{\varep},\overline{\varep})\times E_{r_0}]\text{ and }r(p)\leq \varep^{-1/40}.
\end{equation}
Assuming this, the bound \eqref{conclz} follows from the bound $|H-B|\leq \widetilde{C}r^{-1}$ in $\Phi_1[(-\overline{\varep},\overline{\varep})\times E_{r_0}]$, see \eqref{asymp4}, in the same way the bound \eqref{clo3.5} follows from \eqref{clo3} and \eqref{asymp2}.

We differentiate $H$ and derive,
\begin{equation}\label{eq:DH}
\D_\al H=8M^2(y^2+z^2)\D_\al\D_\be z\D^\be z+8M^2(y\D_\al y+z\D_\al z)\D_\be z\D^\be z+2z\D_\al z.
\end{equation}
To calculate the main term  $8M^2(y^2+z^2)\D_\al\D_\be z\D^\be z$
we  first   calculate  the second covariant derivatives of $P=(y+iz)$,
using the definition of $\SS$  and \eqref{Ricci},
\begin{equation*}
\begin{split}
\D_\al\D_\be P&=2(1-\sigma)^{-3}\D_\al\si\D_\be\si+(1-\sigma)^{-2}\D_\al\D_\be\si\\
&=2(1-\sigma)^{-3}\D_\al\si\D_\be\si+2(1-\sigma)^{-2}({F_\al}^\rho\FF_{\rho\be}+\T^\rho\T^\nu\RR_{\nu\al\rho\be})\\
&=2(1-\si)^{-3}\D_\al\si\D_\be\si+2(1-\sigma)^{-2}{F_\al}^\rho\FF_{\rho\be}\\
&-12(1-\si)^{-2}\T^\rho\T^\nu\QQ_{\nu\al\rho\be}+2(1-\si)^{-2}\T^\rho\T^\nu\SS_{\nu\al\rho\be}\\
&=2(1-\sigma)^{-2}{F_\al}^\rho\FF_{\rho\be}+2(1-\si)^{-2}\T^\rho\T^\nu\SS_{\nu\al\rho\be}\\
&-(1-\si)^{-3}\si_\al\si_\be+(1-\si)^{-3}\FF^2[\g_{\al\be}(\T^\rho\T_\rho)-\T_\al\T_\be]\\
&=2P^2{F_\al}^\rho\FF_{\rho\be}+P^{-1}[(\D_\rho P\D^\rho P)\g_{\al\be}-\D_\al P\D_\be P]\\
&+2P^2\T^\rho\T^\nu\SS_{\nu\al\rho\be}-P^3\FF^2\T_\al\T_\be.
\end{split}
\end{equation*}
Thus, we have the identity
\begin{equation}\label{he2}
\begin{split}
\D_\al\D_\be P=&-P^{-1}\D_\al P\D_\be P+P^{-1}(\D_\rho P\D^\rho P)\g_{\al\be}\\
&-2P^2{F_\al}^\rho\FF_{\be\rho}-P^3\FF^2\T_\al\T_\be+2P^2\T^\rho\T^\nu\SS_{\nu\al\rho\be}.
\end{split}
\end{equation}
Since $\T(z)=0$ we deduce,
\begin{equation}\label{he3}
\begin{split}
\D_\al\D_\be P \D^\b z&=P^{-1}(\D_\rho P\D^\rho P)\D_\a z-2P^2 {F_\al}^\rho\FF_{\be\rho}\D^\b z  \\
& - P^{-1}\D_\al P\D_\be P \D^\b z+2P^2\T^\rho\T^\nu\SS_{\nu\al\rho\be}\D^\b z.
\end{split}
\end{equation}
Observe that  $\D_\al\D_\be z\D^\be z= \Im\big[\D_\a\D_\b P \D^\b z\big]$. Thus, in view of \eqref{he3}
\begin{equation}\label{eq:interm}
\begin{split}
\D_\al\D_\be z\D^\be z
&=\Im\big[-2P^2
{F_\al}^\rho\FF_{\be\rho}\,\D^\b z+2P^2\T^\rho\T^\nu\SS_{\nu\al\rho\be}\D^\b z\big]\\
&-\Im\big[P^{-1}\D_\al P\D_\be P \D^\b z\big]+\Im\big[P^{-1}(\D_\rho P\D^\rho P)\D_\a z\big].
\end{split}
\end{equation}
Now,
\beaa
(y^2+z^2)\Im\big[P^{-1}\D_\al P\D_\be   P\,  \D^\b z \big]
&=& \D^{\be} z\Im[(y-iz)\D_{\a} (y+iz)\D_{\b}(y+iz)]\\
&=& (y\D_\a y +z\D_\a z) \D^\b z \D_\b z+(y \D_\a z-z\D_\a y )\D^\b y \D_\b z
\eeaa
and,
\beaa
(y^2+z^2)\Im\big[P^{-1}\D_\rho P\D^\rho    P\,  \D_\a z \big]&=&\D_\a z\Im[
[(y-iz)\D_{\rho} (y+iz)\D^{\rho}(y+iz)]\\
&=&2y \D_\a z\D_\rho y \D^\rho z -z \D_\a z(\D_\rho y \D^\rho y-\D_\rho z \D^\rho z).
\eeaa
Therefore, back to \eqref{eq:interm},
\begin{equation*}
\begin{split}
(y^2+z^2)\D_\al\D_\be z\D^\be z&=
(y^2+z^2)\Im\big[ -2P^2
{F_\al}^\rho\FF_{\be\rho}\,\D^\b z+2P^2\T^\rho\T^\nu\SS_{\nu\al\rho\be}\D^\b z\big]\\
&-y \D_\a y \, \D_\rho z\D^\rho z+(y\D_\a z+z\D_\a y) \, \D_\rho y\D^\rho z- z\D_\a z \, \D_\rho y\D^\rho y.
\end{split}
\end{equation*}
Going  back to \eqref{eq:DH} we derive,
\begin{equation*}
\begin{split}
\D_\al H&=8 M^2 (y^2+z^2)\Im\big[ -2P^2
{F_\al}^\rho\FF_{\be\rho}\,\D^\b z+2P^2\T^\rho\T^\nu\SS_{\nu\al\rho\be}\D^\b z\big]\\
&+8M^2 z\D_\al z(\D_\rho z\D^\rho z- \D_\rho y\D^\rho y)+        2z\D_\al z+8M^2(y\D_\a z+z\D_\a y) \, \D_\rho y\D^\rho z.
\end{split}
   \end{equation*}

Recall that we are looking to prove \eqref{derbound} at points $p$ with $r(p)\leq \overline{\varep}^{-1/40}$.
 In view of \eqref{clo2} and \eqref{clo0}, at such points we have
 \beaa
 16M^2(y^2+z^2)\D^{\be} zP^2\T^{\rho}\T^{\nu}\SS_{\nu\a\rho\b}&=&O_1(\overline{\varep}),
 \eeaa
where, for simplicity of notation, in this lemma we let $O_1(\overline{\varep})$ denote any quantity bounded by $\widetilde{C}\overline{\varep}^{1/20}$. According to \eqref{big1} and \eqref{se10} we also have,
 \begin{equation*}
|\D_\rho y \D^\rho z|+\Big|\D_\rho z\D^\rho z- \D_\rho y\D^\rho y+\frac{1}{4M^2}\big(1-\frac{y}{y^2+z^2}\big)\Big|\leq\widetilde{C}\overline{\varep}^{1/5}.
 \end{equation*}
 Thus, using again \eqref{clo2},
 \beaa
 8M^2 z \D_\a z\big(\D_\rho z\D^\rho z- \D_\rho y\D^\rho y\big) + 2z\D_a z&=&\frac{2yz\D_\a z}{y^2+z^2} +O_1(\overline{\varep}).
 \eeaa
 Consequently,
 \begin{equation*}
\begin{split}
\D_\al H&=-16 M^2 (y^2+z^2)\Im\big[ P^2
{F_\al}^\rho\FF_{\be\rho}\,\D^\b z]   +\frac{2yz\D_\a z}{y^2+z^2} +O_1(\overline{\varep}).
   \end{split}
   \end{equation*}
For \eqref{derbound} it only  remains to check  that,
   \bea
   -16 M^2 (y^2+z^2)\Im\big[ P^2
{F_\al}^\rho\FF_{\be\rho}\,\D^\b z]   +\frac{2yz\D_\a z}{y^2+z^2}=O_1(\overline{\varep}). \label{eq:FFDz}
   \eea
We prove this in the null frame $e_{(1)}, e_{(2)}, e_{(3)}, e_{(4)}$, see \eqref{clo20}. Recalling \eqref{se6'}  and \eqref{vgood3} we easily see that
    both terms on the left are bounded by $\widetilde{C}\overline{\varep}^{1/20}$ for $\a=3, 4$.  For  $\a=1$, using \eqref{se6'} and \eqref{big1},
\begin{equation*}
\begin{split}
-16 M^2 (y^2+z^2)\Im\big[ &P^2{F_{(1)}}^{(\rho)}\FF_{(\be)(\rho)}\,\D^{(\b)} z]=-16 M^2 (y^2+z^2)\Im\big[ P^2F_{(1)(2)}\FF_{(1)(2)}\,\D_{(1)} z]\\
&=-16 M^2 (y^2+z^2)\D_{(1)} z\Im\Big[P^2\frac{i}{4MP^2}\frac{2yz}{4M(y^2+z^2)^2}\Big]+O_1(\overline{\varep}).
\end{split}
\end{equation*}
The approximate identity \eqref{eq:FFDz} follows for $\al=1$. The proof of \eqref{eq:FFDz} for $\al=2$ is similar, which completes the proof of the lemma.
\end{proof}

\subsection{Conclusions}\label{sec:pseudo.conv}
It follows from  \eqref{se10} and Lemma \ref{clo100},  that
\begin{equation}\label{concly}
\D_\be z \D^\be z=\frac{B-z^2}{4M^2(y^2+z^2)}+O(\overline{\varep}),\quad\D_\be y \D^\be y=\frac{y^2-y+B}{4M^2(y^2+z^2)}+O(\overline{\varep})
\end{equation}
in $\Phi_1[(-\overline{\varep},\overline{\varep})\times E_{r_0}]$, where $O(\overline{\varep})$ denotes functions on $\Phi_1[(-\overline{\varep},\overline{\varep})\times E_{r_0}]$ dominated by $\widetilde{C}\min(r^{-1},\overline{\varep}^{1/40})$. Using \eqref{good3}  we deduce that
$\D^\b y\D_\b y=\frac{1}{2M^2}(\T^\al l_\al)(\T^\be\ul_\be)+O(\overline{\varep})$. Hence,
\begin{equation}\label{conclt}
(\T^\al l_\al)(\T^\be\ul_\be)=\frac{y^2-y+B}{2(y^2+z^2)}+O(\overline{\varep})\quad\text{ in }\Phi_1[(-\overline{\varep},\overline{\varep})\times E_{r_0}].
\end{equation}

We prove now that the value of $r_0$ in \eqref{clo1} can be taken to be equal to $1$. It view of the definition of $r_0$, it suffices to prove the following lemma:

\begin{lemma}\label{clo101}
Assuming $\overline{R}$ is chosen sufficiently large, we have
\begin{equation*}
|1-\sigma|\geq 2\overline{R}^{-2}\quad\text{ on }\Sigma_{r_0}\setminus\Sigma_{\overline{R}}.
\end{equation*}
\end{lemma}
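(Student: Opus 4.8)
The plan is to recast the bound analytically and prove it by a transport estimate along $\Sigma^0$. Set $P=y+iz=(1-\sigma)^{-1}$ and $w:=y^2+z^2=|1-\sigma|^{-2}$, so that $|1-\sigma|\ge 2\overline R^{-2}$ is equivalent to $w\le\overline R^4/4$; by continuity it suffices to prove $w\le\overline R^4/4$ on the compact set $\Phi_0(\{r_0\le|x|\le\overline R\})$. The definition of $r_0$ already supplies the a priori bound $w\le\overline R^4$ there, so the issue is entirely the improvement of the constant from $\overline R^4$ to $\overline R^4/4$ (granting which, continuity of $1-\sigma$ together with the minimality of $r_0$ then forces $r_0=1$). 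As a preliminary I would bound $z$: writing $\D_\be z\D^\be z=(\D_{(1)}z)^2+(\D_{(2)}z)^2-2\D_{(3)}z\D_{(4)}z$ in the null frame \eqref{clo20}, \eqref{clo21}, and bounding the last term by \eqref{vgood3}, one gets $\D_\be z\D^\be z\ge-\widetilde C\overline\varep^{2/5}$, hence $4M^2w\,\D_\be z\D^\be z\ge-\widetilde C\overline\varep^{2/5}\overline R^4$; inserting this into \eqref{conclz} and using $B<1/4$ gives $z^2\le B+\widetilde C\min(r^{-1},\overline\varep^{1/40})$, so $|z|\le 1/2$ for $\overline\varep$ small.

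The key step is to bound $y$ by $C\overline R$ with a constant depending only on $\overline A$. The difficulty is that the crude estimate \eqref{clo21} controls the eigenvector frame $l,\ul$ of Lemma \ref{nullpair} only by the $\overline R$-dependent constant $\widetilde C$, which is too weak to beat the a priori bound; the remedy is to exploit that control is needed only where $w$ is large, and there $\T$ is nearly unit timelike. Indeed $\T^\al\T_\al=-\Re\sigma=y/(y^2+z^2)-1=-1+O(w^{-1/2})$, so on $\{w\ge W_0\}$, with $W_0=W_0(\overline A)$ large, one has $-\T^\al\T_\al\ge 1/2$. Decomposing $\T=\alpha T_0+V$ along $\Sigma^0$ with $V$ tangent to $\Sigma^0$, the bounds \eqref{prel1}, \eqref{prel3}, \eqref{prel4} give $|\alpha|=|\g(\T,T_0)|\le C(\overline A)$, and $|V|^2=\alpha^2+\T^\al\T_\al\le\alpha^2$ forces $|\alpha|-|V|=(-\T^\al\T_\al)/(|\alpha|+|V|)\ge c(\overline A)>0$; combined with the causal decomposition of the future null vectors $l$ and $\ul$, this yields $|\g(T_0,l)|\le C(\overline A)|\g(\T,l)|$ and $|\g(T_0,\ul)|\le C(\overline A)|\g(\T,\ul)|$. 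Now $\g(\T,l)$ and $\g(\T,\ul)$ are altered by the residual rescaling $l\mapsto al$, $\ul\mapsto a^{-1}\ul$ ($a>0$) of the eigenvector pair, but their product is not, and by \eqref{conclt} it equals $\frac{y^2-y+B}{2(y^2+z^2)}+O(\overline\varep)=\frac12+O(w^{-1/2})$ where $w$ is large; choosing $a$ so that $|\g(\T,l)|=|\g(\T,\ul)|$, all four quantities $\g(\T,l),\g(\T,\ul),\g(T_0,l),\g(T_0,\ul)$ are then bounded by $C(\overline A)$. Feeding this and \eqref{big1} into the scale-invariant identity \eqref{good3} bounds $|T_0(y)|\le C(\overline A)$, and \eqref{concly} bounds $|\D_\be y\D^\be y|\le C(\overline A)$; since the gradient of $y|_{\Sigma^0}$ satisfies $|\D^{\Sigma^0}y|^2=\D_\be y\D^\be y+(T_0 y)^2$, we obtain $|\D^{\Sigma^0}y|\le C_1(\overline A)$ at every point of the region with $w\ge W_0$.

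The transport step then finishes the argument. Since $w\ge y^2$ everywhere, the gradient bound holds wherever $|y|\ge\sqrt{W_0}$. Given a point $p$ in the region with $|y(p)|>\sqrt{W_0}$, I would integrate $y$ along the outward radial segment in $\Sigma^0$ from $p$ to $\{|x|=\overline R\}$; its length in the induced metric is at most $C(\overline A)\overline R$ (the induced metric is comparable to the Euclidean one by \eqref{prel1}), and as long as $|y|\ge\sqrt{W_0}$ on the segment $|y|$ changes at rate at most $C_1(\overline A)C(\overline A)$. Hence the segment either first meets $\{|y|=\sqrt{W_0}\}$ or reaches $\{|x|=\overline R\}$, where $|y|\le C(\overline A)\overline R$ by \eqref{asymp3}; in either case $|y(p)|\le C(\overline A)\overline R$. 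Together with $|z|\le 1/2$ this gives $w=y^2+z^2\le C(\overline A)^2\overline R^2+1/4\le\overline R^4/4$, provided $\overline R$ is chosen large enough depending only on $\overline A$, which is exactly the claim.

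The hard part is the second paragraph --- obtaining a bound on $T_0(y)$, equivalently on the $\Sigma^0$-gradient of $y$, that depends only on $\overline A$ rather than on the $\overline R$-dependent frame bound \eqref{clo21}. This is precisely the gain one needs over the a priori estimate, and the mechanism enabling it is that the only region that matters is where $w$ is large, in which $\T$ is nearly unit timelike, so that the causal structure turns the $\overline A$-controlled product $\g(\T,l)\g(\T,\ul)$ of Mars' computation \eqref{conclt} into an $\overline A$-bound on the otherwise only $\overline R$-controlled quantities $\g(T_0,l)$ and $\g(T_0,\ul)$.
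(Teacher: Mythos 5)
Your argument is correct in outline, but it takes a genuinely different (and heavier) route than the paper at the key step. The paper does not touch the null pair $l,\ul$ in this lemma at all: it bounds the coordinate gradient $|\pr_\al P|\le C(\overline{A})$ on all of $\Sigma_{r_0}$ by a dichotomy on $\g(\T,\T)$. If $\g_p(\T,\T)\notin[-3/2,-1/2]$, then $|1-\sigma|\geq|1+\T^\al\T_\al|\geq 1/2$, so the crude $\overline{A}$-bound on $\pr\sigma$ already controls $\pr P=\pr\sigma\,(1-\sigma)^{-2}$; if $\g_p(\T,\T)\in[-3/2,-1/2]$, then since $\T(y)=\T(z)=0$ the gradients $\D y,\D z$ are orthogonal to a uniformly timelike vector, hence spacelike, and the ellipticity \eqref{prel1}, \eqref{prel4} converts the Lorentzian norms supplied by \eqref{concly} directly into $\sum_{\al}|\pr_\al y|^2+|\pr_\al z|^2\le C(\overline{A})$; the conclusion then follows from \eqref{asymp1} by integration, exactly as in your third paragraph. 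Note that your dichotomy ($w$ large versus bounded) is essentially the paper's, since $\T^\al\T_\al=y/(y^2+z^2)-1$, but the obstruction you identify --- the $\overline{R}$-dependence of the frame bound \eqref{clo21} --- is an obstruction only for frame-based arguments; the paper's orthogonality trick bypasses it with two lines, whereas you recover the same gain through \eqref{conclt}, the rescaling freedom of the eigenvector pair, and the causal comparison $|\g(T_0,l)|\le C(\overline{A})|\g(\T,l)|$, plus a separate one-sided treatment of $z$ via \eqref{conclz} and \eqref{vgood3}. Your mechanism is sound (the product in \eqref{conclt} is indeed rescaling-invariant and bounded by $C(\overline{A})$ where $w\ge W_0$, and \eqref{good3} is invariant under $l\mapsto al$, $\ul\mapsto a^{-1}\ul$), but one detail should be made explicit: the term $\frac{\Im G_1}{2M}\in_{\al\be\mu\nu}\T^\al l^\mu\ul^\nu$ in \eqref{good3} is not controlled by the four pairings with $\T$ and $T_0$; you still need the $\widetilde{C}$-bound \eqref{clo21} on the (rescaling-invariant) components of $l^\mu\ul^\nu-l^\nu\ul^\mu$, which is admissible only because it is multiplied by $\Im G_1=O(\overline{\varep}^{1/5})$ and $\overline{\varep}$ is chosen last in the hierarchy $\log\overline{A}\ll\log\overline{R}\ll\log\widetilde{C}\ll\log\overline{\varep}^{-1}$. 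With that caveat filled in, both proofs work; the paper's is shorter and stays entirely at the level of the scalar identities \eqref{concly}, while yours has the mild advantage of isolating exactly which $\overline{A}$-level information (the product \eqref{conclt} and causality) replaces the $\overline{R}$-dependent frame control.
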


\begin{proof}[Proof of  Lemma \ref{clo101}] The conclusion of the lemma is equivalent to
\begin{equation}\label{clo50}
|P|\leq \overline{R}^2/2\quad\text{ on }\Sigma_{r_0}\setminus\Sigma_{\overline{R}}.
\end{equation}
To prove this, we recall that we still have the flexibility to fix $\overline{R}$ sufficiently large depending on $\overline{A}$. In view of \eqref{asymp1}, for \eqref{clo50} it suffices to prove that
\begin{equation}\label{clo51}
|\pr_\al P|\leq C(\overline{A})\quad\text{ on }\Sigma_{r_0},
\end{equation}
for $\al=1,2,3$ and some constant $C(\overline{A})$ that depends only on $\overline{A}$. The bound \eqref{clo51} follows from \eqref{asymp0} and \eqref{asymp1} at points $p$ for which $r(p)\geq R(\overline{A})$. Since $P=(1-\sigma)^{-1}$ and $|1-\sigma|\geq |\Re(1-\sigma)|=|1+\T^\al\T_\a|$, the bound \eqref{clo51} also follows at points $p$ for which $r(p)\leq R(\overline{A})$ and $\g_p(\T,\T)\notin[-3/2,-1/2]$.

It remains to prove the bound \eqref{clo51} at points $p\in\Sigma_{r_0}$ for which $r(p)\leq R(\overline{A})$ and $\g_p(\T,\T)\in[-3/2,-1/2]$. Since $|1-\sigma|\leq C(\overline{A})$ on $\Sigma_1$, we have $|y|+|z|\geq C(\overline{A})^{-1}$ on $\Sigma_1$. It follows from \eqref{concly} that
\begin{equation}\label{clo52}
|\D_\be z\D^\be z|+|\D_\be y\D^\be y|\leq C(\overline{A})\quad\text{ on }\Sigma_{r_0}.
\end{equation}
In addition, $\T(\sigma)=0$ therefore $\T^\al\D_\al z=\T^\al\D_\al y=0$. Since $\g_p(\T,\T)\in[-3/2,-1/2]$ it follows that $\T_p$ is timelike, thus the vectors $Y^\al=\D^\al y$ and $Z^\al=\D^\al z$ are spacelike at the point $p$. The elliptic bounds \eqref{prel1} show, in fact, that $\sum_{\be=1}^3|\pr_\al y|^2\leq C(\overline{A})|\D_\be y\D^\be y|$ and $\sum_{\be=1}^3|\pr_\al z|^2\leq C(\overline{A})|\D_\be z\D^\be z|$, so  \eqref{clo51} follows from \eqref{clo52}.
\end{proof}

We summarize the main conclusions of our analysis so far in the following proposition:

\begin{proposition}\label{clo110}
There is a constant $\widetilde{C}=\widetilde{C}(\overline{A})$ sufficiently large such that
\begin{equation*}
|1-\sigma|\geq(\widetilde{C}r)^{-1}\quad\text{ on }\Sigma_{1-\overline{\varep}},
\end{equation*}
provided that $\overline{\varep}$ is sufficiently small (depending on $\overline{A}$). Therefore the frame $e_{(\al)}$, $\al=1,\ldots,4$, the Mars--Simon tensor $\SS$, and the functions $P,y,z,G_1$ are well defined in $\Phi_1[(-\overline{\varep},\overline{\varep})\times E_{1-\overline{\varep}}]$. In addition, the identities and inequalities \eqref{clo0}, \eqref{big1}, \eqref{clo21}, \eqref{se6'}, \eqref{good3}, \eqref{vgood3}, \eqref{se10}, \eqref{he2}, \eqref{concly}, \eqref{conclt} hold in $\Phi_1[(-\overline{\varep},\overline{\varep})\times E_{1-\overline{\varep}}]$.
\end{proposition}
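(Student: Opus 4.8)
The plan is to assemble Proposition \ref{clo110} as a bookkeeping summary of the work already done in this section, with the one substantive point being the passage from the bound \eqref{clo1} on $\Sigma_{r_0}\setminus\Sigma_{\overline R}$ to a bound of the form $|1-\sigma|\ge(\widetilde C r)^{-1}$ on all of $\Sigma_{1-\overline\varep}$. First I would invoke Lemma \ref{clo101}, which shows that the constant $r_0$ entering \eqref{clo1} can be taken equal to $1$; combined with \eqref{clo2} this already gives $|1-\sigma|^{-1}\le\widetilde C r$ on $\Phi_1[(-\overline\varep,\overline\varep)\times E_1]$, hence in particular on $\Sigma_1=\Phi_0(E_1)$. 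To extend this from $E_1$ to $E_{1-\overline\varep}$ I would use the uniform bound $|\partial_0(1-\sigma)|\le\widetilde C$ together with $|\partial_\al(1-\sigma)|\le\widetilde C$ for the remaining derivatives (which follows from \eqref{prel4} and \eqref{s12}) and integrate: since the threshold $\overline R^{-2}$ in \eqref{clo1} is a fixed constant while $\overline\varep$ is taken much smaller, moving a distance $O(\overline\varep)$ off $\Sigma_1$ changes $|1-\sigma|$ by at most $\widetilde C\overline\varep$, which is negligible compared to the lower bound. This yields the first displayed inequality of the proposition on $\Sigma_{1-\overline\varep}$, and by the same continuity argument on the full coordinate neighborhood $\Phi_1[(-\overline\varep,\overline\varep)\times E_{1-\overline\varep}]$.

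Once $|1-\sigma|$ is bounded below on $\Phi_1[(-\overline\varep,\overline\varep)\times E_{1-\overline\varep}]$, I would observe that every object constructed earlier in the section was built on the set $\Phi_1[(-\overline\varep,\overline\varep)\times E_{r_0}]$ using only this lower bound as input: the function $G_1$ via \eqref{big1} (using \eqref{clo3} and \eqref{clo3.5}, which themselves only used \eqref{clo2} and \eqref{clo0}), the null eigenvectors $l,\ul$ of $\FF$ via Lemma \ref{nullpair}, the completion to a null frame $e_{(\al)}$ via \eqref{clo20}--\eqref{clo21}, and the functions $P=y+iz=(1-\sigma)^{-1}$. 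Setting $r_0=1$ by Lemma \ref{clo101}, all of these are therefore well defined on $\Phi_1[(-\overline\varep,\overline\varep)\times E_1]$, and then on $E_{1-\overline\varep}$ by the same short continuity step once $|1-\sigma|$ is controlled there. I would then simply note that each of the identities \eqref{clo0}, \eqref{big1}, \eqref{clo21}, \eqref{se6'}, \eqref{good3}, \eqref{vgood3}, \eqref{se10}, \eqref{he2}, \eqref{concly}, \eqref{conclt} was derived as a pointwise identity or estimate on $\Phi_1[(-\overline\varep,\overline\varep)\times E_{r_0}]$, so with $r_0=1$ and the extension to $E_{1-\overline\varep}$ they hold on the stated set verbatim.

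The only genuinely new content is thus Lemma \ref{clo101}, which has already been proved in the excerpt, so the proof of the proposition is essentially a matter of citing Lemma \ref{clo101} to fix $r_0=1$, recording the resulting lower bound \eqref{clo2} on $\Sigma_1$, and carrying out the elementary $O(\overline\varep)$ continuity extension to $E_{1-\overline\varep}$. I expect the only point needing any care to be making sure the continuity argument is applied on the full $3+1$ neighborhood $\Phi_1[(-\overline\varep,\overline\varep)\times E_{1-\overline\varep}]$ rather than just on $\Sigma_1$, so that the frame $e_{(\al)}$ and the functions $P,y,z,G_1$ — some of which involve derivatives transverse to $\Sigma^0$ — are legitimately defined on the whole slab; this is exactly the kind of bound supplied by \eqref{prel3} and \eqref{prel4}. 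Everything else is a transcription of results established earlier in Section \ref{almostKerr}.
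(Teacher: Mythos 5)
Your proposal is correct and matches the paper's own (largely implicit) argument: the paper also treats Proposition \ref{clo110} as a summary whose only new ingredient is Lemma \ref{clo101}, which by the minimality in the definition of $r_0$ and continuity of $\sigma$ forces $r_0=1$, after which the lower bound \eqref{clo2} and all the previously derived identities transfer to the slightly larger slab by the same $O(\overline{\varep})$ continuity step using the derivative bounds \eqref{prel4}. The only point worth keeping in mind, which your final paragraph essentially covers, is that the same continuity argument is also what extends the smallness bound \eqref{clo0} (originating from assumption {\bf{PK}} on $\Sigma^0\cap\overline{\E}$) to the full $3+1$ slab over $E_{1-\overline{\varep}}$, since several of the listed estimates use it and not only the lower bound on $|1-\sigma|$.
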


In view of the assumption ${\bf{GR}}$ on the orbits of $\T$, it
follows that the functions $y=\Re[(1-\sigma)^{-1}]$ and
$z=\Im[(1-\sigma)^{-1}]$ are well defined smooth functions on
$\E$.

\section{Properties of the function $y$}\label{propertiesy}

Our next goal is to understand the behaviour of the function $y$ defined in \eqref{se90} on $\Sigma_1$. Most of our analysis in the next section depends on having sufficiently good information on $y$, both in a small neighborhood of the bifurcation sphere $S_0$ and away from this small neighborhood.\footnote{For comparison $y=r/(2M)$, in the Kerr space of mass $M$ and angular momentum $J$, in standard Boyer--Lindquist coordinates.} In this section we use the notation $\widetilde{C}$ to denote various constants in $[1,\infty)$ that may depend only on the main constant $\overline{A}$. We assume implicitly that $\overline{\varep}^{-1}$ is sufficiently large compared to all such constants $\widetilde{C}$.

\subsection{Control of $y$ in a neighborhood of $S_0$}

We analyze first the function of $y$ in a neighborhood of the bifurcation sphere $S_0$.

\begin{lemma}\label{yonS}
On the bifurcation sphere $S_0$,
\begin{equation}\label{clo70}
|y-(1+\sqrt{1-4B})/2|+\sum_{\al=0}^3|\pr_\al y|\leq\widetilde{C}\overline{\varep}^{1/40}.
\end{equation}
Moreover, there are constants $r_1=r_1(\overline{A})>1$ and $\widetilde{C}_1=\widetilde{C}_1(\overline{A})\gg 1$ such that
\begin{equation}\label{clo70.5}
\widetilde{C}_1(r-1)^2+\widetilde{C}_1\overline{\varep}^{1/40}\geq y-(1+\sqrt{1-4B})/2\geq\widetilde{C}_1^{-1}(r-1)^2-\widetilde{C}_1\overline{\varep}^{1/40}\quad\text{ on }\Sigma_{1}\setminus\Sigma_{r_1}.
\end{equation}
\end{lemma}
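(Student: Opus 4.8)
\textbf{Proof proposal for Lemma \ref{yonS}.}

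The plan is to exploit the key structural facts assembled in Proposition \ref{clo110}, in particular the approximate identity \eqref{concly} for $\D_\be y\D^\be y$ and the gradient bounds \eqref{vgood3}, together with the geometry of the bifurcate horizon encoded in the optical functions $u,\underline u$ from subsection \ref{optical}. First I would establish the value of $y$ on $S_0$. On $S_0$ the Killing field $\T$ is tangent to both null hypersurfaces $\HH^\pm$ and is null there (the non-expanding horizon condition and $\g(\T,\T)|_{S_0}$ vanishes, since $\Re\sigma=-\T^\al\T_\al$ and $\T$ is null along the horizon). Thus on $S_0$ the relation $-\T^\al\T_\al = 1 - y/(y^2+z^2) = 0$ holds exactly, i.e. $y^2+z^2 = y$ on $S_0$. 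Combined with the ``$z$ almost constant'' information — from \eqref{vgood3} the tangential derivatives $\D_{(1)}z,\D_{(2)}z$ are $O(\overline\varep^{1/5})$, and along the generators of $\HH^\pm$ one controls $\D_{(3)}z,\D_{(4)}z$ too, so $z$ is nearly constant on the connected sphere $S_0$ up to $\widetilde C\overline\varep^{1/40}$ — and the conclusion \eqref{concly} that $\D_\be z\D^\be z = (B-z^2)/(4M^2(y^2+z^2)) + O(\overline\varep)$, I would argue that on $S_0$ the spacelike gradient $\D_\be z$ must be small, forcing $z^2 \approx B$ on $S_0$. (Here one uses that $S_0$ is a compact surface on which $z$ is nearly constant, so its intrinsic gradient is small, and the normal directions are controlled by the horizon transport.) Feeding $z^2 = B + O(\overline\varep^{1/40})$ into $y^2 + z^2 = y$ gives $y^2 - y + B = O(\overline\varep^{1/40})$, hence $y = (1\pm\sqrt{1-4B})/2 + O(\overline\varep^{1/40})$; the asymptotic normalization $y = r/(2M) + O(1)$ and monotonicity from the asymptotic region select the larger root $(1+\sqrt{1-4B})/2$. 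The gradient bound $\sum_\al|\pr_\al y|\le \widetilde C\overline\varep^{1/40}$ on $S_0$ then follows: the tangential derivatives are $O(\overline\varep^{1/5})$ by \eqref{vgood3}, and from \eqref{concly} the full squared gradient $\D_\be y\D^\be y = (y^2-y+B)/(4M^2(y^2+z^2)) + O(\overline\varep) = O(\overline\varep^{1/40})$ on $S_0$, so by the elliptic bounds \eqref{prel1} (as used at the end of the proof of Lemma \ref{clo101}) all coordinate derivatives are small.

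For the second statement \eqref{clo70.5}, the idea is to Taylor expand $y$ off $S_0$ along $\Sigma_1$ and show the first nonvanishing term is quadratic with a definite sign. Since $\pr_\al y$ is $O(\overline\varep^{1/40})$ on $S_0$, the linear term in $r-1$ is negligible, and I need to identify the Hessian of $y$ in the direction transverse to $S_0$ inside $\Sigma_1$. I would use \eqref{he2}, the formula for $\D_\al\D_\be P$, taking real parts and contracting with the appropriate spacelike direction tangent to $\Sigma_1$ and normal to $S_0$ (a combination of $l,\ul$). Using \eqref{se6'} for the components of $\FF$ (hence of $F$), \eqref{conclt} for $(\T^\al l_\al)(\T^\be\ul_\be)$, and discarding the $\SS$-term (which is $O(\overline\varep)$ by \eqref{clo0}) and the $\FF^2\T_\al\T_\be$ term, one obtains that the relevant second derivative of $y$ equals a strictly positive expression in $y,z,B$ evaluated at the horizon value, up to $O(\overline\varep^{1/40})$. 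The strict positivity is the analog of the statement that $y$ (i.e. $r$) attains a nondegenerate minimum on the horizon $r = r_+$ in Kerr, where $\D_\be y\D^\be y = (y^2-y+B)/(4M^2(y^2+z^2))$ vanishes to exactly first order in $y - y_+$; equivalently the function $y^2 - y + B$ has derivative $2y_+ - 1 = \sqrt{1-4B} > 0$ at $y = y_+$ (using $B < 1/4$, which is \eqref{As-Flat2} rewritten via $\overline A$). Thus along $\Sigma_1$, near $S_0$, $y - y_+ \approx \tfrac12 (\text{Hess } y)(r-1)^2$ with a coefficient bounded above and below by positive constants depending only on $\overline A$; integrating the gradient bound twice from $S_0$ and absorbing the $O(\overline\varep^{1/40})$ remainders produces \eqref{clo70.5}, for $r_1$ close enough to $1$ and $\overline\varep$ small. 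One should also use \eqref{haw12.5}, which says $|1-r|$ is comparable to $|u|,|\underline u|$ on $\Sigma^0\cap\O_{c_0}$, to relate the coordinate distance $r-1$ to the geometrically natural parameter in which the Taylor expansion is cleanest.

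The main obstacle, I expect, is making the nondegeneracy of the Hessian of $y$ on $S_0$ quantitative and genuinely uniform — i.e. showing that the transverse second derivative of $y$ along $\Sigma_1$ is bounded below by a positive constant depending only on $\overline A$, not just nonzero. This requires carefully disentangling, in the null frame $e_{(1)},\dots,e_{(4)}$, which direction in $\mathrm{span}\{l,\ul\}$ is tangent to $\Sigma_1$ and controlling $(\T^\al l_\al)$ and $(\T^\be\ul_\be)$ individually on $S_0$ (not just their product from \eqref{conclt}); near the bifurcation sphere these are each comparable to the corresponding optical function, which is where the construction of $u,\underline u$ and the bounds \eqref{haw10}--\eqref{haw12.5} enter. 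A secondary technical point is propagating the "$z$ nearly constant on $S_0$" fact: one must integrate $\D z$ along curves in $S_0$ and along the null generators, keeping track that $S_0$ has diameter bounded in terms of $\overline A$, so that the accumulated error stays $\widetilde C\overline\varep^{1/40}$ rather than degrading. Once these uniform lower bounds are in hand, the rest is a routine two-fold integration of the derivative estimates, of the same flavor as the proof of \eqref{clo3.5} from \eqref{clo3}.
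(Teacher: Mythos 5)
Your first step rests on two false premises. (i) $\T$ is not null along $\HH^\pm$ or on $S_0$: by assumption {\bf{SBS}} it is tangent to both null hypersurfaces, hence tangent to the spacelike sphere $S_0$, and for $J\neq 0$ it is spacelike there (the bifurcation sphere lies in the ergoregion; only the Hawking field $\K$ is null along the horizon). So the identity $\g(\T,\T)|_{S_0}=0$, i.e. $y^2+z^2=y$ on $S_0$, is wrong — in Kerr it holds only at the poles. (ii) You misread \eqref{vgood3}: it bounds $\D_{(1)}y,\D_{(2)}y$ and $\D_{(3)}z,\D_{(4)}z$, i.e. the angular derivatives of $y$ and the \emph{null} derivatives of $z$; it says nothing about $\D_{(1)}z,\D_{(2)}z$, and indeed $z$ is not nearly constant on $S_0$ (in Kerr $z=a\cos\theta/(2M)$ there), so ``$z^2\approx B$ on $S_0$'' is also false. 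The correct route, which your proposal never supplies, is to obtain the full gradient bound first: since $\T$ is tangent to $S_0$ and $L,\Lb$ are eigenvectors of $\FF$ on $S_0$, one gets $L(\sigma)=\Lb(\sigma)=0$ there, hence $e_{(3)}(y)=e_{(4)}(y)=0$, and only then does \eqref{concly} yield $|y^2-y+B|\leq\widetilde{C}\overline{\varep}^{1/40}$ on $S_0$. Smallness of the Lorentzian norm $\D_\al y\D^\al y$ cannot by itself control the null components of $\D y$, and the elliptic argument you borrow from the end of Lemma \ref{clo101} requires $\T$ timelike, which fails on $S_0$; so your gradient bound, as written, is circular and incomplete.

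The root selection is the second genuine gap: ``monotonicity from the asymptotic region'' is not available — nothing proved at this stage prevents $y$ from descending from $r/(2M)$ at infinity to a value near $(1-\sqrt{1-4B})/2$ on $S_0$ (the regularity of the level sets of $y$ is only established afterwards, using this very lemma). The paper rules out the small root by introducing $y'$ with $y'=y$ on $\HH^+$ and $\Lb(y')=0$, writing $y-y'=u\underline{u}f$, computing $f=\frac{1-2y}{8M^2(y^2+z^2)}+O(\overline{\varep}^{1/5})$ on $S_0$ from the wave equation $\square_\g y=\frac{2y-1}{4M^2(y^2+z^2)}+E$, and noting that if $y$ were near the small root then $f>0$, so $y$ would dip strictly below that root on $\Sigma_1$ (where $u\underline{u}\leq 0$), producing an interior minimum of $y$ on $\Sigma_1$ at which $\D_\al y\D^\al y\leq 0$, contradicting \eqref{concly}. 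Your proposal has no substitute for this argument. For \eqref{clo70.5}, your Hessian/Taylor sketch is plausible in spirit, but the uniform positivity of the transverse Hessian and the individual control of $\T^\al l_\al$, $\T^\al\ul_\al$ are exactly what is left unproved (as you acknowledge); the paper instead reuses the decomposition $y=y'+fu\underline{u}$, now with $f\leq-\widetilde{C}^{-1}$ because $y\approx(1+\sqrt{1-4B})/2>1/2$ makes $1-2y<0$, together with $u\underline{u}\sim-(r-1)^2$ from \eqref{haw12.5}.
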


\begin{proof}[Proof of Lemma \ref{yonS}] Recall the vector-fields $L$, $\Lb$ defined in a neighborhood of $S_0$ in section \ref{preliminaries}. It is easy to prove, see for example \cite[Section 5]{IoKl}, that
\begin{equation*}
\FF_{\al\be}L^\be=\FF(L,\Lb)L_\al\quad\text{ and  }\quad \FF_{\al\be}\Lb^\be=\FF(\Lb,L)\Lb_\al\quad\text{ on }S_0.
\end{equation*}
Since the vectors $l$ and $\ul$ constructed in Lemma \ref{nullpair} are the unique solutions of the systems of equations $(\FF_{\al\be}\pm f\g_{\al\be})V^\be=0$, up to  rescaling and relabeling, we may assume that $L=l$ and $\Lb=\ul$ on $S_0$. Thus, we may also assume that $e_{(1)}, e_{(2)}$ are tangent to $S_0$.  Since $\T$ is tangent to $S_0$,
\begin{equation*}
L(\sigma)=L^\be\si_\be=2L^\be\T^\al\FF_{\al\be}=0\quad\text{ on }S_0.
\end{equation*}
Similarly, $\Lb(\sigma)=0$ on $S_0$. Using also \eqref{vgood3}, we conclude that
\begin{equation}\label{clo71}
e_{(3)}(y)=e_{(4)}(y)=0\quad\text{ and }\quad|e_{(1)}(y)|+|e_{(2)}(y)|\leq\widetilde{C}\overline{\varep}^{1/5}\quad\text{ on }S_0.
\end{equation}

The inequality on the gradient of $y$ in \eqref{clo70} follows from \eqref{clo71}. For the remaining inequality we use first \eqref{concly}. It follows from \eqref{clo71} that $|\D_\be y\D^\be y|\leq\widetilde{C}\overline{\varep}^{2/5}$ on $S_0$, thus
\begin{equation*}
|y^2-y+B|\leq\widetilde{C}\overline{\varep}^{1/40}\quad\text{ on }S_0.
\end{equation*}
Since $B=J^2/(4M^4)\in[0,1/4)$ (see \eqref{As-Flat2} and \eqref{mainconst}), it follows that
\begin{equation}\label{clo72}
|y-(1+\sqrt{1-4B})/2|\leq\widetilde{C}\overline{\varep}^{1/40}\,\,\text{ on }S_0\,\,\text{ or }\,\,|y-(1-\sqrt{1-4B})/2|\leq\widetilde{C}\overline{\varep}^{1/40}\,\,\text{ on }S_0.
\end{equation}

To eliminate the second alternative we start by deriving
 a wave equation for $y$. Since $\D^\mu\D_\mu\si=-\FF^2$, $\D^\mu\si\D_\mu\si=-\FF^2\Re\si$ (see \eqref{Ernst1}), and $-\FF^2=(1-\sigma)^4(1+G_1)^2/(4M^2)$ we derive
\begin{equation*}
\begin{split}
\D^\mu\D_\mu P&=(1-\sigma)^{-2}\D^\mu\D_\mu\sigma +2(1-\sigma)^{-3}\D^\mu\si\D_\mu\si\\
&=\frac{(1+G_1)^2}{4M^2}(1-\si)(1+\overline{\si})=\frac{2\overline{P}-1}{4M^2P\overline{P}}(1+G_1)^2.
\end{split}
\end{equation*}
Thus, using \eqref{big1}
\begin{equation}\label{clo80}
\D^\mu\D_\mu y=\frac{2y-1}{4M^2(y^2+z^2)}+E,\quad|E|\leq \widetilde{C}\overline{\varep}^{1/5},\quad\text{ on }\Sigma_{1-\overline{\varep}}.
\end{equation}

We  now compare $y$ with a function $y'$ which coincides
with $y$ on $\HH^+$  and verifies $\Lb(y)=0$.
We  use the notation in section \ref{preliminaries}. For $\varep_1=\varep_1(\overline{A})\in(0,c_0]$ sufficiently small we define the function
\begin{equation}\label{clo81}
y':\O_{\varep_1}\to\mathbb{R},\quad y'=y\,\text{ on }\,\HH^+\cap\O_{\varep_1},\quad\Lb(y')=0\text{ in }\O_{\varep_1}.
\end{equation}
The functions $y$ and $y'$ are smooth on $\O_{\varep_1}$, and, using \eqref{clo71} and the definition of $y'$
\begin{equation}\label{clo82}
y-y'=0\quad\text{ on }(\HH^+\cup\HH^-)\cap\O_{\varep_1}.
\end{equation}
In addition,  using again \eqref{clo71},  we infer that
 $|e_{(\a)}(y')|\leq\widetilde{C}\overline{\varep}^{1/5}$ on $S_0$, for all  $\a=1,2,3,4$.  Using \eqref{good3} $e_{(a)}(y')=(2M)^{-1}\Im G_1\in_{(\al)(a)(\mu)(\nu)}\T^\al l^\mu\ul^\nu$ on $S_0$, $a=1,2$. It follows from \eqref{big1} and the inequality $|e_{(\a)}(y')|\leq\widetilde{C}\overline{\varep}^{1/5}$ that $|\D_{(a)}\D_{(a)}y'|\leq\widetilde{C}\overline{\varep}$ on $S_0$, $a=1,2$. Using $\Lb(y')=0$ and $|e_{(\a)}(y')|\leq\widetilde{C}\overline{\varep}^{1/5}$, we have $|\D_{(3)}\D_{(4)}y'|+|\D_{(4)}\D_{(3)}y'|\leq\widetilde{C}\overline{\varep}$ on $S_0$. Therefore,
\begin{equation}\label{clo83}
|\square_\g y'|\leq\widetilde{C}\overline{\varep}^{1/5}\quad\text{ on }S_0.
\end{equation}
 In view of \eqref{clo82}, there is $\varep_2=\varep_2(\overline{A})\in(0,\varep_1)$ such that $y-y'=u\underline{u}f$ in $\O_{\varep_2}$, where $f:\O_{\varep_2}\to\mathbb{R}$ is smooth. Since $u=\underline{u}=0$ and $2\D^\al u\D_\al\underline{u}=-2$ on $S_0$, it follows from \eqref{clo80} and \eqref{clo83} that
\begin{equation}\label{clo84}
f=-(1/2)\D^\al\D_\al(y-y')=\frac{1-2y}{8M^2(y^2+z^2)}+E',\quad|E'|\leq \widetilde{C}\overline{\varep}^{1/5},\quad\text{ on }S_0.
\end{equation}
To summarize, $y=y'+u\underline{u}f$ in $\O_{\varep_2}$, where $f$ satisfies \eqref{clo84} on $S_0$.

We eliminate now the second alternative in \eqref{clo72}. The main
point is that if $y$ is close to $(1-\sqrt{1-4B})/2<1/2$ on $S_0$
then $f$ is strictly positive on $S_0$ (see \eqref{clo84}). In
quantitative terms, there is
$\varep_3=\varep_3(\overline{A})\in(0,\varep_2)$ such that
$f\geq\widetilde{C}^{-1}$ in $\O_{\varep_3}$. Since $u\underline
u\leq 0$ on $\Sigma_1\cap\O_{\varep_3}$ and $y'\leq
(1-\sqrt{1-4B})/2+\widetilde{C}\overline{\varep}^{1/40}$ on
$\O_{\varep_3}$ (using the second alternative in \eqref{clo72} and
the construction of $y'$), it follows that
\begin{equation*}
y\leq (1-\sqrt{1-4B})/2-\widetilde{C}^{-1}|u\underline{u}|+\widetilde{C}\overline{\varep}^{1/40}\quad\text{ on }\Sigma_1\cap\O_{\varep_3}.
\end{equation*}
In particular, using \eqref{haw12.5}, $y\leq
(1-\sqrt{1-4B})/2-\widetilde{C}^{-1}$ at some point in $\Sigma_1$
(provided, of course, that $\overline{\varep}$ is sufficiently
small depending on $\overline{A}$). Using also \eqref{asymp3}, it
follows that function $y:\Sigma_1\to\mathbb{R}$ attains its
minimum at some point $p\in\Sigma_1$, and $y(p)\leq
(1-\sqrt{1-4B})/2-\widetilde{C}^{-1}$. Thus
$T_1(y)=T_2(y)=T_3(y)=0$ at $p$ (where $T_0,T_1,T_2,T_3$ is the
orthonormal frame along $\Sigma_1$ defined in assumption
{\bf{PK}}), and
\begin{equation*}
\D_\al y\D^\al y=-(T_0(y))^2\leq 0
\end{equation*}
at p. This is in contradiction, however, with the identity
\eqref{concly} and the inequality $y(p)\leq
(1-\sqrt{1-4B})/2-\widetilde{C}^{-1}$. We conclude that the first
alternative in \eqref{clo70} holds.

We prove now the second statement of the lemma: since $y$ is close
to $(1+\sqrt{1-4B})/2>1/2$ on $S_0$, it follows from \eqref{clo84}
there is $\varep_3=\varep_3(\overline{A})\in(0,\varep_2)$ such
that $f\in[-\widetilde{C},-\widetilde{C}^{-1}]$ in
$\O_{\varep_3}$. Also $|y'-(1+\sqrt{1-4B})/2|\leq
\widetilde{C}\overline{\varep}^{1/40}$ in $\O_{\varep_3}$ and
$u\underline{u}/(r-1)^2\in [-\widetilde{C},-\widetilde{C}^{-1}]$
(see \eqref{haw12.5}). The inequalities in \eqref{clo70.5} follow
since $y=y'+fu\underline{u}$.
\end{proof}

\subsection{Regularity properties of the function $y$ away from $S_0$}
We derive now the main properties of the level sets of the function $y$. Recall first  the main inequality proved in Lemma \ref{yonS}: there are constants $r_1=r_1(\overline{A})>1$ and $\widetilde{C}_1=\widetilde{C}_1(\overline{A})\gg 1$ such that
\begin{equation}\label{extrecall}
\widetilde{C}_1(r-1)^2+\widetilde{C}_1\overline{\varep}^{1/40}\geq y-(1+\sqrt{1-4B})/2\geq\widetilde{C}_1^{-1}(r-1)^2-\widetilde{C}_1\overline{\varep}^{1/40}\quad\text{ on }\Sigma_{1}\setminus\Sigma_{r_1}.
\end{equation}
We define
\begin{equation}\label{ext4}
y_0=(1+\sqrt{1-4B})/2+\widetilde{C}_2^{-1},
\end{equation}
where we fix $\widetilde{C}_2=\widetilde{C}_2(\overline{A})$ a
sufficiently large constant depending on the constants
$\widetilde{C}_1$, $r_1$ in \eqref{extrecall} and $\overline{c}$
in Proposition \ref{summary}. As in \cite[Section 8]{IoKl}, for
$R\in[y_0,\infty)$ we define
\begin{equation}\label{ext5}
\begin{split}
&\mathcal{V}_R=\{p\in\Sigma_1:y(p)<R\};\\
&\mathcal{U}_R=\text{ the connected component of }\mathcal{V}_R\text{ whose closure in }\Sigma^0\text{ contains }S_0.
\end{split}
\end{equation}
In view of \eqref{extrecall},
\begin{equation*}
\Sigma_1\setminus\Sigma_{1+(4\widetilde{C}_1\widetilde{C}_2)^{-1/2}}\subseteq\mathcal{V}_{y_0}\cap (\Sigma_1\setminus\Sigma_{r_1})\subseteq\mathcal{V}_{y_0+\widetilde{C}_2^{-1}}\cap (\Sigma_1\setminus\Sigma_{r_1})\subseteq\Sigma_1\setminus\Sigma_{1+(4\widetilde{C}_1\widetilde{C}_2^{-1})^{1/2}},
\end{equation*}
provided that $\widetilde{C}_2$ is sufficiently large and $\overline{\varep}$ is sufficiently small.     In particular,  we deduce,
\begin{equation}\label{ext5.5}
\Sigma_1\setminus\Sigma_{1+(4\widetilde{C}_1\widetilde{C}_2)^{-1/2}}\subseteq\mathcal{U}_{y_0}\subseteq\mathcal{U}_{y_0+\widetilde{C}_2^{-1}}\subseteq\Sigma_1\setminus\Sigma_{1+(4\widetilde{C}_1\widetilde{C}_2^{-1})^{1/2}}.
\end{equation}

For $p=\Phi_1(0,q)\in\Sigma_1$ and $r\leq\varep_0$  we define,
\begin{equation*}
B_r(p)=\Phi_1\big(\{(t,q')\in(-\varep_0,\varep_0)\times E_{1-\varep_0}:t^2+|q-q'|^2<r^2\}\big).
\end{equation*}
For any set $U\subseteq\Sigma_1$ let $\delta_{\Sigma_1}(U)$ denote its boundary in $\Sigma_1$. Clearly, if $p\in\delta_{\Sigma_1}(\mathcal{U}_R)$ for some $R\geq y_0$ then $y(p)=R$.

We define the vector-field $Y=\D^\al y\D_\al$ in
$\Phi_1[(-\overline{\varep},\overline{\varep})\times
E_{1-\overline{\varep}}]$ and its projection $Y'$ along  the
hypersurface $\Sigma_{1-\overline{\varep}}$,
\begin{equation}\label{ext9}
Y'=Y+\g(Y,T_0)T_0=\sum_{\al=1}^{3}(Y')^\al\pr_\al.
\end{equation}
The vector-field $Y'$ is smooth, tangent to the hypersurface
$\Sigma_{1-\overline{\varep}}$, and
\begin{equation*}
\sum_{\al=1}^3|(Y')^\al|\leq\widetilde{C}\delta_0^{-1}\quad\text{
on }\Sigma_{1-\varep}.
\end{equation*}
In addition,
\begin{equation}\label{alexnew1}
Y'(y)=\g(Y',Y)=\g(Y,Y)+\g(Y,T_0)^2\geq \g(Y,Y)=\D_\al y\D^\al y.
\end{equation}
In particular, if $p\in\delta_{\Sigma_1}(\mathcal{U}_R)$ for some
$R\geq y_0$ then $y(p)=R$ thus, using \eqref{concly},
$Y'(y)(p)\geq \widetilde{C}^{-1}$. Therefore if
$p\in\delta_{\Sigma_1}(\mathcal{U}_R)$ then
\begin{equation}\label{ext015}
\{x\in
B_{\delta}(p)\cap\Sigma_1:y(x)<R\}=B_{\delta}(p)\cap\mathcal{U}_R,
\end{equation}
for any $\delta\leq\delta_1=\delta_1(\overline{A})>0$.

We prove now that the regions $\mathcal{U}_R$, $R\geq y_0$,
increase in a controlled way.

\begin{lemma}\label{Uincrease}
There is $\delta_2=\delta_2(\overline{A})>\in(0,\delta_1)$ such
that, for any $\delta\leq \delta_2$ and $R\in[y_0,\infty)$,
\begin{equation}\label{ext7}
\cup_{p\in \mathcal{U}_R}(B_{\delta^3}(p)\cap\Sigma_1)\subseteq \mathcal{U}_{R+\delta^2}\subseteq\cup_{p\in \mathcal{U}_R}(B_{\delta}(p)\cap\Sigma_1).
\end{equation}
In addition
\begin{equation}\label{ext8}
\cup_{R\geq y_0}\mathcal{U}_R=\Sigma_1
\end{equation}
and
\begin{equation}\label{alexnew2}
\mathcal{U}_R=\mathcal{V}_R\qquad\text{ for any }R\geq y_0.
\end{equation}
\end{lemma}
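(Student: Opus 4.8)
The heart of Lemma \ref{Uincrease} is the second inclusion in \eqref{ext7} together with the connectedness/exhaustion statements \eqref{ext8}--\eqref{alexnew2}. The basic mechanism is that the function $y$ has a uniformly nonvanishing gradient away from $S_0$ (this is \eqref{concly} combined with the lower bound $|y|+|z|\ge \widetilde C^{-1}$, which forces $\D_\al y\D^\al y\ge\widetilde C^{-1}$ on the region where $y\ge y_0$, since $y_0>1/2$ makes $y^2-y+B$ bounded below there once $y$ is bounded — and $y$ \emph{is} bounded on the sublevel sets as we will see), so that the level sets $\{y=R\}$ are regular hypersurfaces in $\Sigma_1$ and the sublevel sets grow at a controlled rate. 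I would organize the proof around the projected vector-field $Y'$ of \eqref{ext9}, whose flow moves transversally across level sets of $y$ at a rate $Y'(y)=\g(Y',Y)\ge\g(Y,Y)\ge\widetilde C^{-1}$ by \eqref{alexnew1} and \eqref{concly}.

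First I would prove the easier inclusion $\cup_{p\in\mathcal U_R}(B_{\delta^3}(p)\cap\Sigma_1)\subseteq\mathcal U_{R+\delta^2}$. Here one notes that $|\pr_\al y|\le\widetilde C$ on $\Sigma_1$ (from \eqref{good3}, \eqref{clo21}, \eqref{big1}), so if $p\in\mathcal U_R$ and $x\in B_{\delta^3}(p)\cap\Sigma_1$ then $y(x)<R+\widetilde C\delta^3<R+\delta^2$ for $\delta$ small; moreover the straight segment in $\Phi_1$-coordinates from $p$ to $x$ stays inside $B_{\delta^3}(p)$ and hence inside $\{y<R+\delta^2\}$, and it is connected to $S_0$ through $\mathcal U_R$, so $x$ lies in the connected component $\mathcal U_{R+\delta^2}$. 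For the reverse inclusion $\mathcal U_{R+\delta^2}\subseteq\cup_{p\in\mathcal U_R}(B_\delta(p)\cap\Sigma_1)$, I would argue by a continuity/flow argument: given $q\in\mathcal U_{R+\delta^2}$, connect $q$ to $S_0$ by a path in $\mathcal U_{R+\delta^2}$, and follow the flow of $-Y'$ (which strictly decreases $y$ at rate $\ge\widetilde C^{-1}$) starting from $q$; in time $O(\delta^2)$ the flow reaches the sublevel set $\{y<R\}$, having moved a coordinate distance $O(\delta^2\cdot\widetilde C)\le\delta$, and because $Y'$ is tangent to $\Sigma_1$ and the trajectory can be shown to stay in the relevant connected component (using \eqref{ext015} to control the local geometry of $\mathcal U_R$ near its boundary), the endpoint lies in $\mathcal U_R$. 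This gives $q\in B_\delta(p)\cap\Sigma_1$ with $p\in\mathcal U_R$. One must choose $\delta_2$ small depending only on $\overline A$ so that all these trajectories stay inside $\Phi_1[(-\overline\varep,\overline\varep)\times E_{1-\overline\varep}]$ where all the estimates are valid; the asymptotic behavior \eqref{asymp3}, $y\sim r/(2M)$, guarantees that $\mathcal U_R$ stays in a bounded region $\{r\le\widetilde C R\}$, which is where one needs the estimates.

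For \eqref{ext8}, I would use that $y$ is proper and bounded below on $\Sigma_1$: \eqref{asymp3} shows $y\to\infty$ as $r\to\infty$, and Lemma \ref{yonS} (specifically the first alternative in \eqref{clo70}, now established) pins $y$ near $(1+\sqrt{1-4B})/2$ on $S_0$ and the lower bound \eqref{clo70.5} shows $y$ increases away from $S_0$ inside $\Sigma_1$; combined with $\g(Y,Y)\ge\widetilde C^{-1}$ wherever $y\ge y_0$, $y$ has no critical points with $y\ge y_0$, so by iterating \eqref{ext7} one reaches any point of $\Sigma_1$ in finitely many steps, giving $\cup_R\mathcal U_R=\Sigma_1$. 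Finally \eqref{alexnew2}, that $\mathcal U_R=\mathcal V_R$, i.e. $\mathcal V_R$ is connected: suppose $\mathcal V_R$ had a component $W$ not containing $S_0$ in its closure; then $y\ge y_0>1/2$ on all of $W$ (since $\mathcal U_{y_0}$ already contains a full neighborhood of $S_0$ in $\Sigma_1\setminus\Sigma_{r_1}$, by \eqref{ext5.5}, so any other component lies in the region where $y\ge y_0$), so $y|_{\overline W}$ would attain a minimum at an interior point of $\Sigma_1$, forcing $\D_\al y\D^\al y\le 0$ there (as $\T$ is timelike and $\T(y)=0$, the argument at the end of the proof of Lemma \ref{yonS}), contradicting \eqref{concly}. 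Hence $\mathcal V_R$ is connected and equals $\mathcal U_R$.

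\textbf{Main obstacle.} The delicate point is the reverse inclusion in \eqref{ext7} — more precisely, showing that the $-Y'$ flow trajectory from a point of $\mathcal U_{R+\delta^2}$ actually lands in the \emph{same} connected component $\mathcal U_R$ rather than some other component of $\mathcal V_R$, and that it stays within the coordinate neighborhood where all estimates hold. This requires knowing a priori that $\mathcal V_R$ near its boundary looks like $\{Y'(y)<0\}$-flow-boxes, which is \eqref{ext015}, and combining it with a connectedness bootstrap. In the published argument this is handled by treating $\mathcal U_R$ and $\mathcal V_R$ essentially simultaneously (one proves \eqref{alexnew2} along the way), which is why the lemma bundles all four statements together; disentangling them cleanly is the part requiring care.
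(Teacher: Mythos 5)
Your proof of \eqref{alexnew2} has a genuine gap, and it sits exactly at the point the lemma is designed to address. You assert that any component $W$ of $\mathcal{V}_R$ other than $\mathcal{U}_R$ satisfies $y\geq y_0$ on $W$, citing \eqref{ext5.5}. But \eqref{ext5.5} only says that the collar $\Sigma_1\setminus\Sigma_{1+(4\widetilde{C}_1\widetilde{C}_2)^{-1/2}}$ lies in $\mathcal{U}_{y_0}\subseteq\mathcal{U}_R$, hence that $W$ avoids a neighborhood of $S_0$; it gives no lower bound for $y$ on $W$. The statement that $y\geq y_0$ away from the collar is deduced in the paper only \emph{after} \eqref{alexnew2} (see the paragraph preceding Lemma \ref{Tpseudo}, which explicitly invokes $\mathcal{U}_{y_0}=\mathcal{V}_{y_0}$), so using it here is circular. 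Without it your minimum-principle argument does not close: at an interior minimum $p$ of $y$ on $\overline{W}$ one gets $\D_\al y\D^\al y=-(T_0(y))^2\leq 0$, and \eqref{concly} then only forces $y(p)^2-y(p)+B\lesssim\overline{\varep}^{1/40}$, i.e. $y(p)$ lies (up to a small error) between the two roots $(1\pm\sqrt{1-4B})/2$ --- precisely the range a hypothetical spurious low-$y$ component far from $S_0$ would occupy. No contradiction is reached, and the possibility that $\mathcal{V}_{y_0}$ has a second component is not excluded.

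The same issue affects your argument for the second inclusion in \eqref{ext7}: you flow $q\in\mathcal{U}_{R+\delta^2}$ down by $-Y'$ and invoke \eqref{ext015} to place the endpoint in $\mathcal{U}_R$, but \eqref{ext015} applies only within distance $\delta_1$ of a point of $\delta_{\Sigma_1}(\mathcal{U}_R)$, and nothing in your setup guarantees that $q$ is that close to $\delta_{\Sigma_1}(\mathcal{U}_R)$ --- only that $y(q)<R+\delta^2$, i.e. that $q$ is close to the sublevel set $\mathcal{V}_R$, possibly to a spurious component of it (also, the bound $Y'(y)\geq\widetilde{C}^{-1}$ along the trajectory needs $y$ near or above $y_0$ there). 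You flag this as the main obstacle but do not resolve it, and your proposed resolution rests on the flawed step above. The paper resolves it without appealing to \eqref{alexnew2}: it proves the stronger inclusion \eqref{ext20} by taking a curve from $S_0$ to $q$ inside $\mathcal{U}_{R+\delta^2}\cup S_0$ and considering the \emph{first} point $q'$ leaving the open set $\mathcal{U}_R\cup\bigcup_{p\in\delta_{\Sigma_1}(\mathcal{U}_R)}(B_{\delta/4}(p)\cap\Sigma_1)$; such a $q'$ is automatically within $\delta/2$ of some $p_0\in\delta_{\Sigma_1}(\mathcal{U}_R)$, where $y(p_0)=R\geq y_0$ gives the gradient bound, and the $Y'$-flow argument together with \eqref{ext015} produces a boundary point $q'''\in\delta_{\Sigma_1}(\mathcal{U}_R)$ within $\delta/8$ of $q'$, contradicting the choice of $q'$. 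Then \eqref{alexnew2} is proved afterwards by a least-upper-bound argument in $R$: one sets $R'=\sup\{R\geq R_0:q\notin\mathcal{U}_R\}$ (finite by \eqref{ext8}) and rules out both $q\in\mathcal{U}_{R'}$ and $q\notin\mathcal{U}_{R'}$ using \eqref{ext7}, \eqref{ext20} and \eqref{ext015}. Your arguments for the first inclusion in \eqref{ext7} and for \eqref{ext8} are fine.
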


\begin{proof}[Proof of Lemma \ref{Uincrease}] The first inclusion in \eqref{ext7} is clear: since $y$ is a smooth function
in a neighborhood of $\overline{\Sigma_1}$ (see Proposition
\ref{clo110}), it follows that $y(q)<R+\delta^2$ for any
$p\in\mathcal{U}_R$ and $q\in B_{\delta^3}(p)\cap\Sigma_1$,
provided that $\delta$ is sufficiently small.

To prove the second inclusion, it suffices to prove that
\begin{equation}\label{ext20}
\mathcal{U}_{R+\delta^2}\subseteq\mathcal{U}_R\cup[\cup_{p\in \delta_{\Sigma_1}(\mathcal{U}_R)}(B_{\delta/4}(p)\cap\Sigma_1)],
\end{equation}
for $\delta$ sufficiently small, $R\ge y_0$. Assume, for contradiction, that $q$ is a point in $\mathcal{U}_{R+\delta^2}$ which does not belong to the open set in the right-hand side of \eqref{ext20}. Let $\gamma:[0,1]\to\mathcal{U}_{R+\delta^2}\cup S_0$ be a continuous curve such that $\gamma(0)\in S_0$ and $\gamma(1)=q$ (see definition \eqref{ext5}). Let $q'=\gamma(t')$, $t'\in(0,1]$, denote the first point on this curve which does not belong to the open set in the right-hand side of \eqref{ext20}. Clearly, $q'$ does not belong to the closure of $\mathcal{U}_R$ in $\Sigma_1$, thus $q'$ belongs to the closure of the set $\cup_{p\in \delta_{\Sigma_1}(\mathcal{U}_R)}(B_{\delta/4}(p)\cap\Sigma_1)$ in $\Sigma_1$. Since $\delta_{\Sigma_1}(\mathcal{U}_R)$ is a compact set (see \eqref{asymp3} and \eqref{ext5.5}), it follows that
\begin{equation}\label{ext21}
q'\in B_{\delta/2}(p_0)\cap\Sigma_1\quad\text{ for some }p_0\in\delta_{\Sigma_1}(\mathcal{U}_R).
\end{equation}

For $p\in\Sigma_{1+(4\widetilde{C}_1\widetilde{C}_2)^{-1/2}}$ and
$|t|\leq\delta'$, $\delta'>0$ sufficiently small, let
$\gamma_p(t)\subseteq \Sigma_{1}$ denote the integral curves of
the vector-field $Y'$ defined in \eqref{ext9}, starting at $p$.
Using \eqref{alexnew1}, the fact that $y(p_0)=R\geq y_0$, and
\eqref{concly}, it follows that
\begin{equation}\label{ext10}
Y'(y)\geq\widetilde{C}^{-1}\quad\text{ in
}B_\delta(p_0)\cap\Sigma_1,
\end{equation}
provided that $\delta$ is sufficiently small. With $q'\in
B_{\delta/2}(p_0)$ being the point constructed earlier, we look at
the curve $\gamma_{q'}(t)$, $t\in[-\delta^{3/2},\delta^{3/2}]$.
Clearly, this curve is included in $B_\delta(p_0)\cap\Sigma_1$,
assuming $\delta$ sufficiently  small. Using \eqref{ext10} and the
fact that $y(q')<R+\delta^2$ (since
$q'\in\mathcal{U}_{R+\delta^2}$), we derive that there is a point
$q''$ on the curve $\gamma_{q'}(t)$,
$t\in[-\delta^{3/2},\delta^{3/2}]$, such that $y(q'')<R$. It
follows from \eqref{ext015} that $q''\in\mathcal{U}_R$. Since
$q'\notin\mathcal{U}_R$ (by construction), there is a point
$q'''=\gamma_{q'}(t''')$, $t'''\in[-\delta^{3/2},\delta^{3/2}]$,
such that $q'''\in\delta_{\Sigma_1}(\mathcal{U}_R)$. It follows
that $q'\in B_{\delta/8}(q''')$, in contradiction with the fact
that $q'$ does not belong to the set in the right-hand side of
\eqref{ext20}. This completes the proof of \eqref{ext20}.

The completeness property  \eqref{ext8} follows easily from the
asymptotic formula  \eqref{asymp3} and the fact that $y$ is a
smooth function on $\Sigma_1$.

To prove \eqref{alexnew2} we notice that, in view of
\eqref{ext5.5}, it suffices to prove that $\mathcal{V}_R\cap
\Sigma_{1+(4\widetilde{C}_1\widetilde{C}_2)^{-1/2}}\subseteq
\mathcal{U}_R$, for any $R\geq y_0$. Assume, for contradiction,
that
\begin{equation}\label{ext30}
\text{ there is }R_0>y_0\text{ and }q\in
\Sigma_{1+(4\widetilde{C}_1\widetilde{C}_2)^{-1/2}}\text{ such
that }y(q)<R_0\text{ and }q\notin\mathcal{U}_{R_0}.
\end{equation}
Let $I=\{R\in[R_0,\infty):q\notin\mathcal{U}_R\}$. Since  $I$ is
bounded, due to \eqref{ext8}, we can take $R'$ its least upper
bound. We analyze two possibilities: $q\in \mathcal{U}_{R'}$ and
$q\notin\mathcal{U}_{R'}$.

If $q\in\mathcal{U}_{R'}$ then, using \eqref{ext30}, $R'>R_0$. For
$\delta>0$ sufficiently small (depending on $R'-R_0$ and
$\overline{A}$), it follows from \eqref{ext7} that there is
$R''=R'-\delta^2\geq R_0+\delta^{1/2}$ and a point $q'\in
\mathcal{U}_{R''}$ such that $|q-q'|<\delta$. However, $y(q)<R_0$,
see \eqref{ext30}, thus $y(x)<R_0+\delta^{1/2}\leq R''$ for any
$x\in B_\delta(q)$. Since
$B_\delta(q)\cap\mathcal{U}_{R''}\neq\emptyset$, it follows that
$q\in \mathcal{U}_{R''}$, in contradiction with the definition of
$R'$.

Finally, assume that $q\notin\mathcal{U}_{R'}$. Then
$q\notin\delta_{\Sigma_1}(\mathcal{U}_{R'})$,  in view of
\eqref{ext015} and \eqref{ext30}. For $\delta$ sufficiently small
(smaller than the distance between $q$ and the compact set
$\delta_{\Sigma_1}(\mathcal{U}_{R'})$), it follows from
\eqref{ext20} that $q\notin\mathcal{U}_{R'+\delta^2}$. This is in
contradiction with the definition of $R'$, which completes the
proof of \eqref{alexnew2}.
\end{proof}

\subsection{$\T$-conditional pseudo-convexity} In this subsection we prove a $\T$-conditional pseudo-convexity
property of the function $y$ away from the bifurcation sphere
$S_0$. This pseudo-convexity property, which was first observed in
\cite{IoKl2} in the case of the Kerr spaces and  used in
\cite{IoKl}, plays a key role in the Carleman estimates and the
uniqueness arguments in the next section. We remark that the main
condition for pseudo-convexity is the assumption \eqref{mainneed}
below.

Since $\mathcal{U}_{y_0}=\{p\in\Sigma_1:y(p)<y_0\}$, see
\eqref{alexnew2} and the definition \eqref{ext5}, it follows from
\eqref{ext5.5} that $y\geq (1+\sqrt{1-4B})/2+\widetilde{C}_2^{-1}$
in $\Sigma_{1+(4\widetilde{C}_1\widetilde{C}_2^{-1})^{1/2}}$.
Using also \eqref{extrecall}, it follows that
\begin{equation*}
y\geq (1+\sqrt{1-4B})/2+\widetilde{C}_2^{-2}\quad\text{ in
}\Sigma_{1+(8\widetilde{C}_1\widetilde{C}_2)^{-1/2}}.
\end{equation*}

\begin{lemma}\label{Tpseudo}
Assume $p\in\Sigma_{1+(8\widetilde{C}_1\widetilde{C}_2)^{-1/2}}$,
thus
\begin{equation}\label{mainneed}
y(p)\geq(1+\sqrt{1-4B})/2+\widetilde{C}_2^{-2}.
\end{equation}
There there is a constant $c_2=c_2(\overline{A})>0$ and $\mu=\mu(p)\in\mathbb{R}$ such that
\begin{equation}\label{mainconcl}
X^\al X^\be(\mu\g_{\al\be}(p)-\D_\al\D_\be y(p))\geq c_2|X|^2
\end{equation}
for any real vector $X$ with the property that
\begin{equation}\label{X1}
|X^\al\T_\al(p)|+|X^\al\D_\al y(p)|\leq c_2|X|.
\end{equation}
\end{lemma}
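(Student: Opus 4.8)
The plan is to reduce the pseudo-convexity inequality \eqref{mainconcl} to the approximate identities already established in Proposition \ref{clo110}, treating the Mars--Simon tensor contributions as $O(\overline{\varep})$ error terms throughout. The key is that all the relevant objects --- $\D_\al y$, $\D_\al\D_\be y$, $\D_\al\D_\be z$, $\g(\T,l)\g(\T,\ul)$ --- have by now been expressed, modulo errors of size $\widetilde{C}\overline{\varep}^{1/40}$, in terms of $y$, $z$, the null frame $e_{(1)},e_{(2)},e_{(3)},e_{(4)}$, and the known structure constants of the frame. So the strategy is: first compute the Hessian $\D_\al\D_\be y$ explicitly in the null frame by differentiating the first identity in \eqref{good3}, using \eqref{he2} to handle the second derivatives of $P$ and the connection coefficients of the frame to handle the derivatives of $l,\ul$; then choose $\mu=\mu(p)$ and check the quadratic form $\mu\g_{\al\be}-\D_\al\D_\be y$ is positive definite on the subspace where $X$ is (almost) orthogonal to both $\T$ and $\D y$.

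First I would set up the splitting. Since $\g(\T,\T)$ is negative (the bound \eqref{mainneed} together with \eqref{conclt} forces $\g(\T,l)\g(\T,\ul)>0$, so $\T$ is timelike at $p$) and $\D y$ is spacelike with $\D_\al y\D^\al y = (y^2-y+B)/(4M^2(y^2+z^2))+O(\overline{\varep})>0$ by \eqref{concly} and \eqref{mainneed}, the constraint set \eqref{X1} is (up to $O(c_2)$) a codimension-two spacelike plane, on which $|X|^2$ is comparable to the norm induced by $\g$. The task is then to show that on this plane the Hessian of $y$ is bounded above by a multiple of the metric; equivalently, that the "dangerous" directions --- where $\D_\al\D_\be y$ could be large and positive relative to $\g$ --- all lie close to $\mathrm{span}(\T, \D y)$ and hence are excluded. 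I would compute $\D_\al\D_\be y$ by taking $\D_\gamma$ of the formula $\D_\be y = \frac{1+\Re G_1}{2M}[-(\T^\al l_\al)\ul_\b + (\T^\al\ul_\al)l_\be] + \frac{\Im G_1}{2M}\in_{\al\be\mu\nu}\T^\al l^\mu\ul^\nu$. The derivatives of $\Re G_1$, $\Im G_1$ are $O(\overline{\varep}^{1/5})$ by \eqref{big1}; the derivative hitting $\T^\al l_\al$ etc. produces $\D\T$ (which is $F$, expressible via \eqref{se5}) and $\D l, \D\ul$ (the frame connection coefficients, which are bounded by $\widetilde{C}$ by \eqref{clo21} and elliptic estimates, but whose structure one must track). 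The cleanest route is probably to instead use \eqref{he2} directly: $\D_\al\D_\be y = \Re[\D_\al\D_\be P]$, and \eqref{he2} gives $\D_\al\D_\be P$ in terms of $P$, $F$, $\FF$, $\T$, and $\SS$, all of which are known. One gets
\[
\D_\al\D_\be y = \Re\Big[-P^{-1}\D_\al P\D_\be P + P^{-1}(\D_\rho P\D^\rho P)\g_{\al\be} - 2P^2 {F_\al}^\rho\FF_{\be\rho} - P^3\FF^2\T_\al\T_\be\Big] + O(\overline{\varep}),
\]
and then substitute \eqref{se5}, \eqref{se6'}, the value $-4M^2\FF^2 = (1-\si)^4(1+G_1)^2$, and $P=(1-\si)^{-1}$ to get an explicit expression in the frame. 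The term $-P^3\FF^2\T_\al\T_\be$ is a multiple of $\T_\al\T_\be$, harmless on the constraint plane; the term $-P^{-1}\D_\al P\D_\be P$ involves $\D_\al y, \D_\al z$, and by \eqref{vgood3} the components $\D_{(1)}y, \D_{(2)}y, \D_{(3)}z, \D_{(4)}z$ are all $O(\overline{\varep}^{1/5})$, so on the constraint plane (where additionally $X^\al\D_\al y$ is small) the only surviving pieces of $\D_\al P\D_\be P$ come from $\D_{(3)}y, \D_{(4)}y, \D_{(1)}z, \D_{(2)}z$; one checks these are controlled. The frame term $-2P^2{F_\al}^\rho\FF_{\be\rho}$ and the trace term $P^{-1}(\D_\rho P\D^\rho P)\g_{\al\be}$ are the ones that, after choosing $\mu$ to cancel the $\g_{\al\be}$ part, must combine to something manifestly negative (or absorbable) on the constraint plane.

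I would then pick $\mu(p)$ to be exactly the coefficient of $\g_{\al\be}$ appearing in $\D_\al\D_\be y$, i.e. $\mu = \Re[P^{-1}\D_\rho P\D^\rho P] + (\text{small})$, so that $\mu\g_{\al\be} - \D_\al\D_\be y$ loses its trace part and reduces to a sum of rank-one terms built from $\T$, $\D y$, $\D z$, $l$, $\ul$, $e_{(1)}$, $e_{(2)}$ with explicit coefficients. Contracting with $X^\al X^\be$ subject to \eqref{X1}: the $\T_\al\T_\be$ and $\D_\al y\D_\be y$ pieces drop out (up to $c_2$-small terms), the $\D_\al z\D_\be z$-type pieces are controlled using \eqref{concly} and the smallness of $\D_{(1)}z,\D_{(2)}z$, and what remains is a quadratic form in the two frame directions $e_{(1)}, e_{(2)}$ with a definite sign coming from the explicit Kerr-like coefficients (this is exactly where the Kerr computation from \cite{IoKl2} enters: the pseudo-convexity of $y=r/(2M)$ in Kerr). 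The main obstacle, I expect, is the bookkeeping in the Hessian computation: one must verify that every term in $\D_\al\D_\be y$ that is \emph{not} proportional to $\g_{\al\be}$, $\T\otimes\T$, or $\D y\otimes(\cdot)$ either is an $O(\overline{\varep})$ error or has the correct sign, and this requires carefully using which frame components of $\D y$ and $\D z$ are small (from \eqref{vgood3}) versus which are order one, together with \eqref{conclt} to know the size of $\g(\T,l)\g(\T,\ul)$ and hence of $\D_{(3)}y, \D_{(4)}y$. Once the Hessian is pinned down, choosing $\mu$ and extracting $c_2$ is routine linear algebra on the spacelike constraint plane, with $c_2$ depending only on $\overline{A}$ because all coefficients do, and the error terms are made negligible by taking $\overline{\varep}$ small.
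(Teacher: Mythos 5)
Your computational skeleton matches the paper's: both start from \eqref{he2} (treating the $\SS$ term as an $O(\overline{\varep})$ error), express everything in the null frame via \eqref{se6'}, and use \eqref{vgood3}, \eqref{concly}, \eqref{conclt} to evaluate the pieces. But there is a genuine gap in your positivity step. You assert that \eqref{mainneed} and \eqref{conclt} force $\T$ to be timelike at $p$, and hence that the constraint set \eqref{X1} is (up to $O(c_2)$) a spacelike $2$-plane on which the problem reduces to a quadratic form in the directions $e_{(1)},e_{(2)}$. Both claims fail. The inequality $(\T^\al l_\al)(\T^\be\ul_\be)>0$ does not make $\T$ timelike: writing $\T=al+b\ul+c_1e_{(1)}+c_2e_{(2)}$ one has $\g(\T,\T)=-2ab+c_1^2+c_2^2$, and in the ergoregion (which lies inside the region \eqref{mainneed} whenever $J\neq0$) $\T$ is spacelike while $ab>0$ still holds. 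In that case $\mathrm{span}(\T,\D y)$ is spacelike (its Gram determinant is $\g(\T,\T)\,\D_\al y\D^\al y>0$, since $\T(y)=0$), so its orthogonal complement is \emph{Lorentzian}, not spacelike; and even outside the ergoregion the complement of $\mathrm{span}(\T,\D y)$ is not $\mathrm{span}(e_{(1)},e_{(2)})$ (already in Kerr with $a\neq 0$, $\T$ has nonzero $e_{(1)},e_{(2)}$ components). So vectors $X$ satisfying \eqref{X1} can have large $l,\ul$ components, and the cross term proportional to $X^{(3)}X^{(4)}$ in $X^\al X^\be\D_\al\D_\be y$ cannot be discarded. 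The paper's proof handles exactly this: from $X(y)=O(c_2)|X|$ and \eqref{good3} one gets $X^{(4)}(\T^\al l_\al)-X^{(3)}(\T^\al\ul_\al)=O(c_2+\overline{\varep})|X|$, and then \eqref{conclt} together with \eqref{mainneed} gives $(\T^\al l_\al)(\T^\be\ul_\be)\geq c'>0$, whence $2X^{(3)}X^{(4)}\gtrsim (X^{(3)})^2+(X^{(4)})^2$ up to small errors. This is the precise point where the hypothesis \eqref{mainneed} is used, and it is absent from your plan.

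A second, related defect is your choice of $\mu$. Taking $\mu$ equal to the trace coefficient $\Re[P^{-1}\D_\rho P\D^\rho P]=\frac{y(y^2+z^2-y)}{4M^2(y^2+z^2)^2}$ plus a small correction is not enough: the term $-\frac{y}{y^2+z^2}X(z)^2$ (coming from $\Re[-P^{-1}\D_\al P\D_\be P]$ under the constraint $X(y)$ small) must be absorbed into the $(X^{(1)})^2+(X^{(2)})^2$ coefficient using the Cauchy--Schwarz bound $X(z)^2\leq((X^{(1)})^2+(X^{(2)})^2)\frac{B-z^2}{4M^2(y^2+z^2)}+O(\overline{\varep})$ from \eqref{conclz} and \eqref{vgood3}. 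After this absorption the requirement becomes $\frac{y(y^2-y+B)}{4M^2(y^2+z^2)^2}<\mu<\frac{2y-1}{8M^2(y^2+z^2)}$, and the lower threshold exceeds your proposed $\mu$ by $\frac{y(B-z^2)}{4M^2(y^2+z^2)^2}$, which is of order one away from the axis; the whole lemma then rests on checking that the interval is nonempty, i.e. that $z^2(2y-1)+y(y-2B)\geq\widetilde{C}^{-1}$, which again uses $y\geq 1/2+\widetilde{C}_2^{-2}$ and $B<1/4$. Without the $X^{(3)}X^{(4)}$ sign argument and this explicit window for $\mu$, the "routine linear algebra" you invoke does not close.
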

\begin{proof}[Proof of Lemma \ref{Tpseudo}] The bound \eqref{mainconcl} follows easily with $\mu=1$ if $r(p)\geq\widetilde{C}$ is sufficiently large (see \eqref{prel3} and \eqref{asymp3}). Assume that $r(p)\leq\widetilde{C}$. We shall make use of the null frame $e_{(\al)}$ defined in section \ref{almostKerr}.

Using  \eqref{he2}, we write
\beaa
X^\a X^\b\D_\al\D_\be y &=&X^\a X^\b\, \Re[ -P^{-1}\D_\al P\D_\be P]+\g(X, X)
\, \Re[P^{-1}(\D_\rho P\D^\rho P)]\\
&-&2    X^\a X^\b \,  \Re[P^2{F_\al}^\rho\FF_{\be\rho}]-(X^\al\T_\al)^2\Re(P^3\FF^2)+|X|^2O(\overline{\varep}),
\eeaa
where, in this proof, $O(\overline{\varep})$ denotes quantities bounded by $\widetilde{C}\overline{\varep}^{1/40}$. Since $X(y)=|X|O(c_2)$,
\bea
X^\a X^\b\Re[ -P^{-1}\D_\al P\D_\be P]&=&\frac{y}{y^2+z^2} X(z)^2+|X|^2O(c_2),\label{eq:crucial1}
\eea
where, in this proof, $O(c_2)$ denotes quantities bounded by $\widetilde{C}c_2$. Using   \eqref{he9}  and \eqref{big1},
\beaa
\Re[P^{-1}(\D_\rho P\D^\rho P)]&=& \frac{y}{y^2+z^2}\frac{1}{4M^2}\big(1-\frac{y}{y^2+z^2}\big)+O(\overline{\varep}),
\eeaa
thus
\begin{equation}\label{eq:crucial2}
X^\a X^\b \Re[P^{-1}(\D_\rho P\D^\rho P)]\g_{\a\b}=\g(X,X) \frac{y(y^2+z^2-y)}{4M^2(y^2+z^2)^2} +|X|^2O(\overline{\varep}).
\end{equation}
To calculate $X^\a X^\b\Re[\,P^2\, F_{\a}\,^\rho\FF_{\b\rho}] $ we recall, see \eqref{se6'} that
  all components of $\FF$ vanish, with the exception of
  $\FF_{34}=-\FF_{43}=- \frac{1}{4M P^2} +O(\overline{\varep})$ and    $\FF_{12}=-\FF_{21}=\frac{i}{4M P^2}+O(\overline{\varep})$, $a, b=1,2$.
  Since $F=\Re(\FF)$  we also  have,
\begin{equation*}
  F_{34}=-F_{43}=-(4M)^{-1}\Re[P^{-2}]+O(\overline{\varep}),\quad F_{12}=-F_{21}=-(4M)^{-1}\Im[P^{-2}]+O(\overline{\varep}).
\end{equation*}
  Therefore,
\beaa
X^\a X^\b  F_{\a}\,^\rho\FF_{\b\rho}&=& 2 X^3  X^4 F_{34}\, \FF_{34}+\big((X^1)^2+(X^2)^2\big)F_{12}\FF_{12}+|X|^2O(\overline{\varep}) \\
&=&\frac{1}{16M^2P^2}\bigg(2 X^3 X^4 \Re[P^{-2}]-i\big( (X^1)^2+(X^2)^2\big)\Im[P^{-2}]
\,\bigg)+|X|^2O(\overline{\varep}).
\eeaa

Thus,
\bea
-2    X^\a X^\b\,   \Re[P^2{F_\al}^\rho\FF_{\be\rho}]&=&-X^3 X^4 \frac{y^2-z^2}{4M^2(y^2+z^2)^2}+|X|^2O(\overline{\varep}).
\eea
Therefore, denoting $E(X,X):=X^\al X^\be(\mu\g_{\al\be}-\D_\al\D_\be y)$, we write
\beaa
E(X,X)
 &=&\g(X,X) \Big(\mu -\frac{y(y^2+z^2-y)}{4M^2(y^2+z^2)^2} \Big) -\frac{y}{y^2+z^2} X(z)^2\\
 &+&X^3 X^4 \frac{y^2-z^2}{4M^2(y^2+z^2)^2}+|X|^2O(\overline{\varep})+|X|^2O(c_2),
  \eeaa
  or, since $\g(X,X) =-2 X^3 X^4+ (X^1)^2+(X^2)^2$,
  \beaa
  E(X,X)&=&2 X^3 X^4\big[\frac{y(y^2+z^2-y)}{4M^2(y^2+z^2)^2}+
  \frac{y^2-z^2}{8M^2(y^2+z^2)^2}-\mu\big] -\frac{y}{y^2+z^2} X(z)^2\\
  &+&\big((X^1)^2+(X^2)^2\big) \big(\mu -\frac{y(y^2+z^2-y)}{4M^2(y^2+z^2)^2} \big)    +|X|^2O(\overline{\varep})+|X|^2O(c_2)\\
  &=&2 X^3 X^4\big(\frac{2y-1}{8M^2(y^2+z^2)}-\mu\big) -\frac{y}{y^2+z^2} X(z)^2\\
  &+&\big((X^1)^2+(X^2)^2\big) \big(\mu -\frac{y(y^2+z^2-y)}{4M^2(y^2+z^2)^2} \big)    +|X|^2O(\overline{\varep})+|X|^2O(c_2).
  \eeaa

We now make use of our main identity \eqref{conclz}  as well as
\eqref{vgood3}, and derive, \beaa (\D_1 z)^2 +(\D_2 z)^2=\D_\be z
\D^\b
z+O(\overline{\varep})=\frac{B-z^2}{4M^2(y^2+z^2)}+O_1(\overline{\varep}).
\eeaa Thus, using also $\D_3z=O(\overline{\varep})$ and
$\D_4z=O(\overline{\varep})$, by Cauchy-Schwartz,
\begin{equation*}
\begin{split}
X(z)^2&\le \big( (X^1)^2+(X^2)^2\big) \big( (\D_1 z)^2 +(\D_2 z)^2\big)+O(\overline{\varep})\\
&\le
 \big( (X^1)^2+(X^2)^2\big)\frac{B-z^2}{4M^2(y^2+z^2)}+O(\overline{\varep}).
\end{split}
\end{equation*}
We deduce,
\beaa
E(X,X)&\ge&|X|^2O(\overline{\varep})+ |X|^2O(c_2)+2 X^3 X^4\bigg[\frac{2y-1}{8M^2(y^2+z^2)}-\mu\bigg] \\
&+&\quad \big( (X^1)^2+(X^2)^2\big)\,\bigg[\, \mu -\frac{y(y^2+z^2-y)}{4M^2(y^2+z^2)^2}    - \frac{y(B-z^2)}{4M^2(y^2+z^2)^2} \bigg],
\eeaa
or,
\begin{equation}\label{eq:main.ineq}
\begin{split}
 E(X,X)\ge|X|^2O(\overline{\varep})&+ |X|^2O(c_2)+ 2X^3 X^4\bigg[\frac{2y-1}{8M^2(y^2+z^2)}-\mu\bigg]\\
&+\big( (X^1)^2+(X^2)^2\big)\,\bigg[\, \mu -\frac{y(y^2-y+B)}{4M^2(y^2+z^2)^2}
  \bigg].
\end{split}
\end{equation}
 Since $X(y)=X^\a\D_\a y=|X|O(c_2)$, it follows from \eqref{good3} that $X^{4}(\T^{\a} l_{\a})-X^{3}(\T^{\a}\ul_{\a})=|X|O(\overline{\varep})+|X|O(c_2)$. On the other hand, according to \eqref{conclt},
 \beaa
(\T^\al l_\al)(\T^\be\ul_\be)=\frac{y^2-y+B}{2(y^2+z^2)}+O(\overline{\varep})
\eeaa
and therefore, in view of \eqref{mainneed} and $B<1/4$, $(\T^\al l_\al)(\T^\be\ul_\be)\ge c'=c'(\overline{A})>0$ (recall $r(p)\leq\widetilde{C}$). We infer that,
\begin{equation*}
2X^{3}X^{4}\geq\widetilde{C}^{-1}[(X^{3})^2+(X^{4})]^2-\widetilde{C}\overline{\varep}|X|-\widetilde{C}c_2|X|.
\end{equation*}
Thus the expression in \eqref{eq:main.ineq} is bounded from below by $c_2|X|^2$ if $c_2$ is sufficiently small and the coefficients of $((X^{1})^2+(X^{2})^2)$ and $2X^{3}X^{4}$ are both positive, for a suitable choice of $\mu$.  This holds if and only if,
\beaa
\frac{y(y^2-y+B)}{4M^2(y^2+z^2)^2}<\mu<\frac{2y-1}{8M^2(y^2+z^2)}.
\eeaa
Such a choice exists since,
\beaa
\frac{2y-1}{8M^2(y^2+z^2)}-\frac{y(y^2-y+B)}{4M^2(y^2+z^2)^2}=
\frac{ z^2(2y-1)+y(y-2B)   }{8M^2(y^2+z^2)^2}\ge \widetilde{C}^{-1}.
\eeaa
This last inequality holds because $B<1/4$, $y\geq 1/2+c_1$, and $r(p)\leq \widetilde{C}$.
\end{proof}

\section{Construction of the Hawking Killing vector-field $\K$}\label{globalK}
In this section we construct a second Killing vector-field $\K$ in
$\E\cup\O_{\overline{c}}$, for some small constant
$\overline{c}=\overline{c}(\overline{A})\in(0,c_0)$. The first
step, the existence of $\K$ in a neighborhood of $S_0$, was proved
by the authors in \cite{AlIoKl}. We summarize first the main
results in \cite{AlIoKl}, see Theorem 1.1, Proposition 4.5,
Proposition 5.1, and Proposition 5.2, in a suitable quantitative
form.

\begin{proposition}\label{summary}
There is a neighborhood $\O'$ of the bifurcation sphere $S_0$, a constant $\overline{c}=\overline{c}(\overline{A})>0$ such that $\O_{\overline{c}}\subseteq\O'$, and a smooth vector-field $\K$ in $\O'$ such that $\K=\underline{u}L-u\Lb$ on $(\HH^+\cup\HH^-)\cap\O'$,
\begin{equation}\label{ext1}
\mathcal{L}_{\K}\g=0,\quad[\T,\K]=0,\quad\K^\mu\sigma_\mu=0\quad\text{ in }\O',
\end{equation}
and
\begin{equation}\label{ext1.1}
\g(\K,\K)\leq-\overline{c}(r-1)^2\quad\text{ on }\Sigma_1\cap\O'.
\end{equation}
In addition, there is $\lambda_0\in\mathbb{R}$ such that the vector-field
\begin{equation*}
\Z=\T+\lambda_0\K
\end{equation*}
has complete periodic orbits in $\O'$.
\end{proposition}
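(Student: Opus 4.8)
The plan is to obtain this proposition as a quantitative repackaging of the results of \cite{AlIoKl}, assembling three ingredients. The first is Hawking's construction of the candidate field along the horizon. Using the parallel-transported null pair $(L,\Lb)$ and the optical functions $u,\underline u$ of subsection \ref{optical}, one sets $\K=\underline u\,L$ on $\HH^+$ and $\K=-u\,\Lb$ on $\HH^-$; these agree on $S_0$ (where $u=\underline u=0$), so $\K=\underline u L-u\Lb$ on $(\HH^+\cup\HH^-)\cap\O'$ and $\K$ vanishes on $S_0$. One then checks, as in the Moncrief--Isenberg argument recalled in \cite{AlIoKl}, that the non-expanding condition together with the vacuum structure equations along $\HH^\pm$ force $\mathcal L_\K\g$ and all of its derivatives to vanish on $\HH^+\cup\HH^-$, and likewise that $[\T,\K]$ and $\mathcal L_\K\sigma$ vanish there to infinite order (the latter because $\sigma$ is built only from $\g$ and $\T$).

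The second, analytically decisive, ingredient is the passage from $\K$ being Killing to infinite order along $\HH^\pm$ to $\K$ being Killing in a full two-sided neighborhood $\O'$ of $S_0$. Here I would follow \cite{AlIoKl}: introduce the deformation tensor $\pi=\mathcal L_\K\g$ and derive, from the vacuum equations, a closed system of covariant wave equations for $\pi$ coupled to the Riemann tensor and its Lie derivative along $\K$, together with transport (ODE) equations for $\K$ along a suitable congruence, all with vanishing data on the bifurcate hypersurface $\HH^+\cup\HH^-$. The bifurcate-horizon Carleman estimates of \cite{AlIoKl} then give unique continuation for this system from $\HH^+\cup\HH^-$, forcing $\pi\equiv0$, i.e.\ $\mathcal L_\K\g=0$, on a neighborhood $\O'$ of $S_0$; since the Carleman weight and the constants involved depend only on finitely many derivatives of $\g$ and $\T$ near $S_0$, the size $\overline c=\overline c(\overline A)$ of a coordinate box $\O_{\overline c}\subseteq\O'$ is effective, as claimed. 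Running the same unique continuation statement for the Killing field $[\T,\K]$ (which vanishes to infinite order along the horizon because $\mathcal L_\T$ preserves the whole construction) yields $[\T,\K]=0$ in $\O'$, and then $\K^\mu\sigma_\mu=\mathcal L_\K\sigma=0$ follows at once from $\mathcal L_\K\g=0$, $\mathcal L_\K\T=0$, and the fact that $\sigma$ is a functional of $(\g,\T)$. This gives \eqref{ext1}. For \eqref{ext1.1} I would evaluate on the horizon: from $\K=\underline u L-u\Lb$, $\g(L,L)=\g(\Lb,\Lb)=0$ and $\g(L,\Lb)=-1$ on $\HH^\pm$ one gets $\g(\K,\K)=2u\underline u$ along the horizon, while on $\Sigma_1\cap\O'$ one has $u\underline u\le0$ with $|u|,|\underline u|$ comparable to $|1-r|$ by \eqref{haw12.5}; shrinking $\O'$ if necessary and using continuity off the horizon then gives $\g(\K,\K)\le-\overline c(r-1)^2$ on $\Sigma_1\cap\O'$.

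The third ingredient is the choice of $\lambda_0$ making $\Z=\T+\lambda_0\K$ periodic. Since $\T$ is Killing and tangent to $\HH^+$, its flow preserves $\HH^+$ and permutes the null generators, hence descends to a flow on the $2$-sphere $S_0$ of generators, generated there by a Killing field of the induced metric; under the assumption that $\T$ does not vanish identically on $S_0$ this field is non-trivial, so it generates a rotation of $S_0$, periodic of some period $P>0$, whose time-$P$ map fixes $S_0$ pointwise. The component of $\T|_{\HH^+}$ along the null generators is a multiple of $\K$, and I would invoke \cite[Section 5]{AlIoKl} (which uses the structure equations on the non-expanding horizon) for the key point that this multiple is a \emph{constant}, which we call $-\lambda_0$; then $\Z=\T+\lambda_0\K$ is, on $\HH^+$, tangent to the cross-sections $S_{\underline u}$, its flow commutes with that of $\K$ (since $[\Z,\K]=[\T,\K]=0$), and $\K$ sweeps $\HH^+$ out from $S_0$, so $\exp(P\Z)$ restricts to the identity on $\HH^+\cap\O'$. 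Finally, $\exp(P\Z)$ is an orientation-preserving isometry of $\O'$ fixing $\HH^+$ pointwise; such an isometry acts as $\pm1$ on the normal line to $\HH^+$ and, by continuity and orientation-preservation, as $+1$, so its $1$-jet along $\HH^+$ is trivial and therefore $\exp(P\Z)=\mathrm{id}$ on the connected set $\O'$. Hence $\Z$ has complete periodic orbits in $\O'$.

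I expect the main obstacle to be the second step: the unique continuation from the bifurcate horizon for the coupled wave--transport system, i.e.\ the construction and application of the geometric Carleman estimates of \cite{AlIoKl}, and --- for the present bookkeeping purpose --- the verification that the resulting neighborhood $\O'$ and the constant $\overline c$ depend only on the single effective constant $\overline A$. The periodicity argument is elementary in outline but also needs care, precisely at the point where the constancy of the generator-direction coefficient of $\T|_{\HH^+}$ is used.
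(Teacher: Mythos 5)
You should know that the paper itself does not prove Proposition \ref{summary}: it is presented explicitly as a quantitative summary of the results of \cite{AlIoKl} (Theorem 1.1, Propositions 4.5, 5.1 and 5.2), the only inline justification being the remark that \eqref{ext1.1} follows from \cite[Proposition 4.5]{AlIoKl} together with \eqref{haw12.5}. Your outline reconstructs the cited proof along the same lines that \cite{AlIoKl} actually follows --- the candidate field $\underline{u}L-u\Lb$ on the bifurcate horizon, vanishing of the deformation tensor to infinite order there, a closed covariant wave/transport system with zero data on $\HH^+\cup\HH^-$ handled by bifurcate-horizon Carleman estimates, effectiveness of the constants in $\overline{A}$, and the rotation-on-$S_0$ argument for the periodicity of $\Z$ --- so at the level of strategy you and the paper (through \cite{AlIoKl}) agree.

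The one step that does not work as you wrote it is the derivation of \eqref{ext1.1}. On the horizon itself $\g(\K,\K)$ and $u\underline{u}$ both vanish identically (one of $u,\underline{u}$ is zero on each branch), and $\Sigma_1\cap\O'$ meets the horizon only at $S_0$; since both sides of \eqref{ext1.1} vanish to second order in the distance to $S_0$, ``continuity off the horizon'' gives nothing --- continuity of $\g(\K,\K)$ cannot produce a strictly negative coefficient $-\overline{c}$ in front of $(r-1)^2$ away from the set where your identity holds. What is actually needed is quantitative control of $\K$ \emph{transverse} to the horizon near $S_0$: for instance, that $\K$ agrees with $\underline{u}L-u\Lb$ to second order at $S_0$, so that $\g(\K,\K)=-2\Om\,u\underline{u}+O\big((|u|+|\underline{u}|)^3\big)$ with $\Om=\g(L,\Lb)\approx -1$, and only then does $u\underline{u}\leq 0$ together with $|u|,|\underline{u}|\sim|r-1|$ on $\Sigma_1$ (i.e. \eqref{haw12.5}) yield \eqref{ext1.1}. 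This quantitative expansion is exactly the content of \cite[Proposition 4.5]{AlIoKl} that the paper invokes; it comes out of the construction of $\K$, not out of the horizon identity plus continuity. A minor wording issue in your periodicity step: for a null hypersurface the ``normal line'' is tangent to $\HH^+$, so ``acts as $\pm 1$ on the normal'' is not the right formulation; the correct (and easily checked) statement is that an isometry fixing $\HH^+$ pointwise has identity differential along $\HH^+$, because the conditions $\g(d\phi\,\Lb,X)=\g(\Lb,X)$ for $X$ tangent to $\HH^+$ and $\g(d\phi\,\Lb,d\phi\,\Lb)=0$ force $d\phi\,\Lb=\Lb$, after which $\phi=\mathrm{Id}$ on the connected neighborhood since an isometry is determined by its $1$-jet at a point.
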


The inequality \eqref{ext1.1} follows from \cite[Proposition
4.5]{AlIoKl} and \eqref{haw12.5}. In this section we extend $\K$
to the exterior region $\E$. The main result is the following:

\begin{theorem}\label{mainsecondstep}
The vector-field $\K$ constructed in $\O_{\overline{c}}$ can be extended to a smooth vector-field in the exterior region $\E$ such that
\begin{equation}\label{ext2}
\mathcal{L}_{\K}\g=0,\quad[\T,\K]=0,\quad\K^\mu\sigma_\mu=0\quad\text{ in }\E\cup\O_{\overline{c}}.
\end{equation}
\end{theorem}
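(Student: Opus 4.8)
The plan is to extend $\K$ from the small neighborhood $\O_{\overline{c}}$ of the bifurcation sphere to the entire exterior region $\E$ by a continuity argument organized along the level sets of the function $y$. Recall from Lemma \ref{Uincrease} that $\cup_{R\geq y_0}\mathcal{U}_R=\Sigma_1$ and that the regions $\mathcal{U}_R$ increase in a controlled way. First I would reformulate the extension problem as a unique continuation problem: the desired identities \eqref{ext2} say precisely that $\K$ is Killing, commutes with $\T$, and preserves the Ernst potential. The key observation, exploited already in \cite{AlIoKl} and \cite{IoKl}, is that if one defines $\K$ as the solution of an ordinary differential equation (transport along a suitable vector-field, as in Lemma \ref{overY}) with initial data on a hypersurface where it is already known to be Killing, then the ``deformation tensors'' $\pi=\mathcal{L}_{\K}\g$, $[\T,\K]$, and $\K^\mu\sigma_\mu$ satisfy a closed system of covariant wave equations coupled to transport (ODE) equations, with vanishing data on that hypersurface. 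Uniqueness for this ill-posed system — which is where the Carleman estimates enter — then forces these tensors to vanish in a neighborhood, extending the region on which $\K$ is genuinely Killing.

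The precise structure I would use is a bootstrap over $R$. Let $R^*$ be the supremum of all $R\geq y_0$ such that $\K$ extends to a smooth vector-field on a neighborhood of $\overline{\mathcal{U}_R}$ in $\M$ satisfying \eqref{ext2} there (together with auxiliary bootstrap bounds controlling the size of $\K$ and its derivatives in terms of $\overline{A}$). By Proposition \ref{summary} and \eqref{ext5.5}, $R^*\geq y_0$ (the neighborhood $\O_{\overline{c}}$ covers $\mathcal{U}_{y_0}$ once $\overline{c}$ is small, using \eqref{ext1.1}). One must show (i) the set of admissible $R$ is open, (ii) it is closed, and (iii) $R^*=\infty$, which by \eqref{ext8} gives $\K$ on all of $\Sigma_1$, hence on $\E$ by flowing with $\T$ (using that orbits of $\T$ are complete and meet $\Sigma^0$, and $[\T,\K]=0$). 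Step (i) is the heart: given $\K$ Killing on a neighborhood of $\overline{\mathcal{U}_R}$, extend it a bit further as the solution of the ODE of Lemma \ref{overY}, then invoke the uniqueness statement Proposition \ref{mainnew2} to conclude the extended $\K$ is still Killing on the slightly larger region $\mathcal{U}_{R+\delta^2}$ — this is exactly where the $\T$-conditional pseudo-convexity of $y$ away from $S_0$ (Lemma \ref{Tpseudo}) and the Carleman estimates of \cite{IoKl}, \cite{IoKl2} are used, and where the smallness of the Mars--Simon tensor is essential. Step (ii), closedness, follows by passing to the limit in the uniform bootstrap bounds and using smoothness of the limiting vector-field. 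Step (iii) is forced because any finite $R^*$ would contradict openness at $R^*$, together with \eqref{ext7} which guarantees the regions genuinely grow.

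A few technical points need care. One must check that the ODE defining the extension of $\K$ (Lemma \ref{overY}) is nondegenerate along the level sets of $y$, which uses the regularity of those level sets established in Section \ref{propertiesy} (in particular $Y'(y)\geq\widetilde{C}^{-1}$ on $\delta_{\Sigma_1}(\mathcal{U}_R)$, from \eqref{concly} and \eqref{alexnew1}). One must also verify that the system satisfied by the deformation tensors is indeed closed — this requires commuting covariant derivatives and repeatedly using the vacuum Einstein equations and the fact that $\T$ is Killing, together with the transport equation for $\K$ itself; the additional scalar $\K^\mu\sigma_\mu$ and the commutator $[\T,\K]$ must be included in the unknowns precisely so that the right-hand sides close, and the vanishing of $(1-\sigma)\SS$ up to the small error $\overline{\varep}$ enters when controlling lower-order terms. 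Finally, the behavior near $S_0$ itself is handled not by the global argument but by Proposition \ref{summary}, so one only ever applies the Carleman-estimate uniqueness argument on regions $\mathcal{U}_R$ with $R\geq y_0>\tfrac{1+\sqrt{1-4B}}{2}$, i.e. away from $S_0$, where \eqref{mainneed} holds and Lemma \ref{Tpseudo} applies.

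The main obstacle is step (i): establishing the uniqueness/unique-continuation statement (Proposition \ref{mainnew2}) for the coupled wave–ODE system. The difficulty is twofold. First, one must set up the system so that it is genuinely closed and amenable to a Carleman estimate — this is delicate because $\K$ is not given a priori as a solution of a good PDE but is constructed by an ODE, so the wave equations for the deformation tensors have inhomogeneous terms involving $\K$ and its first derivatives that must be reabsorbed. Second, the Carleman estimate itself requires the pseudo-convexity of the weight, and here we only have the \emph{conditional} pseudo-convexity of Lemma \ref{Tpseudo} — valid for vectors $X$ nearly orthogonal to both $\T$ and $\nabla y$ — so the argument must exploit the stationarity (the presence of the Killing field $\T$, so that one can effectively quotient by its flow) to reduce to directions where pseudo-convexity holds. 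Controlling the errors coming from the small but nonzero Mars--Simon tensor throughout this argument, uniformly in the bootstrap, is what pins down how small $\overline{\varep}$ must be taken relative to $\overline{A}$.
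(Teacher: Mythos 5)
Your proposal follows essentially the same route as the paper: a continuation argument over the level sets of $y$, extending $\K$ across $\{y=R_0\}$ by the transport construction of Lemma \ref{overY} and then invoking the Carleman-based uniqueness statement (Proposition \ref{mainnew2}, resting on the $\T$-conditional pseudo-convexity of Lemma \ref{Tpseudo} and the smallness of the Mars--Simon tensor) to conclude that the extension remains Killing, finally sweeping out $\E$ with the flow of $\T$ via $[\T,\K]=0$. The only differences are organizational: the paper runs an induction with a uniform increment depending only on $\overline{A}$ rather than an open/closed/sup bootstrap, and it verifies the Killing property by pulling back $(\g,\T)$ under the flow of $\K$ and comparing two vacuum solutions in frames parallel along $\overline{Y}$ (treating $\K^\mu\sigma_\mu=0$ afterwards by a separate application of Lemma \ref{extendedCarl2} with $H=0$), rather than by writing a closed wave--ODE system for the deformation tensor $\mathcal{L}_{\K}\g$ itself.
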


The rest of this section is concerned with the proof of Theorem
\ref{mainsecondstep}. We construct the vector-field $\K$
recursively, in increasingly larger regions  defined  in terms of
the level sets of the function $y$. We rely  on   Carleman
estimates to prove, by an uniqueness argument similar to that of
\cite{AlIoKl}, that the extended $\K$ remains Killing at every
step in the process. The initial step is, of course,  that  given
by Proposition \ref{summary}.

Recall the definitions \eqref{ext4} and \eqref{ext5}, and the
identity $\mathcal{U}_R=\mathcal{V}_R$, see \eqref{alexnew2}.
Using the flow $\Psi_{t,\T}$ associated to $\T$ and the assumption
{\bf{GR}} on the orbits of $\T$, we define the connected open
space-time regions,
\begin{equation}\label{rot9}
\E_R=\{p\in\E:y(p)<R\}=\cup_{t\in\mathbb{R}}\Psi_{t,\T}(\mathcal{U}_R)\subseteq\E,\quad
R\geq y_0.
\end{equation}
Clearly, $ \E=\cup_{R\geq y_0}\E_R$. The main step in the proof
of the theorem is the following:

{\bf{Main Claim:}} For any $R\geq y_0$ there is a smooth
vector-field $\K$ defined in the connected open set $\E_R$, which
agrees with the vector-field $\K$ defined in Proposition
\ref{summary} in a neighborhood of $\mathcal{U}_{y_0}$ in $\E$,
such that
\begin{equation}\label{ext50}
\mathcal{L}_{\K}\g=0,\quad[\T,\K]=0,\quad\K^\mu\sigma_\mu=0\quad\text{ in }\E_R.
\end{equation}

The Main Claim follows for $R=y_0$ from Proposition \ref{summary}:
we define $\K$ in a small neighborhood of $\mathcal{U}_{y_0}$ in
$\E$ as in Proposition \ref{summary} and extend it to $\E_{y_0}$
by solving the ordinary differential equation $[\T,\K]=0$ (recall
that $\T$ does not vanish in $\E$). The remaining identities in
\eqref{ext50} hold on $\E_{y_0}$ since they hold in a small
neighborhood of $\mathcal{U}_{y_0}$ in $\E$ and $\T$ is
non-vanishing Killing vector-field.

Assume now that the Main Claim holds for some value $R_0\geq y_0$.
We would like to prove the Main Claim for some value
$R=R_0+\delta'$, for some
$\delta'=\delta'(\overline{A},\delta_0)>0$. We will use the
results and the notation in section \ref{propertiesy}.

Recall that $y,z,\sigma$ are smooth well-defined functions in $\E$
and $y+iz=(1-\sigma)^{-1}$. As in the proof of Lemma
\ref{Uincrease} let $Y^\al=\D^\al y$, which is a smooth
vector-field in $\E$. Using the last identity in \eqref{ext50},
$\K^\mu Y_\mu=0$ in $\E_{R_0}$. We compute in $\E_{R_0}$
\begin{equation*}
[\K,Y]_\be=\K^\al\D_\al Y_\be-Y^\al\D_\al\K_\be=\K^\al\D_\be\D_\al y+\D^\al y\D_\be\K_\al=\D_\be(\K^\al\D_\al y)=0.
\end{equation*}
Thus
\begin{equation}\label{ext51}
[\K,Y]=0\quad\text{ in }\E_{R_0}.
\end{equation}

For $R\geq y_0$ and $\delta>0$ small we define
\begin{equation*}
\widetilde{\O}_{\delta,R}=\cup_{p\in\delta_{\Sigma_1}(\mathcal{U}_R)}B_\delta(p).
\end{equation*}
Clearly, for $\delta$ sufficiently small and $R\geq y_0$
\begin{equation}\label{ext52}
\E_R\cap\widetilde{\O}_{\delta,R}=\{p\in\widetilde{\O}_{\delta,R}:y(p)<R\}.
\end{equation}
The vector-field $\K$ is defined in $\E_{R_0}\cap\widetilde{\O}_{\delta,R_0}$, by the induction hypothesis. We would like to extend it to the full open set $\widetilde{\O}_{\delta,R_0}$ as the solution of an ordinary differential equation of the form $[\K,\overline{Y}]=0$, where $\overline{Y}$ is a suitable vector-field in $\widetilde{\O}_{\delta,R_0}$. We summarize this construction in Lemma \ref{overY} below.

\begin{lemma}\label{overY}
There is a constant $\delta_3=\delta_3(\overline{A})>0$, a smooth
vector-field $\overline{Y}=\overline{Y}^\al\pr_\al$ in
$\widetilde{\O}_{\delta_3,R_0}$,
$\sum_{\al=0}^3|\overline{Y}^\al|\leq \delta_3^{-1}$ in
$\widetilde{\O}_{\delta_3,R_0}$, and a smooth extension of the
vector-field $\K$ (originally defined in
$\widetilde{\O}_{\delta_3,R_0}\cap\E_{R_0}$) to
$\widetilde{\O}_{\delta_3,R_0}$ such that
\begin{equation}\label{ext53}
\D_{\overline{Y}}\overline{Y}=0,\quad[\K,\overline{Y}]=0,\quad\overline{Y}(y)\geq\delta_3\qquad\text{ in }\widetilde{\O}_{\delta_3,R_0}.
\end{equation}
\end{lemma}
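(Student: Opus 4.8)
The plan is to extend $\K$ across the hypersurface $\{y=R_0\}$ by transporting it along a foliation by geodesics that cross the level sets of $y$ transversally. First I would fix an auxiliary constant $\rho_0=\rho_0(\overline{A})>0$, small, and let $\Sigma'$ be the portion of the level set $\{y=R_0-\rho_0\}$ lying in a fixed neighborhood of $\delta_{\Sigma_1}(\mathcal{U}_{R_0})$. Since $R_0\geq y_0=(1+\sqrt{1-4B})/2+\widetilde{C}_2^{-1}$, for $\rho_0$ small one still has $y>(1+\sqrt{1-4B})/2$ on $\Sigma'$ and in a fixed neighborhood of $\delta_{\Sigma_1}(\mathcal{U}_{R_0})$, and then \eqref{concly} (used as in the proof of Lemma \ref{Uincrease}) gives $\g(Y,Y)=\D_\al y\D^\al y\geq c(\overline{A})>0$ there; in particular $Y=\D^\al y\,\pr_\al$ is spacelike, $\Sigma'$ is a smooth compact hypersurface contained in $\E_{R_0}$ (as $y<R_0$ on it), and $N=Y/\sqrt{\g(Y,Y)}$ is its unit normal in the direction of increasing $y$. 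I would then take $\overline{Y}$ to be the vector-field tangent to the normal geodesics of $\Sigma'$, with $\overline{Y}=N$ on $\Sigma'$. By construction $\D_{\overline{Y}}\overline{Y}=0$ and $\g(\overline{Y},\overline{Y})\equiv 1$; since the metric and its derivatives are bounded by $A_2\leq\overline{A}$ and $\Sigma'$ has controlled second fundamental form, $\overline{Y}$ is smooth on a tubular neighborhood of $\Sigma'$ of width $c(\overline{A})$, where $\overline{Y}(y)\geq c(\overline{A})$ (it equals $\sqrt{\g(Y,Y)}$ on $\Sigma'$ and varies slowly) and $\sum_\al|\overline{Y}^\al|\leq\widetilde{C}(\overline{A})$. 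Fixing $\delta_3=\delta_3(\overline{A})$ small enough that $\widetilde{\O}_{\delta_3,R_0}$ lies in this tubular neighborhood and $\delta_3\leq\min(c(\overline{A}),\,1/\widetilde{C}(\overline{A}))$ then provides $\overline{Y}$ with all the properties in \eqref{ext53} save the commutation relation.

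Next I would extend $\K$. In coordinates, $[\K,\overline{Y}]=0$ is the linear first-order system $\overline{Y}^\mu\pr_\mu\K^\nu=\K^\mu\pr_\mu\overline{Y}^\nu$ along the integral curves of $\overline{Y}$, i.e.\ along the normal geodesics above, each of which meets $\Sigma'\subseteq\E_{R_0}$, where $\K$ is already defined by the induction hypothesis. Solving this ODE with initial data $\K|_{\Sigma'}$ produces a unique smooth vector-field $\K$ on $\widetilde{\O}_{\delta_3,R_0}$ with $[\K,\overline{Y}]=0$. This yields everything in \eqref{ext53}; what remains is the compatibility assertion, that the new $\K$ agrees with the old one on $\E_{R_0}\cap\widetilde{\O}_{\delta_3,R_0}=\{p\in\widetilde{\O}_{\delta_3,R_0}:y(p)<R_0\}$ (cf.\ \eqref{ext52}).

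The compatibility check is the heart of the argument. By uniqueness for the ODE it suffices to show that the old $\K$ also satisfies $[\K,\overline{Y}]=0$ on $\E_{R_0}\cap\widetilde{\O}_{\delta_3,R_0}$, for then it and the new $\K$ satisfy the same system with the same values on $\Sigma'$. By the induction hypothesis \eqref{ext50}, in $\E_{R_0}$ the old $\K$ is Killing and $\K^\mu\sigma_\mu=0$; hence $\K(\sigma)=0$ and, since $y=\Re[(1-\sigma)^{-1}]$, also $\K(y)=0$, so the local flow $\Psi_{s,\K}$ is an isometry that leaves the function $y$ — and therefore the hypersurface $\Sigma'$ — invariant. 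Consequently $\Psi_{s,\K}$ maps the geodesics normal to $\Sigma'$ with velocity $N=Y/\sqrt{\g(Y,Y)}$ to geodesics with the same property, i.e.\ it preserves $\overline{Y}$; differentiating $(\Psi_{s,\K})_\ast\overline{Y}=\overline{Y}$ at $s=0$ gives $[\K,\overline{Y}]=\mathcal{L}_{\K}\overline{Y}=0$ on the common domain of definition, which contains $\E_{R_0}\cap\widetilde{\O}_{\delta_3,R_0}$ once $\delta_3$ is small. I expect the only remaining difficulty to be quantitative bookkeeping: choosing $\rho_0$ and $\delta_3$, depending only on $\overline{A}$, so that $\Sigma'$ is a regular hypersurface inside $\E_{R_0}$, the normal geodesics remain transversal to the level sets of $y$ and sweep out $\widetilde{\O}_{\delta_3,R_0}$, and the flow $\Psi_{s,\K}$ used above stays inside $\E_{R_0}$ — all consequences of the uniform bounds of Proposition \ref{clo110} and Section \ref{propertiesy} together with the compactness of $\delta_{\Sigma_1}(\mathcal{U}_{R_0})$.
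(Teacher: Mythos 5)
Your construction is correct and follows the same strategy as the paper: transport along a geodesic vector-field $\overline{Y}$ transversal to the level sets of $y$, extend $\K$ by solving $[\K,\overline{Y}]=0$, and reduce the compatibility with the old $\K$ to showing that the old $\K$ already commutes with $\overline{Y}$ inside $\E_{R_0}$. The differences are in implementation. The paper seeds $\overline{Y}$ on the level set $S_{\delta,R_0}=\{y=R_0\}$ with initial value $\overline{Y}=Y$, gets $[\K,\overline{Y}]=0$ on the seed surface from \eqref{ext51}, and propagates it by the linear transport equation $\D_{\overline{Y}}(\Lie_\K\overline{Y})=-\D_{\Lie_\K\overline{Y}}\overline{Y}$ along the geodesics; you instead seed on the interior level set $\{y=R_0-\rho_0\}$ with the unit normal, and obtain $[\K,\overline{Y}]=0$ in integrated form, from the fact that the flow of $\K$ is a $y$-preserving isometry of $\E_{R_0}$ and hence preserves the normal exponential map of $\Sigma'$. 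Your variant has the advantage that all seed data (for $\overline{Y}$ and for the ODE defining the extension of $\K$) live strictly inside $\E_{R_0}$, where $\K$ is genuinely defined and Killing, so no limiting statement on the boundary level set is needed; the price is that the normal geodesics must remain focal-point-free and keep $\overline{Y}(y)$ bounded below over a stretch of definite length $\sim\rho_0$, and must sweep out all of $\widetilde{\O}_{\delta_3,R_0}$ -- the ``bookkeeping'' you flag, which indeed follows from the uniform bounds of Proposition \ref{clo110}, the lower bound on $\D_\al y\D^\al y$ from \eqref{concly} for $y\geq y_0-\rho_0$, the Hessian bound coming from \eqref{he2}, and \eqref{ext52}, with $\rho_0$ and $\delta_3$ chosen small depending only on $\overline{A}$. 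Note also that along each normal geodesic $y$ is strictly increasing, so the segment from $\Sigma'$ to any point with $y<R_0$ stays in $\{y<R_0\}$, which (by \eqref{ext52}) keeps your isometry argument inside the region where the old $\K$ is Killing; this point is worth stating explicitly, but it is not a gap.
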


\begin{proof}[Proof of Lemma \ref{overY}] For $\delta$ sufficiently small we define
\begin{equation*}
S_{\delta,R_0}=\{x\in\widetilde{\O}_{\delta,R_0}:y(x)=R_0\}.
\end{equation*}
Clearly, $\delta_{\Sigma_1}(\mathcal{U}_{R_0})\subseteq S_{\delta,R_0}$. Since $y$ is a smooth function and $\D^\al y\D_\al y\geq\widetilde{C}^{-1}$ in $\widetilde{\O}_{\delta,R_0}$, the set $S_{\delta,R_0}$ is a smooth imbedded hypersurface. We define $\overline{Y}=Y$ on $S_{\delta,R_0}$, and extend $\overline{Y}$ to an open set of the form $\widetilde{\O}_{\delta',R_0}$, $\delta'\leq\delta$ by solving the geodesic equation $\D_{\overline{Y}}{\overline{Y}}=0$.

We first show that  $[\K,\overline{Y}]=0$ in
$\widetilde{\O}_{\delta'',R_0}\cap\E_{R_0}$,
$\delta''\in(0,\delta']$. Since $\K$ is tangent to
$S_{\delta,R_0}$, $\K(y)=0$,  and  $\overline{Y}=Y$ we deduce that
$[\K,\overline{Y}]=0$ along
  $S_{\delta,R_0}$.
On the other hand, we  have in $\widetilde{\O}_{\delta',R_0}\cap\E_{R_0}$ (where $\K$ is Killing),
\beaa
\D_{\overline{Y}}(\Lie_\K {\overline{Y}})=\Lie_\K(\D_{\overline{Y}}{\overline{Y}})-\D_{\Lie_\K {\overline{Y}}}\overline{Y} =-\D_{\Lie_\K {\overline{Y}}}\overline{Y}.
\eeaa
Thus,  $[\K,\overline{Y}]=0$ in $\widetilde{\O}_{\delta'',R_0}\cap\E_{R_0}$, $\delta''\in(0,\delta']$. We can now  extend $\K$ to $\widetilde{\O}_{\delta_3,R_0}$, $\delta_3\leq\delta''$, by solving the ordinary differential equation $[\K,\overline{Y}]=0$. This completes the proof of the lemma.
\end{proof}

We prove now that the vector field $\K$ is indeed a Killing vector-field (and verifies the other identities in \eqref{ext50}) in a small open set $\widetilde{\O}_{\delta,R_0}$. An argument of this type was used in \cite[Section 4]{AlIoKl}. For $|t|$ sufficiently small and $p_0\in \delta_{\Sigma_1}(\mathcal{U}_{R_0})$ we define, in a small neighborhood of $p_0$, the map $\Psi_{t,\K}$ obtained by flowing a parameter distance $t$ along the integral curves of $\K$. Let
\begin{equation*}
\g^t=\Psi_{t,\K}^\ast(\g)\qquad\text{ and }\qquad\T^t=\Psi^\ast_{t,\K}(\T).
\end{equation*}
The tensor $\g^t$ is a smooth Lorentz metric that satisfies the Einstein vacuum equations, and $\T^t$ is a smooth Killing vector-field for $\g^t$, in a small neighborhood of $p_0$ and for $|t|$ sufficiently small. In addition, since $\K$ is tangent to the hypersurface $\{y=R_0\}$, it follows from the induction hypothesis that $\g^t=\g$ and $\T^t=\T$ in a small neighborhood of $p_0$ intersected with $\E_{R_0}$. In addition, using the second identity in \eqref{ext53}, with $\Psi_t=\Psi_{t,\K}$,
\begin{equation*}
\frac{d}{dt}\Psi_{t}^\ast\overline{Y}=\lim_{h\to 0}\frac{\Psi^\ast_{t-h}\overline{Y}-\Psi^\ast_{t}\overline{Y}}{-h}=-\Psi_t^\ast\big(\lim_{h\to 0}\frac{\Psi_{-h}^\ast\overline{Y}-\Psi_0^\ast\overline{Y}}{-h}\big)=-\Psi_t^\ast(\mathcal{L}_\K\overline{Y})=0.
\end{equation*}
Thus $\Psi_{t}^\ast\overline{Y}=\overline{Y}$ and we infer that  ${\D^t}_{\overline{Y}}\overline{Y}=0$ in a small neighborhood of $p_0$, for $|t|$ sufficiently small, where $\D^t$ denotes the covariant derivative with respect to $\g^t$. The main step in proving the Main Claim is the following proposition:

\begin{proposition}\label{mainnew2}
Assume $p_0\in\delta_{\Sigma_1}(\mathcal{U}_{R_0})$, $\g'$ is a smooth Lorentz metric in $B_{\delta_4}(p_0)$, $\delta_4\in(0,\delta_3]$, such that $(B_{\delta_4}(p_0),\g')$ is a smooth Einstein vacuum spacetime, and $\T'$ is a smooth Killing vector-field for the metric $\g'$ in $B_{\delta_4}(p_0)$. In addition, assume that
\begin{equation*}
\begin{cases}
&\g'=\g\quad\text{ and }\quad\T'=\T\quad\text{ in }\E_{R_0}\cap B_{\delta_4}(p_0);\\
&\D'_{\overline{Y}}\overline{Y}=0\quad\text{ in }B_{\delta_4}(p_0),
\end{cases}
\end{equation*}
where $\D'$ denotes the covariant derivative induced by the metric $\g'$. Then $\g'=\g$ and $\T'=\T$ in $B_{\delta_5}(p_0)$ for some $\delta_5\in(0,\delta_4]$.
\end{proposition}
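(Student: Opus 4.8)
The plan is to convert the statement into a unique continuation problem for a coupled system of covariant wave equations (for the metric and the Killing field, in suitable harmonic-type gauge) together with transport equations, and then to apply the geometric Carleman estimates of \cite{IoKl}, \cite{IoKl2}, \cite{Al}, \cite{AlIoKl} using the $\T$-conditional pseudo-convexity of the function $y$ established in Lemma \ref{Tpseudo}. The data for the unique continuation problem is prescribed on the hypersurface $\{y=R_0\}$: by hypothesis $\g'=\g$ and $\T'=\T$ on $\E_{R_0}\cap B_{\delta_4}(p_0)$, and in particular all derivatives of the difference tensors vanish along the portion $S_{\delta_4,R_0}=\{y=R_0\}\cap B_{\delta_4}(p_0)$ of this hypersurface; we want to conclude that the differences vanish in a full neighborhood $B_{\delta_5}(p_0)$.

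First I would set up the appropriate system. Following the scheme of \cite{AlIoKl}, one works with the pair $(\g,\T)$ and $(\g',\T')$ and forms the difference. One introduces a common coordinate or gauge normalization adapted to $\overline{Y}$: since $\D_{\overline{Y}}\overline{Y}=0$ and $\D'_{\overline{Y}}\overline{Y}=0$ and $\overline{Y}(y)\geq\delta_3>0$, the integral curves of $\overline{Y}$ provide a geodesic foliation for both metrics along which $y$ is strictly monotone, so one can use $y$ (or an affine parameter along $\overline{Y}$) as one of the coordinates. In this gauge the Einstein vacuum equations for $\g$ and $\g'$, written as wave equations $\Box_\g \g = \mathcal{N}(\g,\partial\g)$, become a system to which the reduced (harmonic-type) Einstein equations apply; the Killing fields $\T,\T'$ satisfy $\Box_\g \T = -\mathrm{Ric}\cdot\T = 0$ (the covariant wave equation for Killing fields in vacuum), coupled through the connection coefficients to the metric. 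The key structural point, exactly as in \cite{AlIoKl}, is that the difference quantities $(\g'-\g, \T'-\T)$ together with their first derivatives and a finite collection of auxiliary tensors (deformation tensors, curvature differences) satisfy a closed system of the schematic form $\Box_\g V = M(V,\partial V,W)$, $\overline{Y}(W) = M'(V,\partial V, W)$, i.e. wave equations coupled to transport (ODE) equations along $\overline{Y}$, with all quantities vanishing to infinite order on $S_{\delta_4,R_0}$.

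Next I would invoke the Carleman estimate. The function $y$ satisfies the $\T$-conditional pseudo-convexity condition \eqref{mainconcl}–\eqref{X1} at $p_0$ (since $p_0\in\delta_{\Sigma_1}(\mathcal{U}_{R_0})$ and $R_0\geq y_0$ forces $y(p_0)=R_0$, hence the hypothesis \eqref{mainneed} holds, possibly after shrinking so that $p_0\in\Sigma_{1+(8\widetilde{C}_1\widetilde{C}_2)^{-1/2}}$; for $R_0$ in the finitely-generated initial range this is guaranteed, and for larger $R_0$ one is automatically in the asymptotic region where $\mu=1$ works). This is precisely the hypothesis needed to run the Carleman estimate of \cite[Section 7]{IoKl} (its refined version in \cite{IoKl2} and \cite{AlIoKl}) with weight built from $y$: one obtains, for a suitable large parameter $\lambda$, an $L^2$ inequality controlling $e^{\lambda f}$-weighted norms of $V$ and $\partial V$ on a small lens-shaped region near $p_0$ bounded by $\{y=R_0\}$, with the right-hand side absorbing the quadratic nonlinearities for $\lambda$ large, and with the transport equations for $W$ handled by the companion estimate for ODEs along $\overline{Y}$ (again as in \cite{AlIoKl}, Section 4). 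The only subtlety is the presence of the Killing field $\T$ in the pseudo-convexity condition: this is exactly why one needs $[\T,\K]=0$ type structure and why the $\T$-\emph{conditional} version of the estimate is used — one commutes with $\T$ and uses that $\T$ is tangent to the level sets of $y$ at the relevant order. Since the weighted norms of $V,\partial V,W$ vanish on $\{y=R_0\}$, the Carleman inequality forces $V\equiv 0$, $W\equiv 0$ in a neighborhood $B_{\delta_5}(p_0)$, i.e. $\g'=\g$ and $\T'=\T$ there.

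The main obstacle is the bookkeeping in the first step: arranging that the \emph{differences} of metrics and Killing fields — not assumed a priori to be expressible in a common coordinate system — satisfy a genuinely closed system of wave-plus-transport equations with the nonlinearity having the structure needed for Carleman absorption, and verifying that all these quantities do vanish to infinite order on $S_{\delta_4,R_0}$ (this uses $\g'=\g$, $\T'=\T$ on the open set $\E_{R_0}\cap B_{\delta_4}(p_0)$, whose closure contains $S_{\delta_4,R_0}$, so in fact the differences vanish on an open set on one side, which is more than enough). This reduction is carried out in \cite{AlIoKl} in the analogous setting near $S_0$; here the only new input is that the pseudo-convexity now comes from Lemma \ref{Tpseudo} rather than from the exact Kerr computation, so one must check that the \emph{approximate} (size-$\overline{\varep}$) nature of the identities in Proposition \ref{clo110} does not degrade the Carleman estimate — but the gain in $\lambda$ dominates any fixed power of $\overline{\varep}$, so this is harmless once $\overline{\varep}$ is small depending on $\overline{A}$.
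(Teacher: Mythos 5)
Your overall strategy is the paper's: form difference quantities, close a coupled wave-plus-transport system along $\overline{Y}$ vanishing on the side $\{y<R_0\}$, and apply the Carleman estimates with weight built from $y$, using Lemma \ref{Tpseudo} for the wave part and an ODE Carleman inequality along $\overline{Y}$ for the transport part. But there is a genuine gap at the single most delicate point: the wave Carleman estimate available here (Lemma \ref{Cargen}) is only \emph{$\T$-conditional} --- its right-hand side contains the term $\eps^{-6}\|e^{-\lambda f_\eps}\T(\phi)\|_{L^2}$ with a constant that does not decay in $\lambda$. To absorb this term one needs, in addition to your schematic equations $\square_\g V=\mathcal{M}_\infty(V,\partial V,W)$ and $\overline{Y}(W)=\mathcal{M}_\infty(V,\partial V,W)$, a third, purely algebraic transport equation $\T(V)=\mathcal{M}_\infty(V,W)$ (no $\partial V$ on the right), as in \eqref{ext68} and \eqref{schem51}. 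Your system omits it, and your proposed substitute --- ``commute with $\T$ and use that $\T$ is tangent to the level sets of $y$'' --- does not supply it: tangency of $\T$ to $\{y=\mathrm{const}\}$ only enters the admissibility conditions on the weight $h_\eps$ (condition \eqref{po5}), not the control of $\T(\phi)$ for the unknown, and commuting the wave equation with $\T$ merely produces a wave equation for $\T(V)$ whose Carleman estimate again requires control of $\T(\T(V))$, an infinite regress. The paper obtains the missing equation by subtracting the identities $\mathcal{L}_{\T}\R=0$ and $\mathcal{L}_{\T'}\R'=0$, which forces the introduction of the auxiliary unknowns $dT=\T'-\T$ and $dF=F'-F$ together with their own $\overline{Y}$-transport equations \eqref{schem10}--\eqref{schem11}; this is exactly where the hypothesis that $\T'$ is Killing for $\g'$ is used quantitatively, and without it the argument does not close.

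A secondary, related divergence: your reduction proposes harmonic-type gauge wave equations for $\g'-\g$ and $\T'-\T$. In this ill-posed continuation setting one cannot construct common harmonic coordinates for $\g$ and $\g'$ agreeing on the side $\E_{R_0}\cap B_{\delta_4}(p_0)$ without solving another unique continuation problem, so this route would not go through as stated. The paper's device --- which you partially identify --- is to use the hypothesis $\D'_{\overline{Y}}\overline{Y}=\D_{\overline{Y}}\overline{Y}=0$ to build frames $v_{(a)}$, $v'_{(a)}$ parallel transported along the common vector-field $\overline{Y}$ and coinciding on $\E_{R_0}\cap B_{\delta'}(p_0)$; then the frame components $dv$, connection coefficients $d\Gamma$ (and their first coordinate derivatives) satisfy pure transport equations along $\overline{Y}$, and the only wave equation in the system \eqref{ext60} is the one for the curvature difference $dR$, not for the metric difference. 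Repairing your write-up therefore requires both replacing the gauge-based reduction by the frame/transport one of \cite{AlIoKl} and, crucially, adding the derivation of $\T(dR)=\mathcal{M}_\infty(dF)+\mathcal{M}_\infty(d\Gamma)+\mathcal{M}_\infty(dT)+\mathcal{M}_\infty(dv)+\mathcal{M}_\infty(dR)$ before invoking Lemma \ref{extendedCarl2}.
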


Assuming the proposition and Lemma \ref{extendedCarl2}, which we prove below, we complete now the proof of the Main Claim. It follows from Proposition  \ref{mainnew2} that $\K$ is a Killing vector-field in $B_{\delta_5}(p_0)$, for any $p_0\in\delta_{\Sigma_1}(\mathcal{U}_{R_0})$. In addition, since $\Psi^\ast_{t,\K}(\T)=\T$ for $|t|$ sufficiently small, it follows that $[\T,\K]=\mathcal{L}_{\K}\T=0$ in $B_{\delta_5}(p_0)$. Finally, in $B_{\delta_5}(p_0)$,
\begin{equation*}
\square_\g(\K^\mu\si_\mu)=\K^\mu\D^\al\D_\al\sigma_\mu=\K^\mu\D_\mu(\D^\al\si_\al)=-\mathcal{L}_{\K}(\FF^2)=0,
\end{equation*}
(using $\square_\g\K=0$, $\D\K$ is antisymmetric, $\D\sigma$ is symmetric, $\D^\al\sigma_\al=-\FF^2$, see \eqref{Ernst1}), and
\begin{equation*}
\T(\K^\mu\si_\mu)=\K(\T(\sigma))=0.
\end{equation*}
Since $\K^\mu\si_\mu=0$ in $B_{\delta_5}(p_0)\cap\E_{R_0}$ (the induction hypothesis), it follows from Lemma \ref{extendedCarl2} below, with $H=0$, that $\K^\mu\si_\mu=0$ in $B_{\delta_6}(p_0)$, $\delta_6\in(0,\delta_5]$.

To summarize, we proved that $\K$ extends to the open set
$\widetilde{\O}_{\delta_6,R_0}=\cup_{p_0\in\delta_{\Sigma_1}(\mathcal{U}_{R_0})}B_{\delta_6}(p_0)$,
$\delta_6=\delta_6(\overline{A},\delta_0)>0$, as a smooth vector,
and the identities in \eqref{ext50} hold in this set. Using the
inclusion \eqref{ext20}, it follows that $\K$ is well defined and
satisfies the identities \eqref{ext50} in a small neighborhood of
$\mathcal{U}_{R_0+\delta_6^2}$. Thus we can extend $\K$ to the
region $\E_{R_0+\delta_6^2}$, by solving the ordinary differential
equation $[\T,\K]=0$. The Main Claim follows.

\subsection{Proof of Proposition \ref{mainnew2}} We prove proposition \ref{mainnew2}
following the same scheme as in the proof of \cite[Proposition 4.3]{AlIoKl}.
We first  fix some smooth frames $v_{(1)}, v_{(2)}, v_{(3)}, v_{(4)}=\overline{Y}$ and ${v'}_{(1)}, {v'}_{(2)}, {v'}_{(3)}, {v'}_{(4)}=\overline{Y}$ in a small neighborhood $B_{\delta'}(p_0)$, such that, for $a=1,2,3,4$,
\begin{equation*}
\begin{split}
&\D_{\overline{Y}}v_{(a)}=0\text{ and }{\D'}_{\overline{Y}}{v'}_{(a)}=0\quad \text{ in }B_{\delta'}(p_0);\\
&v_{(a)}={v'}_{(a)}\quad\text{ in }\E_{R_0}\cap B_{\delta'}(p_0).
\end{split}
\end{equation*}
The idea of the proof is to derive ODE's for the differences $dv=v'-v$, $d\Gamma=\Gamma'-\Gamma$, $dT=\T'-\T$ and $dF=F'-F$, with source terms in $dR=\R'-\R$. We combine these ODE's with an equation for $\square_\g(dR)$ and equation for $\T(dR)$. Finally, we prove uniqueness of solutions of the resulting coupled system, see Lemma \ref{extendedCarl2}, using Carleman inequalities as in \cite{IoKl}, \cite{IoKl2}, \cite{Al}, \cite{AlIoKl}.

As in the proof of \cite[Proposition 4.3]{AlIoKl}, we define, for $a,b,c,d=1,\ldots 4$ and $\al,\be=0,\ldots,3$,
\begin{equation}\label{ext61}
\begin{split}
&(d\Gamma)_{(a)(b)(c)}=\Gamma'_{(a)(b)(c)}-{\Gamma}_{(a)(b)(c)}=\g'({v'}_{(a)},\D'_{{v'}_{(c)}}{v'}_{(b)})-\g(v_{(a)},\D_{v_{(c)}}v_{(b)});\\
&(\partial d\Gamma)_{\al(a)(b)(c)}=\pr_\al[(d\Gamma)_{(a)(b)(c)}];\\
&(dR)_{(a)(b)(c)(d)}=\R'({v'}_{(a)},{v'}_{(b)},{v'}_{(c)},{v'}_{(d)})-\R(v_{(a)},v_{(b)},v_{(c)},v_{(d)});\\
&(\partial dR)_{\al(a)(b)(c)(d)}=\pr_\al[(dR)_{(a)(b)(c)(d)}];\\
&(dv)_{(a)}^\be={v'}^\be_{(a)}-v^\be_{(a)}\qquad\text{ where }v_{(a)}=v^\be_{(a)}\pr_\be\text{ and }{v'}_{(a)}={v'}^\be_{(a)}\pr_\be;\\
&(\partial  dv)^\be_{\al(a)}=\pr_\al[(dv)_{(a)}^\be].
\end{split}
\end{equation}
As before, the coordinate  frame $\pr_0,\ldots,\pr_3$ is induced by the diffeomorphism $\Phi_1$. Let $\g_{(a)(b)}=\g(v_{(a)},v_{(b)})$, $\g'_{(a)(b)}=\g'({v'}_{(a)},{v'}_{(b)})$. The identities $\D_{\overline{Y}}v_{(a)}=\D'_{\overline{Y}}{v'}_{(a)}=0$ show that $\overline{Y}(\g_{(a)(b)})=\overline{Y}(\g'_{(a)(b)})=0$. Since $\g_{(a)(b)}=\g'_{(a)(b)}$ in $\E_{R_0}\cap B_{\delta'(p_0)}$ it follows  that
\begin{equation}\label{a9}
\g_{(a)(b)}=\g'_{(a)(b)}:=h_{(a)(b)}\,\text{ and }\,\overline{Y}(h_{(a)(b)})=0\text{ in }B_{\delta''}(p_0),
\end{equation}
for some constant $\delta''=\delta''(\overline{A},\delta_0)\in(0,\delta']$. Clearly, $\Gamma_{(a)(b)(4)}=\Gamma'_{(a)(b)(4)}=0$. We use now the definition of the Riemann curvature tensor to find a system of equations for $\overline{Y}[(d\Gamma)_{(a)(b)(c)}]$. We have
\begin{equation*}
\begin{split}
\R_{(a)(b)(c)(d)}&=\g(v_{(a)},\D_{v_{(c)}}(\D_{v_{(d)}}v_{(b)})-\D_{v_{(d)}}(\D_{v_{(c)}}v_{(b)})-\D_{[v_{(c)},v_{(d)}]}v_{(b)})\\
&=\g(v_{(a)},\D_{v_{(c)}}(\g^{(m)(n)}\Gamma_{(m)(b)(d)}v_{(n)}))-\g(v_{(a)},\D_{v_{(d)}}(\g^{(m)(n)}\Gamma_{(m)(b)(c)}v_{(n)}))\\
&+\g^{(m)(n)}\Gamma_{(a)(b)(n)}(\Gamma_{(m)(c)(d)}-\Gamma_{(m)(d)(c)})\\
&=v_{(c)}(\Gamma_{(a)(b)(d)})-v_{(d)}(\Gamma_{(a)(b)(c)})+\g^{(m)(n)}\Gamma_{(a)(b)(n)}(\Gamma_{(m)(c)(d)}-\Gamma_{(m)(d)(c)})\\
&+\g_{(a)(n)}[\Gamma_{(m)(b)(d)}v_{(c)}(\g^{(m)(n)})-\Gamma_{(m)(b)(c)}v_{(d)}(\g^{(m)(n)})]\\
&+\g^{(m)(n)}(\Gamma_{(m)(b)(d)}\Gamma_{(a)(n)(c)}-\Gamma_{(m)(b)(c)}\Gamma_{(a)(n)(d)}).
\end{split}
\end{equation*}
We set $d=4$ and use $\Gamma_{(a)(b)(4)}=v_{(4)}(\g^{(a)(b)})=0$ and $\g^{(a)(b)}=h^{(a)(b)}$; the result is
\begin{equation*}
\overline{Y}(\Gamma_{(a)(b)(c)})=-h^{(m)(n)}\Gamma_{(a)(b)(n)}\Gamma_{(m)(4)(c)}-\R_{(a)(b)(c)(4)}.
\end{equation*}
Similarly,
\begin{equation*}
\overline{Y}({\Gamma'}_{(a)(b)(c)})=-h^{(m)(n)}{\Gamma'}_{(a)(b)(n)}{\Gamma'}_{(m)(4)(c)}-{\R'}_{(a)(b)(c)(4)}.
\end{equation*}
We subtract these two identities to derive
\begin{equation}\label{a20}
\overline{Y}[(d\Gamma)_{(a)(b)(c)})]={}^{(1)}F_{(a)(b)(c)}^{(d)(e)(f)}(d\Gamma)_{(d)(e)(f)}-(dR)_{(a)(b)(c)(4)}
\end{equation}
for some smooth function ${}^{(1)}F$. This can be written schematically in the form
\begin{equation}\label{schem1}
\overline{Y}(d\Gamma)=\mathcal{M}_\infty(d\Gamma)+\mathcal{M}_\infty(d R).
\end{equation}
We will use such schematic equations for simplicity of notation\footnote{In general, given $H=(H_1,\ldots H_L):B_{\delta''}(p_0)\to\mathbb{R}^L$ we let $\mathcal{M}_\infty(H):B_{\delta''}(p_0)\to\mathbb{R}^{L'}$ denote vector-valued functions of the form ${\mathcal{M}_\infty(H)}_{l'}=\sum_{l=1}^LA_{l'}^lH_l$, where the coefficients $A_{l'}^l$ are smooth on $B_{\delta''}(p_0)$.}. By differentiating \eqref{schem1}, we also derive
\begin{equation}\label{schem2}
\overline{Y}(\partial  d\Gamma)=\mathcal{M}_\infty(d\Gamma)+\mathcal{M}_\infty(\partial d\Gamma)+\mathcal{M}_\infty(dR)+\mathcal{M}_\infty(\partial dR).
\end{equation}

With the notation in \eqref{ext61}, since $[v_{(4)},v_{(b)}]=-\D_{v_{(b)}}v_{(4)}=-{\Gamma^{(c)}}_{(4)(b)}v_{(c)}$, we have
\begin{equation*}
v_{(4)}^\al\pr_\al(v_{(b)}^\be)-v_{(b)}^\al\pr_\al(v_{(4)}^\be)=-{\Gamma}_{(a)(4)(b)}v_{(c)}^\be\g^{(a)(c)}.
\end{equation*}
Similarly,
\begin{equation*}
v_{(4)}^\al\pr_\al({v'}_{(b)}^\be)-{v'}_{(b)}^\al\pr_\al(v_{(4)}^\be)=-{\Gamma'}_{(a)(4)(b)}{v'}_{(c)}^\be{\g'}^{(a)(c)}.
\end{equation*}
We subtract these two identities to conclude that, schematically,
\begin{equation}\label{schem3}
\overline{Y}(dv)=\mathcal{M}_\infty(d\Gamma)+\mathcal{M}_\infty(dv).
\end{equation}
By differentiating \eqref{schem3} we also have
\begin{equation}\label{schem4}
\overline{Y}(\partial dv)=\mathcal{M}_\infty(d\Gamma)+\mathcal{M}_\infty(\partial d\Gamma)+\mathcal{M}_\infty(dv)+\mathcal{M}_\infty(\partial dv).
\end{equation}

We derive now a wave equation for $dR$. We start from the identity
\begin{equation*}
(\square_\g  \R)_{(a)(b)(c)(d)}-(\square_{\g'}{\R'})_{(a)(b)(c)(d)}=\mathcal{M}_\infty(dR),
\end{equation*}
which follows from the standard wave equations satisfied by $\R$ and $\R'$ and the fact that $\g^{(m)(n)}={\g'}^{(m)(n)}=h^{(m)(n)}$. We also have
\begin{equation*}
\begin{split}
&\D_{(m)}\R_{(a)(b)(c)(d)}-{\D'}_{(m)}{\R'}_{(a)(b)(c)(d)}\\
&=\mathcal{M}_\infty(dv)+\mathcal{M}_\infty(d\Gamma)+\mathcal{M}_\infty(dR)+\mathcal{M}_\infty(\partial dR).
\end{split}
\end{equation*}
It follows from the last two equations that
\begin{equation*}
\begin{split}
&\g^{(m)(n)}v_{(n)}(v_{(m)}(\R_{(a)(b)(c)(d)}))-{\g'}^{(m)(n)}{v'}_{(n)}({v'}_{(m)}({\R'}_{(a)(b)(c)(d)}))\\
&=\mathcal{M}_\infty(dv)+\mathcal{M}_\infty(d\Gamma)+\mathcal{M}_\infty(\partial d\Gamma)+\mathcal{M}_\infty(dR)+\mathcal{M}_\infty(\partial dR).
\end{split}
\end{equation*}
Since $\g^{(m)(n)}={\g'}^{(m)(n)}$ it follows that
\begin{equation*}
\begin{split}
&\g^{(m)(n)}v_{(n)}(v_{(m)}((dR)_{(a)(b)(c)(d)}))\\
&=\mathcal{M}_\infty(dv)+\mathcal{M}_\infty(\partial dv)+\mathcal{M}_\infty(d\Gamma)+\mathcal{M}_\infty(\partial d\Gamma)+\mathcal{M}_\infty(dR)+\mathcal{M}_\infty(\partial dR).
\end{split}
\end{equation*}
Thus
\begin{equation}\label{schem5}
\square_\g(dR)=\mathcal{M}_\infty(dv)+\mathcal{M}_\infty(\partial dv)+\mathcal{M}_\infty(d\Gamma)+\mathcal{M}_\infty(\partial d\Gamma)+\mathcal{M}_\infty(dR)+\mathcal{M}_\infty(\partial dR).
\end{equation}
This is our main wave equation.

We collect now equations \eqref{schem1}, \eqref{schem2}, \eqref{schem3}, \eqref{schem4}, and \eqref{schem5}:
\begin{equation}\label{ext60}
\begin{split}
&\overline{Y}(d\Gamma)=\mathcal{M}_\infty(d\Gamma)+\mathcal{M}_\infty(dR);\\
&\overline{Y}(\partial d\Gamma)=\mathcal{M}_\infty(d\Gamma)+\mathcal{M}_\infty(\partial d\Gamma)+\mathcal{M}_\infty(dR)+\mathcal{M}_\infty(\partial dR);\\
&\overline{Y}(dv)=\mathcal{M}_\infty(dv)+\mathcal{M}_\infty(d\Gamma);\\
&\overline{Y}(\partial dv)=\mathcal{M}_\infty(dv)+\mathcal{M}_\infty(\partial dv)+\mathcal{M}_\infty(d\Gamma)+\mathcal{M}_\infty(\partial d\Gamma);\\
&\square_\g(dR)=\mathcal{M}_\infty(dv)+\mathcal{M}_\infty(\partial dv)+\mathcal{M}_\infty(d\Gamma)+\mathcal{M}_\infty(\partial d\Gamma)+\mathcal{M}_\infty(dR)+\mathcal{M}_\infty(\partial dR).
\end{split}
\end{equation}
This is our first main system of equations.

We derive now an additional system of this type, to exploit the existence of the Killing vector-fields $\T$ and $\T'$. For $a,b=1,\ldots,4$ let
\begin{equation}\label{ext62}
\begin{split}
&(dT)_{(a)}={\T'}_{(a)}-\T_{(a)}=\g'(\T',L'_{(a)})-\g(\T,L_{(a)});\\
&(dF)_{(a)(b)}={F'}_{(a)(b)}-F_{(a)(b)}={\D'}_{(a)}{\T'}_{(b)}-\D_{(a)}\T_{(b)}.
\end{split}
\end{equation}
Using the identities $\D_{v_{(4)}}v_{(b)}=0$ and ${\D'}_{v_{(4)}}{v'}_{(b)}=0$ it follows that $v_{(4)}(\T_{(b)})=F_{(4)(b)}$ and $v_{(4)}({\T'}_{(b)})=F'_{(4)(b)}$. Thus
\begin{equation}\label{schem10}
\overline{Y}(dT)=\mathcal{M}_\infty(dF).
\end{equation}
We also have, using again $\D_{v_{(4)}}v_{(b)}=0$,
\begin{equation*}
v_{(4)}(F_{(a)(b)})=\D_{(4)}F_{(a)(b)}=\g^{(c)(d)}\T_{(d)}\R_{(c)(4)(a)(b)}=h^{(c)(d)}\T_{(d)}\R_{(c)(4)(a)(b)}.
\end{equation*}
Similarly,
\begin{equation*}
v_{(4)}({F'}_{(a)(b)})=h^{(c)(d)}{\T'}_{(d)}\R'_{(c)(4)(a)(b)}.
\end{equation*}
Thus, in our schematic notation,
\begin{equation}\label{schem11}
\overline{Y}(dF)=\mathcal{M}_\infty(\T)+\mathcal{M}_\infty(\R).
\end{equation}

Finally, we use the identities
\begin{equation*}
\begin{split}
0=(\mathcal{L}_{\T}\R)_{(a)(b)(c)(d)}&=\T^{(m)}\D_{(m)}\R_{(a)(b)(c)(d)}+\D_{(a)}\T^{(m)}\R_{(m)(b)(c)(d)}+\D_{(b)}\T^{(m)}\R_{(a)(m)(c)(d)}\\
&+\D_{(c)}\T^{(m)}\R_{(a)(b)(m)(d)}+\D_{(d)}\T^{(m)}\R_{(a)(b)(c)(m)},
\end{split}
\end{equation*}
and
\begin{equation*}
\begin{split}
0&=(\mathcal{L}_{\T'}\R')_{(a)(b)(c)(d)}={\T'}^{(m)}{\D'}_{(m)}{\R'}_{(a)(b)(c)(d)}+{\D'}_{(a)}{\T'}^{(m)}{\R'}_{(m)(b)(c)(d)}\\
&+{\D'}_{(b)}{\T'}^{(m)}{\R'}_{(a)(m)(c)(d)}+{\D'}_{(c)}{\T'}^{(m)}{\R'}_{(a)(b)(m)(d)}+{\D'}_{(d)}{\T'}^{(m)}{\R'}_{(a)(b)(c)(m)}.
\end{split}
\end{equation*}
Thus
\begin{equation*}
{\T'}^{(m)}{\D'}_{(m)}{\R'}_{(a)(b)(c)(d)}-\T^{(m)}\D_{(m)}\R_{(a)(b)(c)(d)}=\mathcal{M}_\infty(dF)+\mathcal{M}_\infty(dR),
\end{equation*}
which easily gives
\begin{equation}\label{schem12}
\T(dR)=\mathcal{M}_\infty(dF)+\mathcal{M}_\infty(dR)+\mathcal{M}_\infty(d\Gamma)+\mathcal{M}_\infty(dT)+\mathcal{M}_\infty(dv).
\end{equation}
We collect now equations \eqref{schem10}, \eqref{schem11}, and \eqref{schem12}, thus
\begin{equation}\label{schem51}
\begin{split}
&\overline{Y}(dT)=\mathcal{M}_\infty(dF);\\
&\overline{Y}(dF)=\mathcal{M}_\infty(dT)+\mathcal{M}_\infty(dR);\\
&\T(dR)=\mathcal{M}_\infty(dF)+\mathcal{M}_\infty(d\Gamma)+\mathcal{M}_\infty(dT)+\mathcal{M}_\infty(dL)+\mathcal{M}_\infty(dR).
\end{split}
\end{equation}
This is our second main system of differential equations. Since $\g'=\g$ and $\T'=\T$ in $\E_{R_0}\cap B_{\delta''}(p_0)$, the functions $d\Gamma,\partial d\Gamma, dv, \partial dv, dT, dF, dR$ vanish in $\E_{R_0}\cap B_{\delta''}(p_0)$. Therefore, using both systems \eqref{ext60} and \eqref{schem51}, the lemma is a consequence of Lemma \ref{extendedCarl2} below.

\begin{lemma}\label{extendedCarl2}
Assume $\delta>0$, $p_0\in\delta_{\Sigma_1}(\mathcal{U}_{R_0})$ and $G_i,H_j:B_{\delta}(p_0)\to\mathbb{R}$ are smooth functions, $i=1,\ldots,I$, $j=1,\ldots,J$. Let $G=(G_1,\ldots,G_I)$, $H=(H_1,\ldots,H_J)$, $\pr G=(\pr_0G_1,\ldots,\pr_4G_I)$ and assume that, in $B_{\delta}(p_0)$,
\begin{equation}\label{ext68}
\begin{cases}
&\square_\g G=\mathcal{M}_\infty(G)+\mathcal{M}_\infty(\pr G)+\mathcal{M}_\infty(H);\\
&\T(G)=\mathcal{M}_\infty(G)+\mathcal{M}_\infty(H);\\
&\overline{Y}(H)=\mathcal{M}_\infty(G)+\mathcal{M}_\infty(\pr G)+\mathcal{M}_\infty(H).
\end{cases}
\end{equation}
Assume that $G=0$ and $H=0$ in $B_{\delta}(p_0)\cap\E_{R_0}=\{x\in B_{\delta}(p_0):y(x)<R_0\}$. Then $G=0$ and $H=0$ in $B_{\widetilde{\delta}}(p_0)$ for some $\widetilde{\delta}\in(0,\delta)$ sufficiently small.
\end{lemma}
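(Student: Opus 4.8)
\medskip
\noindent This is a unique continuation statement for the coupled system \eqref{ext68}, and it is exactly what is needed to finish Proposition \ref{mainnew2}: there $G$ plays the role of $dR=\R'-\R$, and $H$ that of the collection $(d\Gamma,\partial d\Gamma,dv,\partial dv,dT,dF)$, so \eqref{ext68} consolidates the two systems \eqref{ext60} and \eqref{schem51}. The plan is to propagate the vanishing of $(G,H)$ from $\{y<R_0\}$, across $\{y=R_0\}$, into a full two-sided neighborhood of $p_0$, by the method of \cite{IoKl}, \cite{IoKl2}, \cite{AlIoKl}: two weighted estimates carrying a large parameter $\lambda$, one for the wave part of the system and one for its ODE part, tied together by an absorption argument. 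The analytic input for the wave part is the $\T$-conditional pseudo-convexity \eqref{mainconcl}--\eqref{X1} of $y$ at $p_0$; this holds because $p_0\in\delta_{\Sigma_1}(\mathcal{U}_{R_0})$ with $R_0\geq y_0$ forces $y(p_0)=R_0\geq y_0>(1+\sqrt{1-4B})/2+\widetilde{C}_2^{-2}$, so \eqref{mainneed} is met and Lemma \ref{Tpseudo} applies (the region $r(p_0)$ large being covered by the same lemma, with $\mu=1$). The input for the ODE part is that $\overline{Y}$ is uniformly transversal to the level sets of $y$, $\overline{Y}(y)\geq\delta_3>0$, by Lemma \ref{overY}.

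First I would fix, in a small ball $B_\delta(p_0)$, a Carleman weight $e^{\lambda f}$ with $f=f_{p_0}$ of the type constructed in \cite{IoKl}: a regularized function of the ``pseudo-distance'' to $p_0$, peaked at $p_0$, whose level sets near $p_0$ are $\T$-conditionally pseudo-convex --- this being exactly the content of Lemma \ref{Tpseudo} --- and, crucially, arranged (using $\overline{Y}(y)\geq\delta_3$) so that $\overline{Y}(f)\leq -c<0$ on $B_\delta(p_0)\cap\{y\geq R_0\}$. For such a weight one has, for $\lambda$ large and $\phi\in C_0^\infty(B_\delta(p_0))$, a wave Carleman estimate of the schematic form
\begin{equation*}
\lambda^3\|e^{\lambda f}\phi\|_{L^2}^2+\lambda\|e^{\lambda f}\partial\phi\|_{L^2}^2\lesssim\|e^{\lambda f}\square_\g\phi\|_{L^2}^2+C\lambda^{k}\|e^{\lambda f}\T(\phi)\|_{L^2}^2,
\end{equation*}
the last term (with a fixed exponent $k$) being the price paid for \emph{conditional}, rather than genuine, pseudo-convexity. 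Separately, writing $\overline{Y}(e^{2\lambda f}|\psi|^2)=2\lambda\,\overline{Y}(f)\,e^{2\lambda f}|\psi|^2+2e^{2\lambda f}\psi\,\overline{Y}(\psi)$, integrating over $B_\delta(p_0)$ (the divergence term vanishing since $\psi$ has compact support), and using $\overline{Y}(f)\leq -c<0$ on the support of $\psi$, one obtains for every $\psi\in C_0^\infty(B_\delta(p_0))$ supported in $\{y\geq R_0\}$ the transport estimate
\begin{equation*}
\lambda^2\|e^{\lambda f}\psi\|_{L^2(B_\delta(p_0))}^2\lesssim\|e^{\lambda f}\overline{Y}(\psi)\|_{L^2(B_\delta(p_0))}^2.
\end{equation*}

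Next I would apply these with a cutoff $\chi\in C_0^\infty(B_\delta(p_0))$ equal to $1$ on $B_{\delta/2}(p_0)$: the wave estimate to each component of $\chi G$, and the transport estimate to each component of $\chi H$ (legitimate, since $H$ --- hence $\chi H$ --- vanishes on $\{y<R_0\}$, so its support lies in $\{y\geq R_0\}$). All commutator terms $[\square_\g,\chi]$, $\T(\chi)$, $\overline{Y}(\chi)$ are supported in the shell $B_\delta(p_0)\setminus B_{\delta/2}(p_0)$, where $f\leq f(p_0)-\kappa$ for some $\kappa=\kappa(\delta)>0$, so their contributions are bounded by a fixed constant times $e^{2\lambda(f(p_0)-\kappa)}$ and are negligible compared with the left-hand sides as $\lambda\to\infty$. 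Using the three lines of \eqref{ext68} to dominate the source terms --- so that $\|e^{\lambda f}\chi\square_\g G\|^2\lesssim\|e^{\lambda f}(\chi G,\chi\partial G,\chi H)\|^2$, $\|e^{\lambda f}\chi\T(G)\|^2\lesssim\|e^{\lambda f}(\chi G,\chi H)\|^2$, and $\|e^{\lambda f}\chi\overline{Y}(H)\|^2\lesssim\|e^{\lambda f}(\chi G,\chi\partial G,\chi H)\|^2$, all modulo shell errors --- one obtains a closed system of inequalities for the three quantities $\|e^{\lambda f}\chi G\|^2$, $\|e^{\lambda f}\partial(\chi G)\|^2$, $\|e^{\lambda f}\chi H\|^2$; for $\lambda$ large enough every one of these is absorbed into the left-hand sides, so they are all $\lesssim\lambda^{-N}e^{2\lambda(f(p_0)-\kappa)}$ for some $N$. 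Dividing by $e^{2\lambda(f(p_0)-\kappa/2)}$, a lower bound for the weight on $B_{\delta/2}(p_0)$, and letting $\lambda\to\infty$ forces $G\equiv0$ and $H\equiv0$ on $B_{\delta/2}(p_0)$; one then takes $\widetilde\delta=\delta/2$. Tracking the precise $\lambda$-powers needed for the three estimates to fit together with room to absorb is routine and follows \cite[Section 4]{AlIoKl}.

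The main obstacle, and the conceptual crux, is that $y$ is \emph{not} pseudo-convex for $\square_\g$ in the classical sense --- Lemma \ref{Tpseudo} controls $X^\al X^\be\D_\al\D_\be y$ only for vectors $X$ nearly orthogonal to both $\T$ and $\D y$ --- so a standard Carleman estimate is unavailable and one is forced to build one tolerating the extra term $\|e^{\lambda f}\T(\phi)\|^2$ on its right-hand side. That term is harmless only because in \eqref{ext68} the variable $G$ satisfies a good first-order transport equation along the Killing field $\T$, which makes $\T(G)$ ``as good as'' $G$ and $H$ themselves; without this structural feature the scheme would not close. A secondary point is that Lemma \ref{Tpseudo} is valid precisely on $\{y\geq y_0\}$ --- which is why the exhaustion of $\E$ constructed in Section \ref{propertiesy} was arranged to begin above the threshold $(1+\sqrt{1-4B})/2$ --- and the transversal geodesic field $\overline{Y}$ of Lemma \ref{overY} supplies the direction along which the ODE part of the system is closed.
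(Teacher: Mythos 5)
Your proposal is essentially the paper's own proof: the same pair of Carleman estimates (a wave estimate valid under the $\T$-conditional pseudo-convexity of $y$ supplied by Lemma \ref{Tpseudo}, together with a transport estimate along $\overline{Y}$ using $\overline{Y}(y)\geq\delta_3$ from Lemma \ref{overY}), applied to cutoff versions of $G$ and $H$, with the three equations in \eqref{ext68} used to absorb the source terms for large $\lambda$ and the vanishing of $G,H$ on $\{y<R_0\}$ used to confine the cutoff errors to the region where the weight is comparatively small. The only slip is quantitative: you cannot conclude on all of $B_{\delta/2}(p_0)$, since a continuous weight cannot both drop by a fixed $\kappa$ on the shell $B_\delta\setminus B_{\delta/2}$ and stay within $\kappa/2$ of its peak on $\overline{B_{\delta/2}}$ — in the paper the weight is $h_\eps=y-y(p_0)+\eps$ plus the negligible perturbation $\eps^{12}N^{p_0}$ and the vanishing is obtained only on the much smaller ball $B_{\eps^{100}}(p_0)$, which is all the lemma requires.
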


Unique continuation theorems of this type in the case $H=0$ were
proved by two of the authors in \cite{IoKl} and \cite{IoKl2},
using Carleman estimates. It is not hard to adapt the proofs,
using the same Carleman estimates, to the general case. The
essential ingredients are the $\T$-conditional pseudo-convexity
property in Lemma \ref{Tpseudo} and the inequality
$y(p_0)\geq(1+\sqrt{1-4B})/2+\widetilde{C}^{-1}$, see
\eqref{mainneed}. We provide all the details below.

\subsection{Proof of Lemma \ref{extendedCarl2}}
We will use a Carleman estimate proved by two of the authors in \cite[Section 3]{IoKl}, which we recall below. We may assume that the value of $\delta$ in Lemma \ref{extendedCarl2} is sufficiently small. For $r\leq\delta$ let $B_r=B_r(p_0)$ Notice that, if $\T=\T^\al\partial_\al,\overline{Y}=\overline{Y}^\al\partial_\al$  in the coordinate frame induced by the diffeomorphism $\Phi_1$ then
\begin{equation}\label{po1}
\sup_{x\in B_\delta}\sum_{j=0}^4\sum_{\al=0}^3(|\partial^j\T^\al(x)|+|\partial^j\overline{Y}^\al(x)|)
\leq \widetilde{C}=\widetilde{C}(\overline{A}).
\end{equation}

\begin{definition}\label{psconvex}
A family of weights $h_\eps:B_{\eps^{10}}\to\mathbb{R}_+$, $\eps\in(0,\eps_1)$, $\eps_1\leq\delta$ will be called $\T$-conditional pseudo-convex if for any $\eps\in(0,\eps_1)$
\begin{equation}\label{po5}
\begin{split}
h_\eps(p_0)=\eps,\quad\sup_{x\in B_{\eps^{10}}}\sum_{j=1}^4\eps^j|\partial^jh_\eps(x)|\leq\eps/\eps_1,\quad |\T(h_\eps)(p_0)|\leq\eps^{10},
\end{split}
\end{equation}
\begin{equation}\label{po3.2}
\D^\alpha h_\eps(p_0)\D^\be h_\eps(p_0)(\D_\al h_\eps\D_\be h_\eps-\eps\D_\al\D_\be h_\eps)(p_0)\geq\eps_1^2,
\end{equation}
and there is $\mu\in[-\eps_1^{-1},\eps_1^{-1}]$ such that for all vectors $X=X^\alpha\partial_\alpha$ at $p_0$
\begin{equation}\label{po3}
\begin{split}
&\eps_1^2[(X^1)^2+(X^2)^2+(X^3)^2+(X^4)^2]\\
&\leq X^\al X^\be(\mu\g_{\al\be}-\D_\al\D_\be h_\eps)(p_0)+\eps^{-2}(|X^\al \T_\al(p_0)|^2+|X^\al\D_\al h_\eps(x_0)|^2).
\end{split}
\end{equation}
A function $e_\eps:B_{\eps^{10}}\to\mathbb{R}$ will be called a negligible perturbation if
\begin{equation}\label{smallweight}
\sup_{x\in B_{\eps^{10}}}|\partial^je_\eps(x)|\leq\eps^{10}\qquad\text{ for  }j=0,\ldots,4.
\end{equation}
\end{definition}

Our main Carleman estimate, see \cite[Section 3]{IoKl}, is the
following:

\begin{lemma}\label{Cargen}
Assume $\eps_1\leq \delta$, $\{h_\eps\}_{\eps\in(0,\eps_1)}$ is a $\T$-conditional pseudo-convex family, and $e_\eps$ is a negligible perturbation for any $\eps\in(0,\eps_1]$. Then there is $\eps\in (0,\eps_1)$ sufficiently small (depending only on $\eps_1$) and $C$ sufficiently large such that for any $\lambda\geq C$ and any $\phi\in C^\infty_0(B_{\eps^{10}})$
\begin{equation}\label{Car1gen}
\lambda \|e^{-\lambda f_\eps}\phi\|_{L^2}+\|e^{-\lambda f_\eps}|\partial \phi|\,\|_{L^2}\leq C\lambda^{-1/2}\|e^{-\lambda f_\eps}\,\square_{\g}\phi\|_{L^2}+\eps^{-6}\|e^{-\lambda f_\eps}\T(\phi)\|_{L^2},
\end{equation}
where $f_\ep=\ln (h_\eps+e_\eps)$.
\end{lemma}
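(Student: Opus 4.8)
This is a Carleman estimate of standard type for the wave operator, and the argument is the one carried out in \cite[Section 3]{IoKl}; I only outline it, since the weight $f_\eps$ and the operator $\square_\g$ here have exactly the structure treated there. The plan is the usual conjugation-and-commutator scheme. First I would set $\psi=e^{-\lambda f_\eps}\phi$, which ranges over $C^\infty_0(B_{\eps^{10}})$ together with $\phi$, and compute the conjugated operator
\begin{equation*}
L_\lambda\psi:=e^{-\lambda f_\eps}\square_\g\big(e^{\lambda f_\eps}\psi\big)=\square_\g\psi+2\lambda\,\D^\mu f_\eps\,\D_\mu\psi+\lambda(\square_\g f_\eps)\psi+\lambda^2(\D^\mu f_\eps\,\D_\mu f_\eps)\psi .
\end{equation*}
Then I would split $L_\lambda=S_\lambda+A_\lambda$ into the parts that are formally self-adjoint and skew-adjoint with respect to the volume measure $dV_\g$, namely $S_\lambda\psi=\square_\g\psi+\lambda^2(\D^\mu f_\eps\D_\mu f_\eps)\psi$ and $A_\lambda\psi=2\lambda\,\D^\mu f_\eps\D_\mu\psi+\lambda(\square_\g f_\eps)\psi$, so that
\begin{equation*}
\|L_\lambda\psi\|_{L^2}^2=\|S_\lambda\psi\|_{L^2}^2+\|A_\lambda\psi\|_{L^2}^2+\big\langle[S_\lambda,A_\lambda]\psi,\psi\big\rangle_{L^2} ,
\end{equation*}
and the whole estimate reduces to a lower bound for the commutator term.

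The main work is to expand $[S_\lambda,A_\lambda]$ by repeated integration by parts. The terms of top order in $\lambda$ are a first-order quadratic form $-4\lambda\int(\D_\al\D_\be f_\eps)\D^\al\psi\,\D^\be\psi\,dV_\g$, coming from $[\square_\g,2\lambda\D f_\eps\cdot\D]$, and a zeroth-order form $-4\lambda^3\int(\D^\al f_\eps\D^\be f_\eps\D_\al\D_\be f_\eps)\psi^2\,dV_\g$, coming from $[\lambda^2|\D f_\eps|^2,2\lambda\D f_\eps\cdot\D]$, together with lower-order terms whose coefficients are controlled by the $C^4$-norm of $f_\eps$, hence --- using \eqref{po5}, the negligibility \eqref{smallweight} of $e_\eps$, and the fact that $h_\eps+e_\eps$ is comparable to $\eps$ on $B_{\eps^{10}}$ --- by a constant depending only on $\eps_1$. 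Since $f_\eps=\ln(h_\eps+e_\eps)$, condition \eqref{po3.2} translates into $-\D^\al f_\eps\D^\be f_\eps\D_\al\D_\be f_\eps\geq c\,\eps_1^2\eps^{-4}$ and $\D^\mu f_\eps\D_\mu f_\eps\geq c\,\eps_1^3\eps^{-2}$ at $p_0$, hence, by the smallness of the derivatives of $h_\eps$ on $B_{\eps^{10}}$, on all of $B_{\eps^{10}}$; this bounds the zeroth-order form below by $c\,\eps_1^2\eps^{-4}\lambda^3\|\psi\|_{L^2}^2$. For the first-order form I would add a multiple of $\g_{\al\be}$ to the Hessian of $f_\eps$, using exactly the freedom in the constant $\mu$ of \eqref{po3}, and trade the resulting $\int\g(\D\psi,\D\psi)\,dV_\g$ term for another positive multiple of $\lambda^3\|\psi\|^2$ via the identity $\int\g(\D\psi,\D\psi)\,dV_\g=\lambda^2\int(\D^\mu f_\eps\D_\mu f_\eps)\psi^2\,dV_\g-\langle\psi,S_\lambda\psi\rangle_{L^2}$. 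Condition \eqref{po3} then makes the remaining Hessian form positive in every direction except the two-plane spanned by $\T$ and $\D f_\eps$: the $\D f_\eps$-direction is absorbed into the available term $\|A_\lambda\psi\|_{L^2}^2$, which controls $\lambda^2\|\D^\mu f_\eps\D_\mu\psi\|_{L^2}^2$, while the $\T$-direction is the one direction we are permitted to throw onto the right-hand side, producing a term $\lesssim\lambda\eps^{-3}\|\T(\psi)\|_{L^2}^2$.

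Collecting everything, for $\eps=\eps(\eps_1)$ small enough and all $\lambda\geq C(\eps_1)$ I would obtain, after absorbing $\|S_\lambda\psi\|^2$, $\tfrac12\|A_\lambda\psi\|^2$ and the lower-order terms (this is where $\lambda$ has to be taken large),
\begin{equation*}
\lambda^3\|\psi\|_{L^2}^2+\lambda\|\D\psi\|_{L^2}^2\leq C\|L_\lambda\psi\|_{L^2}^2+C\lambda\eps^{-3}\|\T(\psi)\|_{L^2}^2 .
\end{equation*}
Taking square roots, dividing by $\lambda^{1/2}$, and undoing the conjugation --- using $e^{-\lambda f_\eps}\partial_\al\phi=\partial_\al\psi+\lambda(\partial_\al f_\eps)\psi$ and $e^{-\lambda f_\eps}\T(\phi)=\T(\psi)+\lambda\,\T(f_\eps)\psi$, with $\lambda|\partial f_\eps|$ only a bounded (in $\lambda$) multiple of $\lambda$, hence absorbable into the left-hand side via the $\lambda\|\psi\|$ term, and with $|\T(f_\eps)|$ negligibly small on $B_{\eps^{10}}$ by \eqref{po5} and \eqref{smallweight} --- yields \eqref{Car1gen} with the stated, deliberately generous, factor $\eps^{-6}$. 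The only real obstacle I anticipate is the bookkeeping: organizing the expansion of $[S_\lambda,A_\lambda]$ and checking that each of the many lower-order terms is dominated by $\lambda^3\|\psi\|^2+\lambda\|\D\psi\|^2$ for $\lambda$ large. This is exactly the computation carried out in detail in \cite[Section 3]{IoKl}, and since the weights here satisfy the same Definition \ref{psconvex}, no new ideas are required.
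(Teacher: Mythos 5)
Your proposal matches the paper's own treatment: the paper gives no independent proof of Lemma \ref{Cargen}, stating it as quoted directly from \cite[Section 3]{IoKl}, which is exactly the source your conjugation-and-commutator outline reconstructs and ultimately defers to for the detailed bookkeeping. Since your sketch is the standard scheme behind that estimate (with the degenerate $\T$- and $\D h_\eps$-directions handled as in \cite[Section 3]{IoKl}), the two approaches are essentially the same.
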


We also need a Carleman inequality to exploit the last equation in
\eqref{ext68}.

\begin{lemma}\label{CarODE}
Assume $\eps\leq \delta$ is sufficiently small, $e_\eps$ is a negligible perturbation, and $h_\eps:B_{\eps^{10}}\to\mathbb{R}_+$ satisfies
\begin{equation}\label{ODEneeds}
h_{\eps}(p_0)=\eps,\quad\sup_{x\in B_{\eps^{10}}}\sum_{j=1}^2\eps^j|\partial^jh_{\eps}(x)|\leq 1,\quad |\overline{Y}(h_\eps)(p_0)|\geq \eps.
\end{equation}
Then there is $C$ sufficiently large such that for any $\lambda\geq C$ and any $\phi\in C^\infty_0(B_{\eps^{10}})$
\begin{equation}\label{Car1ODE}
\|e^{-\lambda f_\eps}\phi\|_{L^2}\leq 4(\eps\lambda)^{-1}\|e^{-\lambda f_\eps}\overline{Y}(\phi)\|_{L^2},
\end{equation}
where $f_\ep=\ln (h_\eps+e_\eps)$.
\end{lemma}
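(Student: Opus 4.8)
The plan is to reduce the estimate to a pointwise positivity statement for the zeroth-order term produced by conjugating the first-order operator $\overline{Y}$ with the exponential weight. Set $v=e^{-\lambda f_\eps}\phi$; since $h_\eps+e_\eps>0$ on $B_{\eps^{10}}$ (see below), $v$ is again a smooth function compactly supported in $B_{\eps^{10}}$, and a direct computation gives $e^{-\lambda f_\eps}\overline{Y}(\phi)=\overline{Y}(v)+\lambda\,\overline{Y}(f_\eps)\,v$. Hence it suffices to prove
\[
\|v\|_{L^2}\le 4(\eps\lambda)^{-1}\big\|\overline{Y}(v)+\lambda\,\overline{Y}(f_\eps)\,v\big\|_{L^2}
\]
for all $v\in C_0^\infty(B_{\eps^{10}})$ and all $\lambda\ge C$.

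First I would collect the elementary consequences of \eqref{ODEneeds} and of $e_\eps$ being a negligible perturbation on the tiny ball $B_{\eps^{10}}$, for $\eps$ small depending only on $\overline{A}$. From $\eps|\partial h_\eps|\le 1$ and $|e_\eps|\le\eps^{10}$ one gets $h_\eps+e_\eps\in[\eps/2,2\eps]$ throughout $B_{\eps^{10}}$, so $f_\eps=\ln(h_\eps+e_\eps)$ is well defined and smooth there. Writing $\overline{Y}=\overline{Y}^\al\partial_\al$ with $\overline{Y}^\al$ smooth and bounded, together with its first derivatives, in terms of $\overline{A}$ (Lemma \ref{overY}), and using $\eps^2|\partial^2 h_\eps|\le1$ and $|\partial e_\eps|\le\eps^{10}$, one obtains $|\partial(\overline{Y}(h_\eps+e_\eps))|\le\widetilde{C}\eps^{-2}$ on $B_{\eps^{10}}$. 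Since $|\overline{Y}(h_\eps)(p_0)|\ge\eps$, integrating this bound over the ball of radius $\eps^{10}$ shows that $\overline{Y}(h_\eps+e_\eps)$ has modulus $\ge\eps/2$ and a fixed sign on $B_{\eps^{10}}$. Consequently $a:=\overline{Y}(f_\eps)=\overline{Y}(h_\eps+e_\eps)/(h_\eps+e_\eps)$ satisfies $|a|\ge 1/4$ on $B_{\eps^{10}}$, again with a fixed sign.

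The core of the argument is then a one-line integration by parts. Put $g=\overline{Y}(v)+\lambda a v$, multiply by $\bar v$, integrate over $B_{\eps^{10}}$, and take real parts; since $v$ has compact support and $\overline{Y}$ is a real vector-field,
\[
\Re\int \overline{Y}(v)\,\bar v = \frac12\int \overline{Y}(|v|^2) = -\frac12\int (\div\overline{Y})\,|v|^2,
\]
where $\div\overline{Y}$ is a bounded function, $|\div\overline{Y}|\le\widetilde{C}$. Therefore $\Re\int g\,\bar v=\int(\lambda a-\frac12\div\overline{Y})\,|v|^2$. Because $|a|\ge1/4$ with a fixed sign and $|\div\overline{Y}|\le\widetilde C$, the coefficient $\lambda a-\frac12\div\overline{Y}$ has a fixed sign and modulus $\ge\lambda/8$ once $\lambda\ge C$ is large enough, so
\[
\frac{\lambda}{8}\|v\|_{L^2}^2\le\Big|\Re\int g\,\bar v\Big|\le\|g\|_{L^2}\|v\|_{L^2},
\]
i.e. $\|v\|_{L^2}\le 8\lambda^{-1}\|g\|_{L^2}\le 4(\eps\lambda)^{-1}\|g\|_{L^2}$ since $\eps\le1/2$. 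Undoing the substitution $v=e^{-\lambda f_\eps}\phi$ yields \eqref{Car1ODE}.

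No genuine difficulty is expected here: this is the standard transport-type Carleman inequality, whose proof is pure integration by parts, and it does not even use the $\T$-conditional pseudo-convexity needed for Lemma \ref{Cargen}. The only points that require care are the bookkeeping showing that the negligible perturbation $e_\eps$ and the $O(\eps^{10})$ radius of the ball do not spoil the lower bound $|\overline{Y}(f_\eps)|\gtrsim1$ — this is exactly where the hypotheses $|\overline{Y}(h_\eps)(p_0)|\ge\eps$ and $\eps^2|\partial^2 h_\eps|\le1$ enter — and verifying that one lands on the explicit constant $4(\eps\lambda)^{-1}$ rather than a generic $C\lambda^{-1}$, which is why the harmless factor $\eps^{-1}\ge2$ is carried along.
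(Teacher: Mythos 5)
Your argument is correct, and it is the standard one: the paper does not prove Lemma \ref{CarODE} in the text but defers to \cite[Appendix A]{AlIoKl}, where the same elementary mechanism is used --- conjugate by $e^{-\lambda f_\eps}$, observe that \eqref{ODEneeds} together with the negligible size of $e_\eps$ on the tiny ball forces $h_\eps+e_\eps\in[\eps/2,2\eps]$ and $|\overline{Y}(f_\eps)|\geq 1/4$ with a fixed sign, and then beat the bounded zeroth-order contribution (your $\tfrac12\,\mathrm{div}\,\overline{Y}$ from integration by parts) by taking $\lambda$ large. Your bookkeeping giving $\|v\|_{L^2}\leq 8\lambda^{-1}\|g\|_{L^2}\leq 4(\eps\lambda)^{-1}\|g\|_{L^2}$ for $\eps\leq 1/2$ is exactly what is needed, so there is nothing to add.
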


This inequality was proved in \cite[Appendix A]{AlIoKl}. See also \cite[Chapter 28]{Ho} for much more general Carleman inequalities under suitable pseudo-convexity conditions.

To prove Lemma \ref{extendedCarl2} we set
\begin{equation}\label{mw1}
h_\eps=y-y(p_0)+\eps\quad\text{and }\quad e_\eps=\eps^{12}N^{p_0},
\end{equation}
where $N^{p_0}(x)=|\Phi_1^{-1}(x)- \Phi_1^{-1}(p_0)|^2$ is the square of the standard euclidean norm.

It is clear that $e_\eps$ is a negligible perturbation, in the
sense of \eqref{smallweight}, for $\eps$ sufficiently small. Also,
it is clear that $h_\eps$ verifies the condition \eqref{ODEneeds},
for $\eps$ sufficiently small, see Lemma \ref{overY}.

We show now that there is $\eps_1=\eps_1(\delta)$ sufficiently small such that the family of weights $\{h_\eps\}_{\eps\in(0,\eps_1)}$ is $\T$-conditional pseudo-convex, in the sense of Definition \ref{psconvex}. Condition \eqref{po5} is clearly satisfied, since $\T(y)=0$. Condition \ref{po3.2} is also satisfied for $\eps$ sufficiently small since $\D^\al y(p_0)\D_\al y(p_0)\geq \widetilde{C}^{-1}$, see \eqref{concly}. To prove \eqref{po3} for some vector $X$ we apply Lemma \ref{Tpseudo} if $|X^\al\T_\al|+|X^\al\D_\al y|\leq c_2|X|$; if $|X^\al\T_\al|+|X^\al\D_\al y|\geq c_2|X|$ then the second term in the right-hand side of \eqref{po3} dominates the other terms, provided that $\eps_1$ is sufficiently small.

It follows from the Carleman estimates in Lemmas \ref{Cargen} and \ref{CarODE} that there is $\eps=\eps(\delta,\overline{A})>0$ and a constant $C=C(\delta,\overline{A})\geq 1$ such that
\begin{equation}\label{Bar1}
\begin{split}
&\lambda \|e^{-\lambda f_\eps}\phi\|_{L^2}+\|e^{-\lambda f_\eps}|\partial\phi|\,\|_{L^2}\leq C\lambda^{-1/2}\|e^{-\lambda f_\eps}\,\square_{\g}\phi\|_{L^2}+C\|e^{-\lambda f_\eps}\T(\phi)\|_{L^2};\\
&\lambda^{1/2}\|e^{-\lambda f_\eps}\phi\|_{L^2}\leq C\lambda^{-1/2}\|e^{-\lambda f_\eps}\overline{Y}(\phi)\|_{L^2},
\end{split}
\end{equation}
for any $\phi\in C^\infty_0(B_{\eps^{10}}(p_0))$ and any $\lambda\geq C$, where $f_\ep=\ln (h_\eps+e_\eps)$. Let $\eta:\mathbb{R}\to[0,1]$ denote a smooth function supported in $[1/2,\infty)$ and equal to $1$ in $[3/4,\infty)$. For $i=1,\ldots,I$, $j=1,\ldots J$ we define,
\begin{equation}\label{pr2}
\begin{split}
&G^{\eps}_i=G_i\cdot \big(1-\eta(N^{x_0}/\eps^{20})\big)=G_i\cdot \widetilde{\eta}_{\eps}\\
&H^{\eps}_j=H_j\cdot \big(1-\eta(N^{x_0}/\eps^{20})\big)=H_j\cdot \widetilde{\eta}_{\eps}.
\end{split}
\end{equation}
Clearly, $G^{\eps}_i,H^{\eps}_j\in C^\infty _0(B_{\eps^{10}}(p_0))$. We would like to apply the inequalities in \eqref{Bar1} to the functions $G^{\eps}_i,H^{\eps}_j$, and then let $\lambda \to\infty$.

Using the definition \eqref{pr2}, we have
\begin{equation*}
\begin{split}
&\square_\g G^{\eps}_i=\widetilde{\eta}_{\eps}\square_\g G_i+2\D_\al G_i\D^\al \widetilde{\eta}_{\eps}+G_i\square_\g\widetilde{\eta}_{\eps};\\
&\T(G_i^\eps)=\widetilde{\eta}_\eps\T(G_i)+\T(\widetilde{\eta}_\eps)G_i;\\
&\overline{Y}(H^{\eps}_j)=\widetilde{\eta}_{\eps}\cdot\overline{Y}(H_j)+H_j\cdot\overline{Y}(\widetilde{\eta}_{\eps}).
\end{split}
\end{equation*}
Using the Carleman inequalities \eqref{Bar1}, for any $i=1,\ldots,I$, $j=1,\ldots,J$ we have
\begin{equation}\label{va10}
\begin{split}
&\lambda\cdot \|e^{-\lambda f_{\eps}}\cdot\widetilde{\eta}_{\eps}G_i\|_{L^2}+\|e^{-\lambda f_{\eps}}\cdot \widetilde{\eta}_{\eps}|\partial G_i|\, \|_{L^2}\\
&\leq C\lambda ^{-1/2}\cdot \|e^{-\lambda f_{\eps}}\cdot \widetilde{\eta}_{\eps}\square_\g G_i\|_{L^2}+C\|e^{-\lambda f_{\eps}}\cdot \widetilde{\eta}_{\eps}\T(G_i)\|_{L^2}\\
&+C'\Big[\|e^{-\lambda f_{\eps}}\cdot \D_\al G_i\D^\al \widetilde{\eta}_{\eps} \|_{L^2}+\|e^{-\lambda f_{\eps}}\cdot G_i( |\square_\g\widetilde{\eta}_{\eps}|+|\partial\widetilde{\eta}_{\eps}| )\|_{L^2}\Big]
\end{split}
\end{equation}
and
\begin{equation}\label{va10.1}
\lambda^{1/2}\|e^{-\lambda f_{\eps}}\cdot \widetilde{\eta}_{\eps}H_j\|_{L^2}\leq C\lambda^{-1/2}\|e^{-\lambda f_{\eps}}\cdot \widetilde{\eta}_{\eps}\overline{Y}(H_j)\|_{L^2}+C'\lambda^{-1/2}\|e^{-\lambda f_{\eps}}\cdot H_j |\partial\widetilde{\eta}_{\eps}|\|_{L^2},
\end{equation}
for any $\lambda\geq C$ and some constant $C'=C'(\overline{A},C)$. Using the main identities \eqref{ext68}, in $B_{\eps^{10}}(p_0)$ we estimate pointwise
\begin{equation}\label{va11}
\begin{split}
&|\square_\g G_i|\leq M\sum_{l=1}^I\big(|\partial G_l|+|G_l|\big)+M\sum_{m=1}^J|H_m|,\\
&|\T(G_i)|\leq M\sum_{l=1}^I|G_l|+M\sum_{m=1}^J|H_m|\\
&|\overline{Y}(H_j)|\leq M\sum_{l=1}^I\big(|\partial G_l|+|G_l|\big)+M\sum_{m=1}^J|H_m|,
\end{split}
\end{equation}
for some large constant $M$. We add inequalities \eqref{va10} and \eqref{va10.1} over $i,j$. The key observation is that, in view of \eqref{va11}, the main terms in the right-hand sides of \eqref{va10} and \eqref{va10.1} can be absorbed into the left-hand sides for $\lambda$ sufficiently large. Thus, for any $\lambda$ sufficiently large,
\begin{equation*}
\begin{split}
&\lambda\sum_{i=1}^I\|e^{-\lambda f_{\eps}}\widetilde{\eta}_{\eps}G_i\|_{L^2}+\sum_{i=1}^I\|e^{-\lambda f_{\eps}}\widetilde{\eta}_{\eps}|\partial G_i|\|_{L^2}+\lambda^{1/2}\sum_{j=1}^J\|e^{-\lambda f_{\eps}}\widetilde{\eta}_{\eps}H_j\|_{L^2}\\
&\leq C''\sum_{j=1}^J\|e^{-\lambda f_{\eps}}H_j|\partial \widetilde{\eta}_{\eps}|\|_{L^2}+C''\sum_{i=1}^I\big[\|e^{-\lambda f_{\eps}}\D_\al G_i\D^\al \widetilde{\eta}_{\eps} \|_{L^2}+\|e^{-\lambda f_{\eps}}G_i( |\square_\g\widetilde{\eta}_{\eps}|+|\partial\widetilde{\eta}_{\eps}| )\|_{L^2}\big].
\end{split}
\end{equation*}
We obseve that the functions $\square_\g\widetilde{\eta}_{\eps}$ and $\partial\widetilde{\eta}_{\eps}$ are supported in the set $\{x\in B_{\eps^{10}}(p_0):N^{p_0}\geq \eps^{20}/2\}$ and $\widetilde{\eta}_{\eps}=1$ in $B_{\eps^{100}}(p_0)$. By assumption, the functions $G_i,|\partial G_i|, H_j$ are supported in $\{x\in B_\delta(p_0):y(x)\geq y(p_0)\}$. In addition,
\begin{equation*}
\inf_{B_{\eps^{100}}(p_0)}\,e^{-\lambda f_{\eps}}\geq e^{\lambda/C'''}\sup_{\{x\in B_{\eps^{10}}(p_0):N^{p_0}\geq \eps^{20}/2\text{ and }y(x)\geq y(p_0)\}}\,e^{-\lambda f_{\eps}},
\end{equation*}
which follows easily from the definition \eqref{mw1} We let now $\lambda\to\infty$, as in \cite[Section 8]{IoKl}, to conclude that $\mathbf{1}_{B_{\eps^{100}}}\, G_i=0$ and $\mathbf{1}_{B_{\eps^{100}}}\, H_j=0$. The lemma follows.

\section{Construction of the rotational Killing vector-field $\Z$}\label{rotkilling}

In this section we extend the rotational Killing vector-field $\Z$ constructed in a small neighborhood of $S_0$, see Proposition \ref{summary}, to the entire exterior region $\E$. In $\E\cup \O_{\overline{c}}$ we define
\begin{equation*}
\Z=\T+\lambda_0\K,
\end{equation*}
where $\lambda_0$ is as in Proposition \ref{summary}. Clearly
$\lambda_0\neq 0$, in view of the assumption {\bf{GR}} that $\T$
does not vanish in $\E$, and $\Z$ does not vanish identically in
$\E$, since, by assumption {\bf{SBS}}, $\T$ does not vanish
identically on $S_0$. It follows from \eqref{ext2} that
\begin{equation}\label{rot1}
\mathcal{L}_\Z\g=0,\quad [\T,\Z]=[\K,\Z]=0,\quad \Z^\mu\sigma_\mu=0\quad\text{ in }\E\cup\O_{\overline{c}}.
\end{equation}
As in the proof of \eqref{ext51}, it follows that
\begin{equation}\label{rot2}
[\Z,Y]=0\quad\text{ in }\E\cup\O_{\overline{c}}.
\end{equation}

In view of Proposition \ref{summary}, there is
$t_0>0$\footnote{Using the assumption that the orbits of $\T$ in
$\E$ are complete and intersect $\Sigma^0$, see assumption
{\bf{GR}}, it is easy to see that any smooth vector-field $V$ in
$\E\cup\O_{\overline{c}}$ which commutes with $\T$ and is tangent
to $\HH^{\pm}\cap\O_{\overline{c}}$ has complete orbits in $\E$.}
such that $\Psi_{t_0,\Z}=\mathrm{Id}$ in $\O'$. Clearly
\begin{equation}\label{rot5}
\Psi_{s,\T}(p)=\Psi_{s,\T}\Psi_{t_0,\Z}(p)=
\Psi_{t_0,\Z}\Psi_{s,\T}(p)\quad\text{  for any }\quad
p\in\O_{\overline{c}}\cap\E\quad\text{ and  }\quad s\in\mathbb{R},
\end{equation}
using the commutation relation $[\T,\Z]=0$. It follows that
$\Psi_{t_0,\Z}(p)=p$ for any  $p\in \E_{y_0}$, recall definition
\eqref{rot9}. To prove this identity for any point $p\in\E$,
assume that
\begin{equation*}
\Psi_{t_0,\Z}(p)=p\quad\text{ for any }p\in\E_{R_0},
\end{equation*}
for some $R_0\geq y_0$. As before, it follows that
$\Psi_{t_0,\Z}(p)=p$ for any $p\in\E_{R_0}$. Using $[Y,\Z]=0$ and
an identity similar to \eqref{rot5}, it follows that
\begin{equation*}
\Psi_{t_0,\Z}(p)=p\text{ for any }p\in\E_{R_0+\delta'},
\end{equation*}
for some $\delta'=\delta'(\overline{A})>0$. To summarize, we
proved:

\begin{corollary}
There is a nontrivial smooth vector-field $\Z$ in $\E\cup\O_{\overline{c}}$, tangent to $\HH^+\cap\O_{\overline{c}}$ and $\HH^-\cap\O_{\overline{c}}$,  and a real number $t_0>0$ such that
\begin{equation*}
\Psi_{t_0,\Z}=\mathrm{Id},\quad\mathcal{L}_\Z\g=0,\quad [\T,\Z]=0,\quad \Z^\mu\sigma_\mu=0\quad\text{ in }\E.
\end{equation*}
\end{corollary}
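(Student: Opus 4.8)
The plan is to take $\Z=\T+\lambda_0\K$ on $\E\cup\O_{\overline{c}}$, where $\K$ is the globally extended Hawking vector-field produced by Theorem~\ref{mainsecondstep} and $\lambda_0$ is the constant from Proposition~\ref{summary}. The algebraic identities in the conclusion are then immediate from linearity: $\mathcal L_\Z\g=\mathcal L_\T\g+\lambda_0\mathcal L_\K\g=0$, $[\T,\Z]=\lambda_0[\T,\K]=0$, and $\Z^\mu\sigma_\mu=\T^\mu\sigma_\mu+\lambda_0\K^\mu\sigma_\mu=\T(\sigma)=0$, using \eqref{ext2} and $\T(\sigma)=0$. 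For nontriviality and tangency I would restrict to $(\HH^+\cup\HH^-)\cap\O_{\overline{c}}$, where $\K=\underline{u}L-u\Lb$: on $\HH^+$ (where $u=0$) $\K=\underline{u}L$ is tangent to $\HH^+$, on $\HH^-$ (where $\underline{u}=0$) $\K=-u\Lb$ is tangent to $\HH^-$, and on $S_0$ (where $u=\underline{u}=0$) $\K$ vanishes; since $\T$ is tangent to $\HH^\pm$ and does not vanish identically on $S_0$ by {\bf SBS}, $\Z$ is tangent to $\HH^\pm\cap\O_{\overline{c}}$ and $\Z|_{S_0}=\T|_{S_0}\not\equiv 0$. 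Also $\lambda_0\neq0$, since otherwise $\Z=\T$ would have periodic orbits in $\O'$, contradicting {\bf GR}.

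Next I would record two preliminary facts. First, from $\Z^\mu\sigma_\mu=0$ and $\D_\mu(y+iz)=(1-\sigma)^{-2}\sigma_\mu$ one gets $\Z(y)=\Re[(1-\sigma)^{-2}\Z^\mu\sigma_\mu]=0$, and then, exactly as in the derivation of \eqref{ext51} (using that $\D\Z$ is antisymmetric since $\Z$ is Killing and that $\D\D y$ is symmetric), $[\Z,Y]=0$ on $\E\cup\O_{\overline{c}}$, where $Y=\D^\alpha y\,\pr_\alpha$. Second, Proposition~\ref{summary} provides $t_0>0$ with $\Psi_{t_0,\Z}=\mathrm{Id}$ on $\O'$, and a vector-field commuting with $\T$ and tangent to $\HH^\pm\cap\O_{\overline{c}}$ has complete orbits in $\E$ (using that the $\T$-orbits in $\E$ are complete and meet $\Sigma^0$, by {\bf GR}), so the flow $\Psi_{t,\Z}$ is globally defined on $\E$.

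The heart of the argument is a bootstrap over the regions $\E_R=\{p\in\E:y(p)<R\}=\cup_{t\in\RRR}\Psi_{t,\T}(\mathcal{U}_R)$, $R\geq y_0$, which exhaust $\E$. For the base case, since $\Psi_{t_0,\Z}=\mathrm{Id}$ near $\mathcal{U}_{y_0}$ in $\E$ and $[\T,\Z]=0$ gives $\Psi_{t_0,\Z}\Psi_{s,\T}(q)=\Psi_{s,\T}\Psi_{t_0,\Z}(q)=\Psi_{s,\T}(q)$, we get $\Psi_{t_0,\Z}=\mathrm{Id}$ on all of $\E_{y_0}$. For the inductive step, assume $\Psi_{t_0,\Z}=\mathrm{Id}$ on $\E_{R_0}$, $R_0\ge y_0$. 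Since $Y(y)=\D_\alpha y\D^\alpha y\geq\widetilde{C}^{-1}>0$ near $\delta_{\Sigma_1}(\mathcal{U}_{R_0})$ by \eqref{concly}, the field $Y$ is transverse to the level sets of $y$, so every point $p$ just beyond $\{y=R_0\}$ can be written $p=\Psi_{\tau,Y}(q)$ with $q\in\E_{R_0}$; using $[Y,\Z]=0$ we get $\Psi_{t_0,\Z}(p)=\Psi_{\tau,Y}\Psi_{t_0,\Z}(q)=\Psi_{\tau,Y}(q)=p$. Flowing this neighborhood by $\T$ (again using $[\T,\Z]=0$) and invoking the controlled-increase inclusion \eqref{ext7} together with the compactness of $\delta_{\Sigma_1}(\mathcal{U}_{R_0})$, one extends the identity to $\E_{R_0+\delta'}$ for some $\delta'=\delta'(\overline{A})>0$. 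Finally, the supremum $R^\ast$ of all $R$ for which the identity holds on $\E_R$ must be $\infty$: if $R^\ast<\infty$ then the identity holds on $\E_{R^\ast}=\cup_{R<R^\ast}\E_R$, hence on $\E_{R^\ast+\delta'}$, a contradiction. Thus $\Psi_{t_0,\Z}=\mathrm{Id}$ on $\E$.

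Modulo Theorem~\ref{mainsecondstep}, which carries the genuine analytic content, I expect the only real obstacle to be the bookkeeping in the inductive step — verifying that the combined $Y$- and $\T$-flows cover a full open set $\E_{R_0+\delta'}$ with $\delta'$ depending only on $\overline{A}$ — which is precisely where the regularity properties of $y$ away from $S_0$ established in Section~\ref{propertiesy} (transversality of $Y$, the controlled increase of the regions $\mathcal{U}_R$, and compactness of their boundaries) enter.
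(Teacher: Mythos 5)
Your proposal follows essentially the same route as the paper: defining $\Z=\T+\lambda_0\K$, reading off the Killing, commutation, and $\Z^\mu\sigma_\mu=0$ identities from \eqref{ext2}, establishing $[\Z,Y]=0$ as in \eqref{ext51}, and then propagating the periodicity $\Psi_{t_0,\Z}=\mathrm{Id}$ from $\O'$ to $\E_{y_0}$ via the $\T$-flow and inductively over the regions $\E_R$ via the $Y$-flow, exactly as in the paper's continuation argument. Your write-up just spells out the inductive step (transversality of $Y$, the inclusion \eqref{ext7}, compactness of $\delta_{\Sigma_1}(\mathcal{U}_{R_0})$) a bit more explicitly than the paper does, so it is correct and not a different proof.
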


\subsection{The time-like span of the two Killing fields.} We define the area function
\begin{equation*}
W=-\g(\T,\T)\g(\Z,\Z)+\g(\T,\Z)^2.
\end{equation*}
In this subsection we show that $W\geq 0$ in $\E$. More precisely, we prove the following
slightly stronger proposition:

\begin{proposition}\label{lintime}
The vector-field $\K$ constructed in Theorem \ref{mainsecondstep} does not vanish
at any point in $\E$. In addition, at any point $p\in\E$ there is a timelike linear combination of the vector-fields $\T$ and $\K$.
\end{proposition}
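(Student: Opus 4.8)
The plan is to exploit the fact that all three vector‑fields $\T$, $\K$ and $\Z=\T+\lambda_0\K$ annihilate the Ernst potential, and hence the functions $y,z$: since $\T(\si)=0$ and $\K^\mu\si_\mu=0$ (see \eqref{ext2}) and $y+iz=(1-\si)^{-1}$, we have $\T(y)=\K(y)=\Z(y)=\T(z)=\K(z)=\Z(z)=0$ in $\E$, so that $\T,\K,\Z$ are everywhere $\g$‑orthogonal to both $\D y$ and $\D z$. By \eqref{concly} and \eqref{se10} of Proposition \ref{clo110},
\begin{equation*}
\D_\mu y\D^\mu y=\frac{y^2-y+B}{4M^2(y^2+z^2)}+O(\overline{\varep}),\qquad \D_\mu z\D^\mu z=\frac{B-z^2}{4M^2(y^2+z^2)}+O(\overline{\varep}),\qquad \D_\mu y\D^\mu z=O(\overline{\varep}).
\end{equation*}
On the open set $\E^\sharp\subset\E$ where $y$ is bounded away from $(1+\sqrt{1-4B})/2$, $z^2$ is bounded away from $B$, and $r$ is bounded (all margins depending only on $\overline A$, so that the $O(\overline{\varep})$ errors are negligible), the Gram matrix of $(\D y,\D z)$ is positive definite; hence $\mathrm{span}(\D y,\D z)$ is a spacelike $2$‑plane and its $\g$‑orthogonal complement $\Pi_p=\mathrm{span}(\D y,\D z)^\perp$ is a \emph{timelike} $2$‑plane containing $\T(p),\K(p),\Z(p)$. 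Thus, at every $p\in\E^\sharp$ at which $\T(p)$ and $\K(p)$ are linearly independent, $\mathrm{span}(\T,\K)(p)=\Pi_p$ is timelike and we are done.

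It then remains to treat $\E\setminus\E^\sharp$ and the set where $\T\parallel\K$. For $\E\setminus\E^\sharp$ I would use three direct estimates. First, on the neighbourhood $\E_{y_0}=\cup_t\Psi_{t,\T}(\mathcal U_{y_0})$ of $S_0$ (enlarging $\widetilde C_2$ so that $\mathcal U_{y_0}\subseteq\O'$), \eqref{ext1.1} gives $\g(\K,\K)\le-\overline c(r-1)^2<0$ on $\Sigma_1\cap\O'$, and this is propagated by $\Psi_{t,\T}$ since $[\T,\K]=0$; hence $\K$ is timelike on $\E_{y_0}$, so $\K\neq0$ and $\mathrm{span}(\T,\K)$ is timelike there. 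Second, for $r$ large $\g(\T,\T)=-1+2M/r+O(r^{-2})<0$ by \eqref{As-Flat}, so $\T$ is timelike. Third, near the axis $\{\Z=0\}$ (where $z^2$ is close to $B$) one uses the exact identity $\g(\T,\T)=-\Re\si=-(y^2-y+z^2)/(y^2+z^2)$, which is negative because $y^2-y+z^2\ge y^2-y+B-O(\overline{\varep})>0$ once $y\ge y_0$ — which away from $S_0$ holds by \eqref{ext5.5}, while the part of the axis near $S_0$ lies in $\E_{y_0}$ — so $\T$ is timelike there as well. One checks $\E^\sharp$ together with these three neighbourhoods covers $\E$.

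Finally, for the first assertion I would note that $\K$ never vanishes in $\E$: in Theorem \ref{mainsecondstep} the field $\K$ is produced from its (timelike, hence nonzero) values on $\E_{y_0}$ by solving over larger and larger regions the linear first‑order equations $[\K,\overline Y]=0$ and $[\T,\K]=0$, and along an integral curve of $\overline Y$ (resp.\ of $\T$) the pullback $\Psi^\ast_s\K$ is constant, so $\K$ stays nonzero throughout $\E$. Granting $\K\neq0$, the only case not covered above is a point $p\in\E^\sharp$ with $\K(p)=\mu\T(p)$, $\mu\neq0$; then $\Z(p)=(1+\lambda_0\mu)\T(p)$, and since $\Z\neq0$ on $\E^\sharp$ (it is bounded away from the axis) we get $1+\lambda_0\mu\neq0$, so $\T(p)\parallel\Z(p)$. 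Using $[\T,\Z]=[\K,\Z]=0$ this relation propagates along the closed $\Z$‑orbit $\gamma$ through $p$ (closed because $\Psi_{t_0,\Z}=\mathrm{Id}$), forcing $\T,\Z$ to be parallel everywhere on $\gamma$, hence $\gamma$ to be a \emph{closed} orbit of $\T$. I expect ruling this last possibility out to be the main obstacle; I would do it by re‑running the computations of Proposition \ref{clo110} in the null frame $l,\ul$ — using \eqref{conclt}, \eqref{concly}, \eqref{se10} together with the frame components of $\FF$ — to show that the area function
\begin{equation*}
W=\g(\T,\Z)^2-\g(\T,\T)\g(\Z,\Z)=\lambda_0^2\big(\g(\T,\K)^2-\g(\T,\T)\g(\K,\K)\big)
\end{equation*}
coincides, up to an $O(\overline{\varep})$ error, with its strictly positive Kerr value on every compact subset of $\E$ away from the horizon and the axis, which excludes $\T\parallel\Z$ there. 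Combining this with the previous two paragraphs shows that $\K$ does not vanish and that $\mathrm{span}(\T,\K)$ contains a timelike vector at every point of $\E$.
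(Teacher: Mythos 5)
Most of your argument is sound and runs parallel to the paper's: the non-vanishing of $\K$ by Lie transport along $\overline{Y}$ and $\T$ (the paper does this with $[\K,Y]=0$ and a sup-over-$R$ argument), and the covering of $\E$ by regions where either $\K$ is timelike (on $\E_{y_0}$, from \eqref{ext1.1}), $\T$ is timelike ($r$ large, or $z^2$ near $B$ with $y\geq y_0$, via the exact identity $\g(\T,\T)=y/(y^2+z^2)-1$), or $\D y,\D z$ span a spacelike $2$-plane whose orthogonal complement is a timelike $2$-plane containing $\T$ and $\K$, so that the span of $\T,\K$ is timelike whenever they are independent. The problem is the one case you defer to the end: a point $p$ of your set $\E^\sharp$ with $\K(p)=\mu\T(p)$ and $\T(p)$ not timelike (note that $\g(\T,\T)\geq 0$ forces $z^2\leq y-y^2$, hence $B-z^2\geq y^2-y+B\geq \widetilde{C}^{-1}$, so such points do lie in $\E^\sharp$ and cannot be absorbed into your ``near the axis'' region). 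Your proposed resolution --- that the area function $W=\g(\T,\Z)^2-\g(\T,\T)\g(\Z,\Z)$ is $O(\overline{\varep})$-close to its strictly positive Kerr value, by ``re-running the computations of Proposition \ref{clo110}'' --- is a genuine gap, not a routine verification. The smallness assumption {\bf{PK}} controls $\sigma$, $\FF$, $y$, $z$ and the principal null frame, but it gives no pointwise control whatsoever of the extended vector-field $\K$: $\K$ is produced by an abstract extension/unique-continuation argument, there is no comparison map to Kerr at this stage (the identification with Kerr only comes afterwards via Carter--Robinson), and none of the estimates of section \ref{almostKerr} involve $\g(\T,\K)$ or $\g(\K,\K)$. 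The auxiliary claim that $\Z\neq 0$ on $\E^\sharp$ is likewise unproved (and is essentially equivalent to the positivity of $W$ you are trying to establish), though it can be bypassed since $\K\parallel\T$ already reduces the span to the line through $\T$.

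For comparison, the paper closes exactly this case without any quantitative information on $\K$: it defines $N$ as the set where no timelike combination exists, shows (as you do) that on $N$ the vectors $Y=\D^\al y\,\D_\al$ and $Z=\D^\al z\,\D_\al$ are spacelike and nearly orthogonal so that $\K\parallel\T$ on $N$, and then runs a minimum argument: at a point $p_0\in N\cap\Sigma_1$ minimizing $y$ on $N$, the relation $\K=a_0\T$ is transported by $[Y,\K-a_0\T]=0$ along the $Y$-curve into $\{y<y(p_0)\}\setminus N$, where consequently $\g(\T,\T)<0$, i.e. $y-y^2-z^2<0$, while $y-y^2-z^2=0$ at $p_0$; hence $Y(y-y^2-z^2)\geq 0$ at $p_0$, contradicting the explicit computation $Y(y-y^2-z^2)=(1-2y)\,\D^\al y\D_\al y-2z\,\D^\al y\D_\al z<0$ from \eqref{concly}, \eqref{se10} and $y>1/2+\widetilde{C}^{-1}$. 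Some argument of this kind (using only $y$, $z$ and the commutation $[\K,Y]=0$) is what your proposal is missing; the closeness-of-$W$-to-Kerr route would require estimates on $\K$ that are not available.
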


\begin{proof}[Proof of Proposition \ref{lintime}] In view of \eqref{ext1.1}, $\K$ does not vanish at any point in $\mathcal{U}_{y_0}$. It follows that $\K$ does not vanish at any point in $\E_{y_0}$, since $\K$ is constructed as the solution of $[\T,\K]=0$ in $\E_{y_0}$.

To prove that $\K$ does not vanish at any point $p\in\E$ we use
the identity $[\K,Y]=0$ in $\E$, see \eqref{ext51}. Let
\begin{equation*}
R_1=\sup\{R\in[y_0,\infty):\K\text{ does not vanish at any point in }\E_R\}.
\end{equation*}
If $R_1<\infty$ then $\K$ has to vanish at some point
$p_0\in\delta_{\Sigma_1}(\mathcal{U}_{R_1})$ (using the assumption
that any orbit of $\T$ in $\E$ intersects $\Sigma_1$, and the
observation that the set of points in $\E$ where $\K$ vanishes can
only be a union of orbits of $\T$). Since $[\K,Y]=0$ in $\E$, $\K$
vanishes on the integral curve $\gamma_{p_0}(t)$, $|t|\ll 1$, of
the vector-field $Y$ starting at the point $p_0$. However, this
integral curve intersects the set $\E_{R'}$ for some $R'<R_1$, in
contradiction with the definition of $R_1$. Thus $\K$ does not
vanish at any point in $\E$.

We prove now the second part of the proposition. Let
\begin{equation*}
N=\{p\in\E:\text{ there is no timelike linear combination of }\T\text{ and }\K\text{ at }p\}.
\end{equation*}
Clearly, the set $N$ is closed in $\E$ and consists of orbits of
the vector-field $\T$. In addition,
$N\subseteq\E\setminus\E_{y_0}$, since $\K$ itself is timelike in
$\E_{y_0}$ (see \eqref{ext1.1}).  In view of \eqref{mainneed} and
\eqref{ext5.5} we have $y\geq
(1+\sqrt{1-4B})/2+\widetilde{C}_2^{-2}$ in $\E\setminus\E_{y_0}$.
On the other hand $\g(\T,\T)=y/(y^2+z^2)-1$, hence $z^2\le
y-y^2=-(y^2-y+B)+B$ in $N$. Consequently, for some constant
$\widetilde{C}=\widetilde{C}(\overline{A})\gg 1$,
\begin{equation}\label{ext75}
y\geq (1+\sqrt{1-4B})/2+\widetilde{C}^{-1}\quad\text{ and }\quad B-z^2\geq \widetilde{C}^{-1}\quad\text{ in }N,
\end{equation}
Consider now  the set of  vector-fields $\T, \K$ as well as the gradient vector-fields
$ Y=\D^\al y\D_\al $,  $ Z=\D^\al z\D_\al$
at some point $p\in N$. Since $\T(\sigma)=\K(\sigma)=0$ we have
\begin{equation*}
\g(\T,Y)=\g(\T,Z)=\g(\K,Y)=\g(\K,Z)=0.
\end{equation*}
In addition, using \eqref{concly}, \eqref{se10}, and \eqref{ext75}
\begin{equation}\label{ext74}
\g(Y,Y)\geq \widetilde{C}^{-1},\qquad\g(Z,Z)\geq \widetilde{C}^{-1},\qquad|\g(Y,Z)|\leq\widetilde{C}\overline{\varep}^{1/5}\qquad\text{ in }N,
\end{equation}
for some constant $\widetilde{C}=\widetilde{C}(\overline{A})$. Since the metric $\g$ is Lorentzian, it follows that the vectors $\T,\K,Y,Z$ cannot be linearly independent at any point $p\in N$ (if they were linearly independent then the determinant of the matrix formed by the coefficients $\g(\T,\T)$, $\g(\T,\K)$, $\g(\K,\K)$ would have to be negative, in contradiction with $p\in N$).  Since  the triplets  $\T, Y, Z$ and $\K, Y, Z$ are linearly independent it must follows that $\K, \T$ are linearly dependent at points of $N$.  Thus
\begin{equation}\label{lo2}
\text{ for any }p\in N\text{ there is }a\in\mathbb{R}\text{ such that }\K_p=a\T_p\text{ and }\g(\T,\T)|_p\geq 0.
\end{equation}

We prove now that $N=\emptyset$. Assume that $N\neq\emptyset$ and let $p_0$ denote a point in $N$ such that $y(p_0)=\inf_{p\in N}y(p)$. Such a point exists since $N\cap \Sigma_1\subseteq\Sigma_{1+(4\widetilde{C}_1\widetilde{C}_2)^{-1/2}}$ (see \eqref{ext1.1}) is compact (observe that $\T$ is timelike in $\Si_R$ for large $R$). We may assume that $p_0\in N\cap \Sigma_1$. In view of \eqref{lo2}, there is $a_0\in\mathbb{R}$ such that $\K_{p_0}-a_0\T_{p_0}=0$. We look at the integral curve $\{\gamma_{p_0}(t):|t|\ll 1\}$ of the vector field $Y$ passing through $p_0$. Since $[Y,\K-a_0\T]=0$ and $Y_{p_0}\neq 0$, it follows that $\K=a_0\T$ in the set $\gamma_{p_0}(t)$, $|t|\ll 1$. Since $Y(y)=\g(Y,Y)$ is strictly positive at $p_0$ (see \eqref{ext74}), it follows that $y(\gamma_{p_0}(t))<y(p_0)$ if $t\in(-\widetilde{C}^{-1},0)$. Since $y(p_0)=\inf_{p\in N}y(p)$ (the definition of $p_0$), it follows that
\begin{equation*}
N\cap \{\gamma_{p_0}(t):t\in(-\widetilde{C}^{-1},0)\}=\emptyset.
\end{equation*}
Since $\K=a_0\T$ in $\{\gamma_{p_0}(t):t\in(-\widetilde{C}^{-1},0)\}$ it follows that $\g(\T,\T)<0$ in $\{\gamma_{p_0}(t):t\in(-\widetilde{C}^{-1},0)\}$. Using the formula $\g(\T,\T)=y/(y^2+z^2)-1$, it follows that the function $y-y^2-z^2$ vanishes at $p_0$ and is strictly negative on $\{\gamma_{p_0}(t):t\in(-\widetilde{C}^{-1},0)\}$. Thus
\begin{equation*}
Y(y-y^2-z^2)\geq 0\quad\text{ at }p_0.
\end{equation*}
On the other hand, using \eqref{ext74} and \eqref{ext75},
\begin{equation*}
\D^\al y\D_\al(y-y^2-z^2)=(1-2y)\D^\al y\D_\al y-2z\D^\al y\D_\al z<0\quad\text{ at }p_0,
\end{equation*}
provided that $\overline{\varep}$ is sufficiently small. This provides a contradiction.
\end{proof}

\appendix

\section{Asymptotic identities}\label{asymptot}

Recall, see assumption {\bf{GR}}, that we assumed the existence of an open subset $\M^{(end)}$ of $\M$ which is diffeomorphic to $\mathbb{R}\times(\{x\in\mathbb{R}^3:|x|>R\})$ for some $R$ sufficiently large. In local coordinates $\{t,x^i\}$ defined by this diffeomorphism, we assume that $\T=\partial_t$ and, with $r=\sqrt{(x^1)^2+(x^2)^2+(x^3)^2}$,
\begin{equation}\label{As-Flat5}
\g_{00}=-1+\frac{2M}{r}+O(r^{-2}),\quad \g_{ij}=\delta_{ij}+O(r^{-1}),\quad\g_{0i}=-\ep_{ijk}\frac{2S^jx^k}{r^3}+O(r^{-3}),
\end{equation}
for some $M>0$, $S^1,S^2,S^3\in\mathbb{R}$. Clearly,
\begin{equation}\label{e1}
\g^{00}=-1+O(r^{-1}),\quad \g^{ij}=\delta_{ij}+O(r^{-1}),\quad \g^{0i}=O(r^{-2}).
\end{equation}
We compute
\begin{equation}\label{e2}
F_{\al\be}=\D_\al\T_\be=\partial_\al(\g_{0\be})-\g(\partial_0,\D_{\partial_\al}\partial_\be)=\frac{1}{2}(\partial_\al\g_{0\be}-\partial_\be\g_{0\al}).
\end{equation}
Thus, using \eqref{As-Flat5}, for $j=1,2,3$,
\begin{equation}\label{e4}
F_{0j}=-(1/2)\partial_j\g_{00}=Mx^jr^{-3}+O(r^{-3}).
\end{equation}
We have
\begin{equation}\label{e5}
\begin{split}
&\g_{01}=2r^{-3}(S^3x^2-S^2x^3)+O(r^{-3})\\
&\g_{02}=2r^{-3}(S^1x^3-S^3x^1)+O(r^{-3})\\
&\g_{03}=2r^{-3}(S^2x^1-S^1x^2)+O(r^{-3}).
\end{split}
\end{equation}
Thus
\begin{equation}\label{e6}
\begin{split}
&F_{12}=(1/2)(\partial_1\g_{02}-\partial_2\g_{01})=S^3r^{-3}-3r^{-5}x^3(S^1x^1+S^2x^2+S^3x^3)+O(r^{-4}),\\
&F_{23}=(1/2)(\partial_2\g_{03}-\partial_3\g_{02})=S^1r^{-3}-3r^{-5}x^1(S^1x^1+S^2x^2+S^3x^3)+O(r^{-4}),\\
&F_{31}=(1/2)(\partial_3\g_{01}-\partial_1\g_{03})=S^2r^{-3}-3r^{-5}x^2(S^1x^1+S^2x^2+S^3x^3)+O(r^{-4}).
\end{split}
\end{equation}

We have
\begin{equation*}
{\dual F}_{\al\be}=(1/2)\in_{\al\be\mu\nu}F_{\rho\si}\g^{\mu\rho}\g^{\nu\si}.
\end{equation*}
Thus, using \eqref{e1}, \eqref{e4}, \eqref{e6},
\begin{equation}\label{e10}
\begin{split}
&{\dual F}_{01}=F_{23}+O(r^{-4})=S^1r^{-3}-3r^{-5}x^1(S^1x^1+S^2x^2+S^3x^3)+O(r^{-4}),\\
&{\dual F}_{02}=F_{31}+O(r^{-4})=S^2r^{-3}-3r^{-5}x^2(S^1x^1+S^2x^2+S^3x^3)+O(r^{-4}),\\
&{\dual F}_{03}=F_{12}+O(r^{-4})=S^3r^{-3}-3r^{-5}x^3(S^1x^1+S^2x^2+S^3x^3)+O(r^{-4}),\\
&{\dual F}_{12}=-F_{03}+O(r^{-3})=-Mx^3r^{-3}+O(r^{-3}),\\
&{\dual F}_{23}=-F_{01}+O(r^{-3})=-Mx^1r^{-3}+O(r^{-3}),\\
&{\dual F}_{31}=-F_{02}+O(r^{-3})=-Mx^2r^{-3}+O(r^{-3}).
\end{split}
\end{equation}
As a consequence,
\begin{equation}\label{e7}
\FF^2=(F_{\al\be}+i{\dual F}_{\al\be})(F^{\al\be}+i{\dual F}^{\al\be})=-4M^2r^{-4}+O(r^{-5}).
\end{equation}

By definition,
\begin{equation*}
\si_\mu=2\T^\al(F_{\al\mu}+i{\dual F}_{\al\mu}).
\end{equation*}
Thus
\begin{equation}\label{e15}
\begin{split}
&\si_0=0;\\
&\si_1=2Mx^1r^{-3}+O(r^{-3})+2i[S^1r^{-3}-3r^{-5}x^1(S^1x^1+S^2x^2+S^3x^3)+O(r^{-4})];\\
&\si_2=2Mx^2r^{-3}+O(r^{-3})+2i[S^2r^{-3}-3r^{-5}x^2(S^1x^1+S^2x^2+S^3x^3)+O(r^{-4})];\\
&\si_3=2Mx^3r^{-3}+O(r^{-3})+2i[S^3r^{-3}-3r^{-5}x^3(S^1x^1+S^2x^2+S^3x^3)+O(r^{-4})].
\end{split}
\end{equation}
Thus
\begin{equation}\label{e16}
\sigma=1-2Mr^{-1}+O(r^{-2})+i[2r^{-3}(S^1x^1+S^2x^2+S^3x^3)+O(r^{-3})].
\end{equation}
Thus
\begin{equation*}
y+iz=(1-\sigma)^{-1}=\frac{r}{2M}+O(1)+i\Big[\frac{S^1x^1+S^2x^2+S^3x^3}{2M^2r}+O(r^{-1})\Big],
\end{equation*}
which gives
\begin{equation}\label{e17}
y=\frac{r}{2M}+O(1),\qquad z=\frac{S^1x^1+S^2x^2+S^3x^3}{2M^2r}+O(r^{-1}).
\end{equation}
Thus, with $J=[(S^1)^2+(S^2)^2+(S^3)^2]^{1/2}$,
\begin{equation*}
\begin{split}
\D_\mu z\D^\mu z&=\sum_{j=1}^3(\partial_j z)^2+O(r^{-3})\\
&=\frac{1}{4M^4}\sum_{j=1}^3[S^jr^{-1}-x^jr^{-3}(S^1x^1+S^2x^2+S^3x^3)]^2+O(r^{-3})\\
&=\frac{1}{4M^4}[J^2r^{-2}-r^{-4}(S^1x^1+S^2x^2+S^3x^3)^2]+O(r^{-3}).
\end{split}
\end{equation*}
It follows that
\begin{equation}\label{e20}
\begin{split}
z^2+4M^2(y^2+z^2)\D_\mu z\D^\mu z&=\frac{1}{4M^4}(S^1x^1+S^2x^2+S^3x^3)^2r^{-2}+r^2\D_\mu z\D^\mu z+O(r^{-1})\\
&=\frac{J^2}{4M^4}+O(r^{-1}).
\end{split}
\end{equation}

\end{document}